\documentclass[a4paper,11pt]{article}

\usepackage{amssymb,amscd,amsfonts,amsmath,graphicx,amsthm}
\usepackage{latexsym}
\usepackage{pgf,tikz}
\usepackage[ansinew]{inputenc}%para usar acentos comuns
\usepackage{mathrsfs}
\usepackage{graphicx}
\usepackage{colortbl}
\usepackage{rotating}
\usepackage[usenames,dvipsnames]{pstricks}
\usepackage{calrsfs}
\usepackage{pdfpages}
\usepackage[all]{xy}
\usepackage{epsfig}
\usepackage{pst-grad} % For gradients
\usepackage{pst-plot} % For axes
\usepackage{fancyhdr}
\usepackage[top=3cm,bottom=2cm,left=3cm,right=2cm]{geometry} %%margens
\usepackage{imakeidx}
\usepackage{enumerate}
\makeindex %%%%%indice remisivo%%%%%%
\usepackage{pstricks,pst-node,pst-plot,pst-tree}
\usepackage{pstricks-add}
\usepackage{pgf,tikz}
\usetikzlibrary{arrows}
   %%%%%%%%%espaÃƒÂ§amento entre linhas
\setlength{\parskip}{0.2\baselineskip}

 %final da demonstração

\newcommand{\Q}{{\mathbb{Q}}}

\newcommand{\C}{{\mathbb{C}}}

%%%%%%%% AMBIENTES %%%%%%%%%%
%\newtheorem{theorem}{Teorema}[chapter]
%\newtheorem{lemma}[theorem]{Lema}
%\newtheorem{proposition}[theorem]{Proposi\c{c}\~{a}o}
%\newtheorem{definition}[theorem]{Defini\c{c}\~{a}o}
%\newtheorem{example}[theorem]{Exemplo}
%\newtheorem{corollary}[theorem]{Corol\'{a}rio}
%\newtheorem{remark}[theorem]{Observa\c{c}\~{a}o}

\newtheorem{Theo}{Theorem}%[section]

\newtheorem{cor}[Theo]{Corollary}
\newtheorem{lema}[Theo]{Lemma}
\newtheorem{defi}[Theo]{Definition}

\newtheorem{Exam}[Theo]{Example}

\newtheorem{Rema}[Theo]{Remark}

\newcommand{\G}{\mathcal{G}}
\newcommand{\F}{\mathcal{F}}
\newcommand{\Msup}{\mathbb{M}}
\newcommand{\Tri}{\mathbb{T}}
\newcommand{\Pol}{\mathbb{P}}

\newcommand{\LL}{\mathcal{L}}
\newcommand{\A}{\mathbb{A}}
\newcommand{\B}{\mathbb{B}}

\begin{document}
	\title{\bf New topological subsystem codes from semi-regular tessellations}
	\author{Eduardo Brandani da Silva$^1$ and Evandro Mazetto Brizola$^2$ \\
	{\it Maringa State University - Department of Mathematics} \\
{\it $^1$ebsilva@uem.br} \\
{\it $^2$evandro.brizola@hotmail.com}}

	\date{}
	
	\maketitle
	
	\noindent
	\renewcommand{\thefootnote}{\fnsymbol{footnote}}
	%\footnotetext{2010 \emph{Mathematics Subject Classification}: {47B10, 47B02, 47B37.}}
	\footnotetext{\emph{Key words and phrases}: Quantum error-correcting codes, subsystem codes, topological subsystem codes, semi-regular tessellations}

	\begin{abstract}
In this work, we present new constructions for topological subsystem codes using semi-regular Euclidean and hyperbolic tessellations. They give us new families of codes, and we also provide a new family of codes obtained through an already existing construction, due to Sarvepalli and Brown. We also prove new results that allow us to obtain the parameters of these new codes.
	\end{abstract}
	
	%\pagebreak
	
	\medskip
	
	%\tableofcontents
	
\section{Introduction}

The first quantum computing model described in 1982 by Benioff \cite{benioff}, although based on quantum kinematics and dynamics, was, in the computational sense established in \cite{deutsch}, effectively classical. Also in 1982, Feynman \cite{feynman} proposed the first computer model based on the principles of quantum mechanics. This model was the closest thing to a universal quantum simulator. As early as 1985, a major breakthrough occurred in this area when Deutsch \cite{deutsch} described an entirely quantum model of a quantum version of the Turing machine.

One of the biggest difficulties in performing quantum computing is quantum decoherence, which is the issue of quantum decay, which was warned by Unruh \cite{unruh}. One way to get around this effect is to use quantum error-correcting codes, which are mathematical tools used to encode the physical states of a quantum system.

The first quantum error-correcting codes were inspired by the classical error-correcting codes introduced in 1948 by Shannon \cite{shannon}. In 1995, Shor \cite{shor} introduced the first quantum error-correcting code. This code was a combination of two other three-qubit codes, where one of them protects the quantum state against bit flip errors and the other protects the quantum state against phase-flip errors, where only the first of them has a classical analog.

One of the biggest advances in quantum coding occurred in 1996 with the construction of the CSS quantum codes by Calderbank and Shor \cite{calderbank} and Steane \cite{steane}. From these codes, one of the most important classes of quantum codes known to date was generated, the stabilizer quantum codes \cite{codigoestabilizador}.

The following year, Kitaev \cite{kitaev} developed a new code called Kitaev's toric code, which is part of a class of codes known as topological codes. A great advantage and novelty of this code was the fact that its stabilizer generators acted on a small number of qubits in its neighborhood. Codes with this property are said to be geometrically local. An advantage of stabilizing generators acting on a small number of qubits in their neighborhood is that this makes quantum words resistant to local noise.

Another important group of quantum codes that is also part of the stabilizer codes are the subsystem codes \cite{salah}, \cite{baconsub}, \cite{poulin}. These codes are the result of applying the stabilizer formalism to quantum error correction of the operator \cite{9poulin}, \cite{8poulin}, which generalizes standard quantum error correction theory and provides a unified framework for active correction of errors and passive error prevention techniques, making subsystem codes one of the most versatile class of error-correcting quantum codes. 

Over the years, numerous works have been carried out in this area. A great step was given by the construction of topological subsystem codes done by Bombín \cite{bombinsub}, which made use of trivalent and $3$-colorable tessellations. Although in \cite{baconsub} tessellations have already been used to construct subsystem codes, due to the stabilizer generators of these codes being non-local, they cannot be considered topological.

Suchara, Bravyi, and Terhal \cite{terhal} established some conditions for determining topological subsystem codes and proved that topological subsystem codes can be seen as a kind of generalization of Kitaev's honeycomb model, for trivalent hypergraphs. Also in \cite{terhal}, a necessary and sufficient condition was presented about measuring the error syndrome of a subsystem code, which is a very important result and is used to guarantee that a subsystem code is topological.

Sarvepalli and Brown \cite{sarvepalli} presented a new construction of topological subsystem codes inspired by \cite{terhal}, where they presented some new code families. This is one of the few works where code parameters are provided, due to the difficulty of determining certain standards within this class of codes.

The work is divided into the following sequence: In Section \ref{cap5}, we revise the subsystem codes and present their main elements, such as the groups necessary for their construction and definition. In Section \ref{codigosubsistemtopolo}, we will address topological subsystem codes, where we will discuss some important advantages of these codes in relation to topological quantum codes. We will also deal with the main aspects of the construction of such codes done in \cite{bombinsub} and their generalization done in \cite{terhal} and \cite{sarvepalli}. Finally, we will look at two families of topological subsystem codes presented in \cite{sarvepalli}. In Section \ref{construcaomeussubcode}, we construct a new hypergraph $\Gamma_h$ and prove that this hypergraph allows the construction of subsystem codes. We also present, in general, some stabilizers and other relevant data about this construction. In Section \ref{familiassubsistem}, we present four families of topological subsystem codes, where two of these families come from particular cases of the hypergraph $\Gamma_h$ constructed in Section \ref{construcaomeussubcode}. The third family comes from the construction given in \cite{sarvepalli}, but from a case that was not addressed by the authors, and the fourth family comes from a hypergraph built from tessellation $\{p,3,4,3\}$ in a similar way to that used for previous families. Also in this section, we prove that these code families are topological subsystem codes and provide some tables with parameters.

\section{Subsystem codes}\label{cap5}

In this section, we will cover subsystem codes, which are also stabilizer codes. According to Bombín \cite{bombinsub}, these codes are the result of applying the stabilizer formalism to the quantum error correction of the operator \cite{9poulin}, \cite{8poulin}. As can be seen in \cite{poulin}, Poulin claims that operator quantum error correction generalizes the standard quantum error correction theory and provides a unified framework for active error correction and passive error prevention techniques, which, according to Aly and Klappenecker \cite{salah}, made subsystem codes the most versatile class of error-correcting quantum codes known up to that time.

In subsystem codes, not all logical qubits that form the $\textbf{C}$ code space are used as logical qubits, that is, they are used to encode information. A portion of these qubits are considered gauge qubits, which do not encode any information. In addition to other consequences, these qubits absorb the effects of errors that occur in them. 

The definition of subsystem code is similar to stabilizer codes, as we also have a stabilizer group $\textbf{S}$, and the code space $\textbf{C}$ is also given in a similar way.

We have a code space $\textbf{C}$ where information is encoded. From this code space, which is a subspace of a Hilbert space $H$, we can decompose $H$ into the direct sum $H=\textbf{C}\oplus \textbf{C}^\perp$.

In subsystem codes, the code space $\textbf{C}$ is no longer fully used. It admits a subsystem structure, so we can write $\textbf{C}$ as $\textbf{C}=\A\otimes \B$, where $\A$ is the subsystem of $\textbf{C }$ that is used to encode the information. The subsystem $\B$ does not encode information, and errors that occur in it are ignored. Therefore, we say that the subsystem $\B$ contributes with gauge degrees of freedom. Because we use subsystems, the code is called subsystem code. To obtain the subsystems $\A$ and $\B$, we need two groups, which are fundamental for the construction of the subsystem codes.

The first of these is the gauge group, denoted by $\G$, which is a subgroup of the Pauli group $P_n$. More precisely, this group will be a normal subgroup of $N(\textbf{S})$ and will be generated by the stabilizer group $\textbf{S}$, by $\langle i \rangle$, and by an arbitrary subset of $X_j'$ and $Z_j'$ with $j>s$, where $s$ is the number of independent stabilizer generators.

The second fundamental group for the construction of subsystem codes is the logical group $\LL=N(\textbf{S})/\G$, which is a group because $\G$ is a normal subgroup of $N(\textbf{S})$.

The elements of $\G$ are required to commute with the elements of $\LL$. According to Poulin \cite{poulin}, a consequence of their commuting is that the generators $X_j'$ and $Z_j'$ of $\G$ must always appear in pairs. Therefore, we can assume, without loss of generality, that
\begin{equation}
\G=\langle i,S_1,\ldots,S_s,X_{s+1}',\ldots,X_{s+r}',Z_{s+1}',\ldots,Z_{s+r}'\rangle,
\end{equation}
with $s+r\leq n$. We can ignored the phase factor, then the gauge group to be considered is only
\begin{equation}
\G=\langle \textbf{S},X_{s+1}',\ldots,X_{s+r}',Z_{s+1}',\ldots,Z_{s+r}'\rangle.
\end{equation} 
Therefore, it follows that
\begin{equation}
\LL \simeq \langle X_{s+r+1}', Z_{s+r+1}',\ldots,X_{n}',Z_{n}'\rangle.
\end{equation}

Since the elements of $\G$ commute with the elements of $\LL$ and $\G\times\LL\simeq N(\textbf{S})$, it can be seen from \cite{22poulin} that it is possible to induce a subsystem structure in code space $\textbf{C}$, so that we have $\textbf{C}=\A\otimes \B$, where the group $\G$ acts trivially on $\A$ and $\LL$ acts trivially on $\B$. It also follows that $\A\simeq (\C^2)^{\otimes k}$ and $\B\simeq (\C^2)^{\otimes r}$. See that if we use $\B$ as a one-dimensional space, then we will have exactly the usual quantum error-correcting codes. Note that $\LL$ acts as the logical operators of the stabilizer codes. Thus, the biggest novelty so far in the subsystem codes is the presence of the gauge group $\G$.

The action of the gauge operators in the code space $\textbf{C}$ induces the following equivalence relation in $\textbf{C}$. Let $|\psi\rangle , |\psi\rangle'\in \textbf{C}$ be, we have
\[
|\psi\rangle \sim |\psi\rangle' \Leftrightarrow\exists g\in \G\ \ \textrm{tal que}\ \ |\psi\rangle' = g|\psi\rangle \,.
\]

By definition, equivalent states $|\psi_{\A}\rangle \otimes |\psi_{\B}\rangle$ and $|\psi_{\A}\rangle\otimes |\psi'_{\B}\ rangle$ carry the same information, even if $|\psi_{\B}\rangle$ and $|\psi'_{\B}\rangle$ are different.

Another important fact about the gauge group is that the stabilizer group $\textbf{S}$ of the subsystem code can be described using $\G$ due to the identity $\textbf{S}=\G\cap C(\G) $, where $C(\G)$ is the centralizer of $\G$ in the Pauli group. Thus, to characterize the subsystem codes, we only need to have their gauge group $\G$.

It is worth noting that in the stabilizer codes we had $n=k+s$. In the subsystem codes, we divide the $n$ virtual qubits into three sets: $s$ stabilizer qubits, $r$ gauge qubits, and $k$ logical qubits, where the physical qubits are now given by $n=k+r+s $. The operators related to the first set are given by the independent stabilizer generators $S_1,\ldots, S_s$ that fix the $2^{n-s = r+k}$-dimensional code space $\textbf{C}$. The second is generated by $X_{s+j}'$ and $Z_{s+j}'$ with $j=1,\ldots,r$, which act on the $r$ virtual qubits of subsystem $\B $, which do not encode useful information and their only purpose is to absorb gauge transformations. Therefore, the gauge group $\G$ has dimension $2r+s$. Finally, the third set is given by the remaining operators, which are logical operators as in stabilizer codes. Thus, we denote these operators by $\overline{X}_j$ and $\overline{Z}_j$ with $j=1,\ldots,k$, which generate the set $\LL$ and act only on $k$ logical qubits of subsystem $\A$.

Since the elements of $\LL$ are the logical operators of the subsystem code, they are elements of the normalizer of $\textbf{S}$, that is, they are elements of
\begin{equation}
N(\textbf{S}) = \langle\G, \overline{X}_1,\overline{Z}_1,\ldots,\overline{X}_k,\overline{Z}_k\rangle.
\end{equation}

Two other operators that appear frequently in the literature on subsystem codes are bare logical operators and dressed logical operators. Bare logical operators refer to the elements of $C(\G)\setminus\G$. These elements preserve the code space $\textbf{C}$ and act trivially on the gauge qubits. The dressed logical operators refer to the elements of $N(\textbf{S})\setminus\G$. Since we have the identity $N(\textbf{S})=C(\G)\cdot\G$ hold, where $``\cdot "$ is the product of the elements of $C(\G)$ by the elements of $ \G$, we have that any dressed logical operator can be written as a product of a bare logical operator and a gauge operator.

As can be seen in \cite{bravyi} the distance of subsystem codes is defined as follows.
\begin{defi}
The distance $d$ of a subsystem code is the minimum weight of a Pauli operator that commutes with all stabilizers and acts non-trivially on the logical subsystem $\A$, that is, it will be the minimum weight of a dressed logical non-trivial operator, that is,
\begin{equation}
d = \min_{P\in N(\textbf{S})\setminus \G} |P| = \min_{\substack{P\in C(\G)\setminus\G \\ G\in\G}} |PG|,
\end{equation}
where $|P|$ and $|PG|$ are the weights of the Pauli operators $P$ and $PG$, respectively.
\end{defi}

Thus, we have four parameters for the subsystem codes, which are denoted by $\left[\left[n,k,r,d\right]\right]$, where $n$ is the number of physical qubits, $k$ is the number of encoded qubits, $r$ is the number of gauge qubits, and $d$ is the distance from the code.

For more information about the subsystem codes, their construction, the sets presented here, and to see about error identification and correction, we recommend \cite{salah}, \cite{bravyi}, \cite{breuck}, and \cite{ poulin}.

%Embora esse código não seja chamado na literatura de código de subsistema topológico e, sim, somente de código de subsistema, pois seus geradores, como veremos, são não-locais, resolvemos colocá-lo nessa seção pelo mesmo fazer uso de tesselação na sua construção.

\section{Topological subsystem codes}\label{codigosubsistemtopolo}

In this section, we present the topological subsystem codes. These codes were first developed by Bombín \cite{bombinsub} in 2010, who made use of trivalent and $3$-colorable tessellations. In the same year, they were generalized to trivalent hypergraphs satisfying certain conditions by Suchara, Bravyi, and Terhal \cite{terhal}. Subsequently, several works on these codes emerged, also working in the hyperbolic environment. In 2012, Sarvepalli and Brown \cite{sarvepalli} added one more condition about hypergraphs that would be necessary in \cite{terhal}, but had not been considered. They also presented two families of topological subsystem codes and provided the parameters of these codes, which, according to them, is not a trivial task for subsystem codes. These code families built by Sarvepalli and Brown follow the construction model for obtaining trivalent hypergraphs presented in \cite{terhal}. These three works served as inspiration and provided the necessary tools for building our families of topological subsystem codes that we will present in the next section.

Topological subsystem codes have all the characteristic properties of topological quantum codes, such as that we can visualize the code operators, which according to Breuckmann \cite{breuck} makes the work more easy and intuitive, and also the property that the operators correspond to the interaction between neighboring qubits, which means we have a more physical interpretation of the codes. These codes also have some advantages that come from subsystem codes, such as the fact that the codes are local. As we will see, one of the main advantages compared to topological quantum codes is that these codes require only two neighboring qubits at a time to perform the syndrome measurement for error correction. To see other advantages of topological subsystem codes, we recommend \cite{bombinsub}, \cite{breuck}, and \cite{terhal}. The construction of topological subsystem codes combines the following properties:

\begin{itemize}
\item[$(C_1)$] The stabilizer group $\textbf{S}=\G\cap C(\G)$ has spatially local generators, that is, the number of elements different from the identity (their weight) is within a disk of radius $O(1)$. The elements of $\textbf{S}$ are identified with the homologically trivial loops of the tessellation;
\item[$(C_2)$] Syndrome extraction can be done by measuring the eigenvalues of the link operators, which are $2$-local (2-qubits);
\item[$(C_3)$] The code encodes one or more logical qubits. The bare logical operators (in $C(\G)\setminus \G$) can be identified with the homologously non-trivial loops.
\end{itemize}

\begin{defi}\label{defisubtopolo}
A subsystem code that satisfies the properties $(C_1)$, $(C_2)$, and $(C_3)$ is called a topological subsystem code.
\end{defi}

Property $(C_1)$ requires that the stabilizer operators act on a small number of qubits. Property $(C_3)$ tells us that non-trivial undetectable errors increase with the size of the tessellation and are non-local, which according to Breuckmann \cite{breuck} is an advantage, since the chance of occurring non-local errors is much smaller than local errors. As can be seen in \cite{26terhal}, these two properties are often used to describe some properties of stabilizer codes. Finally, property $(C_2)$ is the main advantage of these codes compared to topological quantum codes.

If only the property $(C_3)$ is not satisfied, then we obtain a subsystem code that does not encode logical qubits ($k=0$), that is, it does not encode information. An example that satisfies $(C_1)$ and $(C_2)$ but does not satisfy $(C_3)$ is Kitaev's Honeycomb model, which can be seen at \cite{27terhal}. In \cite{terhal}, it is proved that the subsystem code built into this tessellation does not encode any information. It is also proved that any tessellations or trivalent graph do not encode information, seen as subsystem code.

Now we will address the main aspects of the construction of topological subsystem codes carried out by Bombín in \cite{bombinsub}, by Suchara, Bravyi, and Terhal in \cite{terhal}, and by Sarvepalli and Brown in \cite{sarvepalli}. But before that, we will define some basic concepts about graph theory. To see more about graph theory, we recommend \cite{bollobas}, \cite{diestel}, and \cite{gross}.

\begin{defi} 
A graph is a pair $\Gamma = (V,E)$, where $E$ is a subset of the set $V\times V$ of unordered pairs of $V$, that is, the elements of $V$ are the vertices of the graph $\Gamma$ and the elements of $E$ are the edges of the graph $\Gamma$.
\end{defi}

\begin{defi} 
Let $\Gamma=(V,E)$ and $\gamma = (V',E')$ be two graphs. If $V'\subseteq V$ and $E'\subseteq E$, then $\gamma$ is said to be a subgraph of $\Gamma$.
\end{defi}

\begin{defi} 
A hypergraph is a pair $\Gamma_h = (V,E)$, where the elements of $E$ are subsets of the elements of $V$. If the subsets are all of size two, then $\Gamma_h$ is a standard graph. And, a trivalent hypergraph is a hypergraph $\Gamma_h$ such that every vertex belongs to exactly three distinct edges. It is also said that such a hypergraph has a degree of $3$.
\end{defi}

\begin{defi} 
Let $\Gamma_h = (V,E)$ be a hypergraph. Any element of $E$ whose size is greater than $2$ is called a hyperedge, and its rank is its size. The rank of a hypergraph is the maximum rank of its edges.
\end{defi}

The construction made by Bombín starts from a trivalent and $3$-colorable tessellation $\Gamma$ and takes the dual tessellation $\Gamma^*$, which is a tessellation formed by triangles. Each edge of the dual tessellation is transformed into a four-sided face, and each vertex is transformed into a face with as many sides as the valence of the vertex, which is already done automatically when generating the four-sided faces, as can be seen in Figure \ref{bombin2}-($a$). This new tessellation will be denoted by $\overline{\Gamma}$.

In \cite{bombinsub}, edges are identified as being of three types: solid, dashed, and dotted edges. However, to avoid confusion and to standardize with our construction and the construction and language carried out in \cite{sarvepalli}, we will color the edges of this new tessellation as follows: We color the triangles blue, and the edges of the faces obtained for each vertex alternate between green and red colors, as can be seen in Figure \ref{bombin2}-($b$).

\begin{figure}[!h]
	\centering
	\includegraphics[scale=0.9]{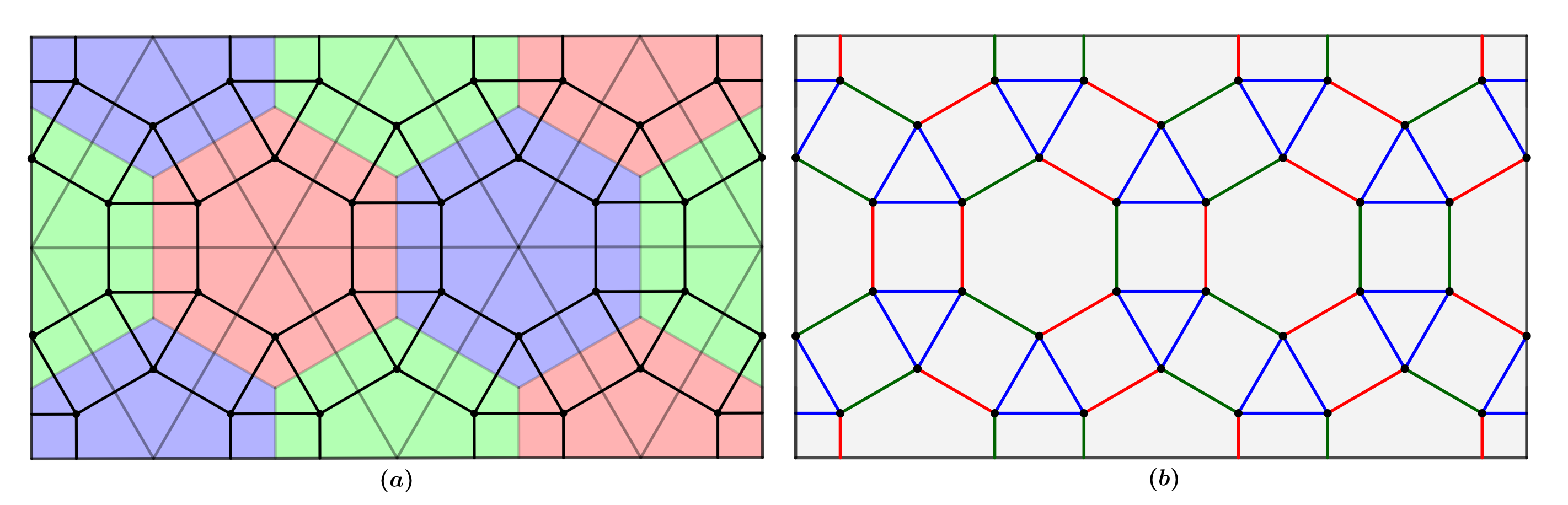} 
	\caption{($a$) Tessellation $\{6,6,6\}$, its dual $\Gamma^*$ and the new tessellation $\overline{\Gamma}$ given by black edges. ($b$) Colored edges of tessellation $\overline{\Gamma}$.} \label{bombin2}
\end{figure}

The operators acting on each edge are called link operators (or edge operators) and are given as follows: Each edge $e'=(u,v)$ is associated with an edge operator $\overline{K}_{e'}\in\{X_uX_v, Y_uY_v, Z_uZ_v\}$, where $\overline{K}_{e'} = Z_uZ_v$ if $e'$ is an edge of the triangle (blue edge), $\overline{K}_{e'} = X_uX_v$ if $e'$ is a red edge, and $\overline{ K}_{e'}=Y_uY_v$ if $e'$ is a green edge. Note that the edge operators are all Hermitian.

Using the edge operators, the gauge group $\G$ is defined as
\begin{equation}\label{grupogauge}
\G:=\langle\overline{K}_{e'};\ e'\in \overline{E}\rangle,
\end{equation}
where $\overline{K}_{e'}$ are the edge operators and $\overline{E}$ is the set of edges of the tessellation $\overline{\Gamma}$. Note that the gauge group generators $\G$ are $2$-local.

As we saw in Section \ref{codigosubsistem}, subsystem codes can be characterized through the gauge group $\G$ due to the identity $\textbf{S} = \G\cap C(\G)$.

Using string operators, in \cite{bombinsub} is determined $C(\G)$. The nature of strings differs from the case of topological quantum codes. 

Given any subgraph $\gamma\in\overline{\Gamma}$, which contains the three edges of a triangle or none of them, this graph does not have ``end points", as can be seen in Figure \ref{bombin3}, that is, it is a closed string. This subgraph defines a Pauli operator
\begin{equation}\label{stringsubsistem}
O_\gamma = \bigotimes_v P_v,
\end{equation}
with $P_v=I,X,Y,Z$, according to the color of the edge of the subgraph to which the vertex $v$ belongs.

\begin{figure}[!h]
	\centering
	\includegraphics[scale=0.9]{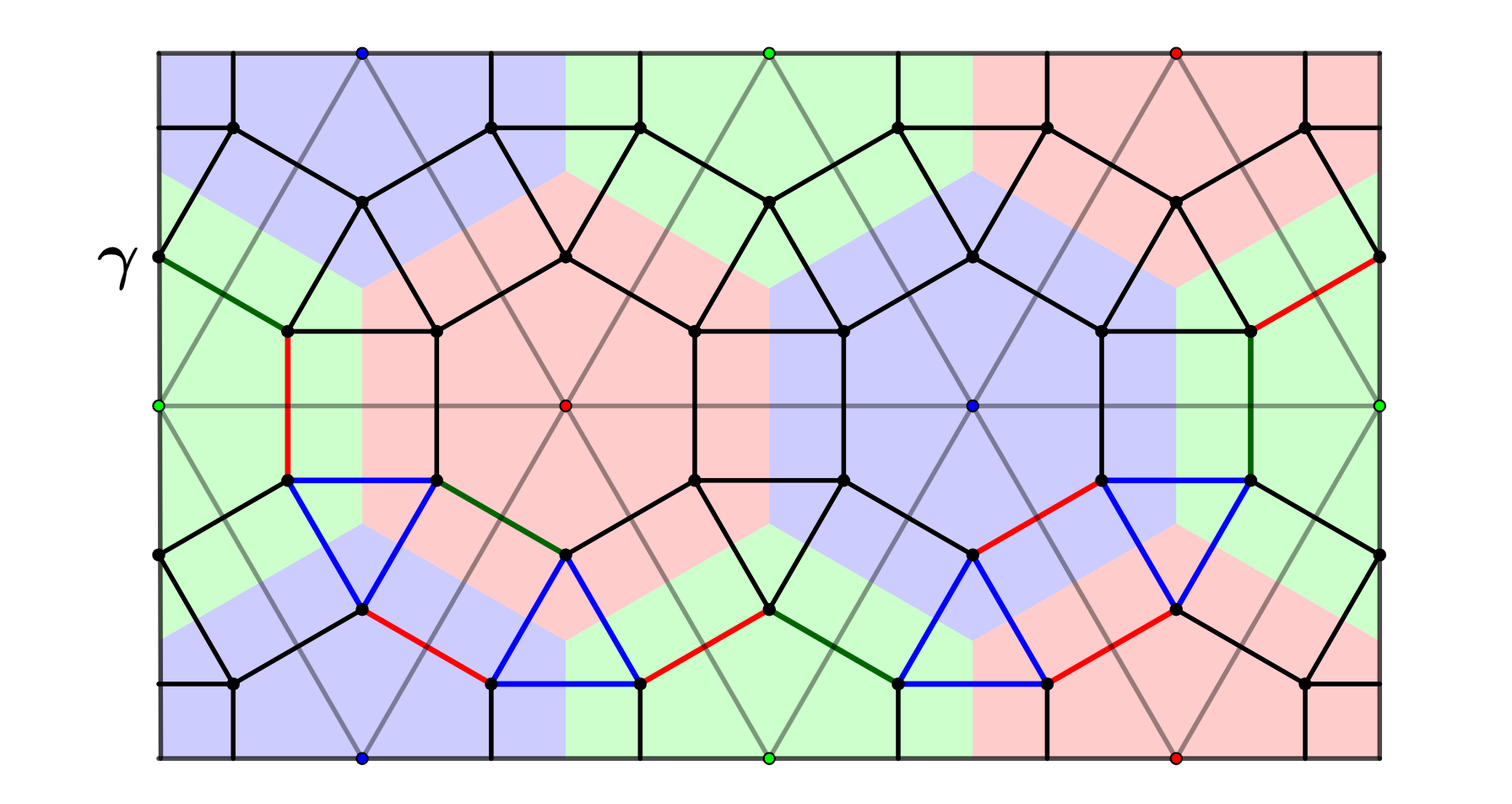} 
	\caption{Colored edges representing the $\gamma\in\overline{\Gamma}$ subgraph.} \label{bombin3}
\end{figure}

It follows that the operators $O_\gamma$ given in (\ref{stringsubsistem}) belong to $C(\G)$, and there is a bijection between these subgraphs and $C(\G)$.

The triangles of these subgraphs are connected in pairs, and if we look at the tessellation $\Gamma^*$, these triangles always connect vertices of the same color. This allows you to classify the strings according to color. As with color codes, strings of the same color commute, while strings of different colors anti-commute.

Define $S_v^c =O_\gamma$, with $c=r,g,b$, then $\gamma$ is the smallest colored string $c$ around the vertex $v\in V^*$ of the tessellation $\ Gamma^*$ (see Figure \ref{bombin4}). These operators are the stabilizer generators of $\textbf{S}$, that is, $\textbf{S}=\langle S_v^c;\ v\in V^*\ \textrm{e} \ c=r,g ,b\rangle$. However, these generators are not all independent, as the following relationships apply:

\begin{equation}\label{bombinrelacao}
\prod_c S_v^c = I\ \ \textrm{e}\ \ \prod_v S_v^c = I,
\end{equation}
where in the first case we vary the color and the vertex is fixed, and in the second we vary the vertex and the color is fixed. Therefore, the number of independent stabilizer generators (i.e., the rank of $\textbf{S}$) will be $s=2V^*-2$, where $V^*$ is the number of vertices of the dual tessellation $\Gamma^*$. The number of physical qubits $n$ will be equal to $n=3F^*$, where $F^*$ is the number of faces of the tessellation $\Gamma^*$. The number of encoded qubits is $k=2-\chi = 2g$. As $n = k + r + s$, it follows that the number of gauge qubits will be $r = 2F^* -\chi$. Therefore, the gauge qubits depend on the surface.

\begin{figure}[!h]
	\centering
	\includegraphics[scale=0.9]{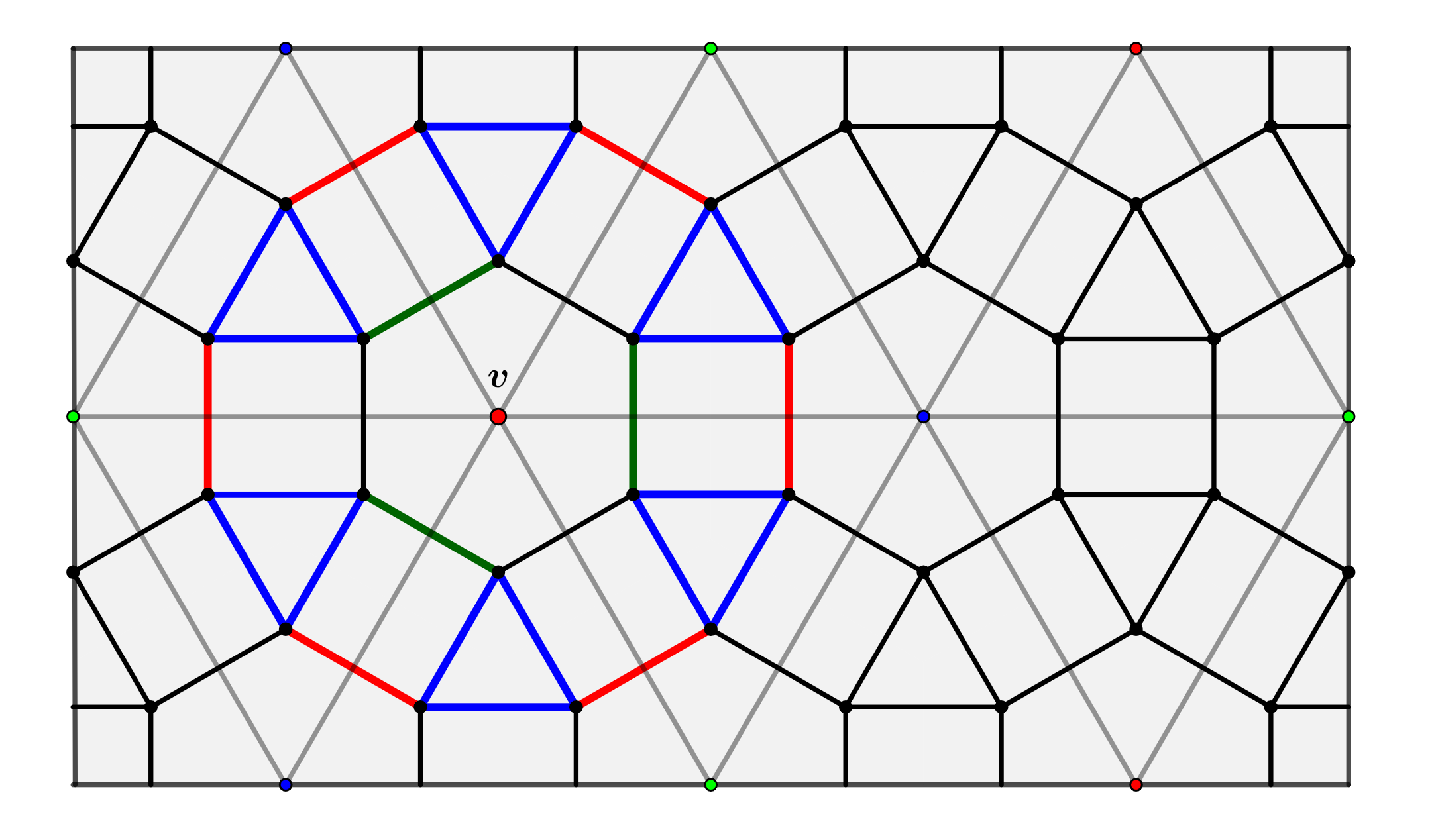} 
	\caption{Colored edges of the operator $S_v^g$.} \label{bombin4}
\end{figure}

For the distance $d$, in \cite{bombinsub} it is given that $d\leq d_T$, where $d_T$ is the minimum length, in the number of triangles, between non-trivial closed strings. This is justified, since given a string operator $O_{\gamma}\in C(\G)$, and defining the set $E'$ as being the set of red, green, and only one blue edge of each triangle of $\ gamma$, we have to
\[
G:=\prod_{e'\in E'}\overline{K}_{e'}\in \G,
\]
and the weight of the operator $O_\gamma G$ is equal to the number of triangles in $O_\gamma$. It also provides a lower bound on the distance, which is given by $d\geq d_L$, where $d_L$ is the minimum length in terms of the number of edges, between the non-trivial homologically closed loops in the tessellation $\Gamma^*$.

We also consider the topological subsystem codes built in \cite{terhal}. To build these codes, the authors used trivalent hypergraphs $\Gamma_h$ embedded in a plane or a torus. The qubits are fixed at the vertices, and the set of all vertices is denoted by $V$. Edges are of two types: edges that connect two vertices, which are called link or rank-$2$ edges, and edges that connect three vertices, which are called triangles or rank-$3$ edges. The sets of all rank-$2$ and rank-$3$ edges are denoted by $E_2$ and $E_3$, respectively, therefore, $E=E_2\cup E_3$.

\begin{defi}
An edge $e'$ is incident to a vertex $u$ if there exists $v\in V$ such that $e'=(u,v)$ or if there exist $v,w\in V$ such that $e' =(u,v,w)$.
\end{defi}

There are four restrictions imposed in \cite{terhal} for such hypergraphs and one more imposed in \cite{sarvepalli}. They are:
\begin{itemize}
\item[($H_1$)] $\Gamma_h$ has only rank-$2$ or rank-$3$ edges;
\item[($H_2$)] Each vertex has exactly three incident edges;
\item[($H_3$)] Two distinct edges intersect at at most one vertex;
\item[($H_4$)] Rank-$3$ edges are two by two disjoint;
\item[($H_5$)] The edges of the hypergraph $\Gamma_h$ must be $3$-colorable.
\end{itemize}

As in \cite{bombinsub}, in \cite{terhal} edge operators are also defined, but unlike there, there are two types here. The link operators $K_{e'}$ where $e'\in E_2$, and the triangle operators $K_{e'}$ where $e'\in E_3$. By convention, given an edge $e'=(u,v)\in E_2$, we have that $K_{e'}:=X_uX_v$ or $K_{e'}:=Y_uY_v$. If $e'=(u,v,w)\in E_3$, we have that $K_{e'}:=Z_uZ_vZ_w$. This choice does not change the code in any way, as a depolarizing channel is being assumed, i.e, errors of the types $X$, $Y$, and $Z$ occur with the same probability.

The next result is from \cite{breuck}. 

\begin{Theo}\label{teoregradecomutacao}
Let $e',e''\in E$ be any two edges of $\Gamma_h$. So, for their respective operators, $K_{e'}$ and $K_{e''}$, we have
\begin{equation}\label{regradecomutacao}
K_{e'}K_{e''} = (-1)^{\eta(e',e'')} K_{e''}K_{e'},
\end{equation}
with $\eta(e',e'') = 0$, if $e'$ and $e''$ share an even number of vertices or $e',e''\in E_3$, and otherwise $ \eta(e',e'') = 1$.
\end{Theo}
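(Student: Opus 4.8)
The plan is to reduce the claim to the elementary (anti)commutation relations among single-qubit Pauli matrices and then to run a short case analysis steered by the combinatorial hypotheses $(H_3)$, $(H_4)$, $(H_5)$. \emph{First}, I would record the standard bookkeeping fact: for two Pauli strings $P=\bigotimes_{v\in V}P_v$ and $Q=\bigotimes_{v\in V}Q_v$ one has $PQ=(-1)^{c(P,Q)}QP$, where $c(P,Q)$ counts the qubits $v$ at which $P_v$ and $Q_v$ are two \emph{distinct} non-identity Pauli matrices. This holds because operators on different tensor factors commute, a factor that is the identity or equals its partner contributes trivially, and two distinct non-identity Paulis anticommute ($XY=-YX$, $YZ=-ZY$, $ZX=-XZ$), so the global sign is the product of the local signs. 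Applying this with $P=K_{e'}$, $Q=K_{e''}$, and using that the support of an edge operator $K_{e'}$ is exactly the vertex set of $e'$, each of its vertices carrying a definite non-identity letter, one obtains that $\eta(e',e'')$ equals, modulo $2$, the number of common vertices of $e'$ and $e''$ at which $K_{e'}$ and $K_{e''}$ prescribe different Pauli letters.

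\emph{Next}, the case analysis. If $e'=e''$ the operators commute and the stated condition indeed returns $\eta=0$ (a rank-$2$ edge shares the even number $2$ of vertices with itself; a rank-$3$ edge falls under the clause $e',e''\in E_3$). Assume $e'\ne e''$. By $(H_3)$ they share no vertex or exactly one, so ``sharing an even number of vertices'' means sharing none, and then the supports are disjoint, the operators commute, and $\eta=0$, as required. There remains the case in which $e'$ and $e''$ meet at a single vertex $v$. Two distinct rank-$3$ edges cannot do this by $(H_4)$, so the clause forcing $\eta=0$ does not apply and I must show $K_{e'}$ and $K_{e''}$ anticommute, i.e. that the letters at $v$ differ. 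If one of the edges lies in $E_3$ it prescribes $Z$ at $v$, while the other, lying in $E_2$, prescribes $X$ or $Y$, so the letters differ. If both lie in $E_2$, then by $(H_5)$ the proper $3$-colouring of the edges assigns $e'$ and $e''$ different colours, since they are incident to the common vertex $v$; because the colour of a rank-$2$ edge is exactly what selects the $XX$- versus $YY$-form of its link operator, one of them is of $X$-type at $v$ and the other of $Y$-type, and again the letters differ. In every subcase the number of common vertices with differing letters equals $1$, hence $\eta(e',e'')=1$, matching the statement, and the proof closes.

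\emph{The main obstacle} is not the computation — the Pauli bookkeeping is routine — but the meeting-at-one-vertex case, which is where the hypotheses earn their keep: $(H_3)$ reduces any overlap to $0$ or $1$ vertex, so the parity is read off immediately; $(H_4)$ removes the only way two $Z$-type operators could overlap; and $(H_5)$ is precisely what forbids two rank-$2$ edges through a common vertex from carrying the same Pauli type. The one point to state carefully is the convention that the colour of a rank-$2$ edge determines whether its operator is of $XX$- or $YY$-type; under that convention the theorem holds verbatim, so I would make this identification explicit before invoking $(H_5)$. I would also note, for completeness, that for two rank-$3$ edges the $Z$-type convention alone already forces commutation (a $ZZZ$-type operator commutes with any other $Z$-string regardless of overlap), so $(H_4)$ belongs to the standing setup rather than being strictly needed in that subcase.
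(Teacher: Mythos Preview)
Your proof is correct. Note, however, that the paper does not actually supply its own proof of this theorem: it merely attributes the result to \cite{breuck} and moves on. So there is nothing in the paper to compare your argument against.

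That said, your case analysis is exactly the standard one and matches what one finds in the cited reference: reduce to single-qubit Pauli (anti)commutation, use $(H_3)$ to bound the overlap of distinct edges to at most one vertex, and then check that at a shared vertex the two operators necessarily carry distinct Pauli letters. You are right to flag that the last step---two rank-$2$ edges meeting at a vertex---only goes through under the convention that the edge colouring supplied by $(H_5)$ dictates the $XX$/$YY$ type of a link operator; the paper makes this convention explicit only later (and indeed the Remark immediately following the theorem is precisely the acknowledgement that $(H_5)$ is what makes the commutation rule work). Your closing observation that $(H_4)$ is not strictly needed for the both-in-$E_3$ subcase, since $Z$-strings commute regardless of overlap, is also correct and worth keeping.
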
 

\begin{Rema}
To obtain a subsystem code from the hypergraph $\Gamma_h$, it is essential that the commutation rule be satisfied for all edge operators of the hypergraph $\Gamma_h$. Therefore, in \cite{sarvepalli}, the authors added the restriction $(H_5)$ to the hypergraph $\Gamma_h$. Only $(H_1),\ldots,(H_4)$ does not guarantee the existence of subsystem codes. For instance, as can be seen in \cite{sarvepalli}, the Petersen graph satisfies $(H_1),\ldots,(H_4)$ but does not admit defining a subsystem code.
\end{Rema}

We will now define a closed hypercycle in a hypergraph. Roughly speaking, restricting to the cases covered in \cite{bombinsub}, these closed hypercycles are the strings seen previously.

\begin{defi}
We call a subset of edges $M\subset E$ a closed hypercycle, or just a hypercycle, if every vertex of the hypergraph has an even number of edges incident to $M$.
\end{defi}

Using these hypercycles, we define an operator that will be very important, called the loop operator.
\begin{defi}
Let $M\subset E$ be any closed hypercycle. The operator
\begin{equation}
W(M) = \prod_{e'\in M} K_{e'}
\end{equation}  
is the tensor product between the link operators and the triangle operators acting on the edges of $M$. We call $W(M)$ the loop operator.
\end{defi}

The next results tell us when an edge operator (link or triangle) commutes with a loop operator, and when two loop operators commute. Their proof can be seen at \cite{breuck} and \cite{terhal}.

\begin{Theo}
Let $e'\in E$ be any edge and $M\subset E$ be any closed hypercycle. Their corresponding operators commute if and only if $e'$ is not a triangle contained in $M$, that is, 
\[
[W(M), K_{e'}] = 0 \Leftrightarrow e'\notin M \cap E_3 \,.
\]
\end{Theo}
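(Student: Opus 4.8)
The plan is to turn the commutation question into a single parity count and then evaluate that parity from the local combinatorics of $\Gamma_h$. Since every $K_{e''}$ is a Hermitian Pauli operator, any two of them either commute or anticommute, so pushing $K_{e'}$ past $W(M)=\prod_{e''\in M}K_{e''}$ one factor at a time and applying Theorem~\ref{teoregradecomutacao} yields $K_{e'}W(M)=(-1)^{N}W(M)K_{e'}$ with $N=\sum_{e''\in M}\eta(e',e'')$. Hence $[W(M),K_{e'}]=0$ if and only if $N$ is even, and the whole statement reduces to showing that $N$ is odd exactly when $e'$ is a triangle belonging to $M$. Using $(H_3)$, two distinct edges meet in at most one vertex, so $\eta(e',e'')=1$ precisely when $e''$ and $e'$ share exactly one vertex and are not both triangles; consequently the edges $e''\in M$ that contribute to $N$ are the link edges of $M$ touching a vertex of $e'$ when $e'\in E_3$, and all edges of $M\setminus\{e'\}$ touching a vertex of $e'$ (links and triangles alike) when $e'\in E_2$.

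Next I would handle the case $e'\in E_3$, say $e'=(u,v,w)$; here $e'\notin M\cap E_3$ is equivalent to $e'\notin M$. Fix a vertex $p\in\{u,v,w\}$. By $(H_2)$ exactly three edges are incident to $p$, and by $(H_4)$ the only triangle among them is $e'$ itself, so the other two are link edges. Since $M$ is a closed hypercycle, the number of edges of $M$ incident to $p$ is even; therefore the number of link edges of $M$ incident to $p$ equals $1$ if $e'\in M$ and is $0$ or $2$ if $e'\notin M$. Summing over $p\in\{u,v,w\}$, and noting via $(H_3)$ that no link edge can be incident to two vertices of $e'$ so nothing is double-counted, we get $N=3$ (odd) when $e'\in M$ and $N$ even when $e'\notin M$, exactly as claimed.

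The remaining case is $e'\in E_2$, say $e'=(a,b)$, where $e'\notin E_3$, so we must check that $K_{e'}$ always commutes with $W(M)$, i.e. that $N$ is even. For $p\in\{a,b\}$, again $p$ lies on exactly three edges, one of which is $e'$; closedness of $M$ forces the number of edges of $M\setminus\{e'\}$ incident to $p$ to be $1$ if $e'\in M$ and $0$ or $2$ if $e'\notin M$. By $(H_3)$ no edge other than $e'$ joins $a$ to $b$, so adding the contributions at $a$ and at $b$ counts each such edge once, giving $N=2$ when $e'\in M$ and $N$ even when $e'\notin M$; in both subcases $N$ is even. Collecting all cases, $N$ is odd if and only if $e'$ is a triangle contained in $M$, which is the assertion.

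I expect the only delicate point to be the bookkeeping at the vertices of $e'$ in the last two paragraphs: one must be certain that summing the ``number of relevant $M$-edges at $p$'' over the vertices $p$ of $e'$ counts each edge exactly once (this is exactly where $(H_3)$ is needed) and that no stray triangle sits at a vertex of $e'$ to spoil the link-edge count (this is where $(H_4)$ is needed). Once the local picture at each vertex of $e'$ is pinned down, the parity of $N$ follows immediately from the even-incidence definition of a closed hypercycle, with no real computation required.
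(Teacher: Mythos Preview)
Your proof is correct. The paper itself does not supply a proof of this theorem --- it simply states that the proof ``can be seen at \cite{breuck} and \cite{terhal}'' --- so there is no in-text argument to compare against; your reduction to the parity of $N=\sum_{e''\in M}\eta(e',e'')$ followed by a vertex-by-vertex count using the trivalence $(H_2)$, the single-intersection axiom $(H_3)$, the triangle-disjointness $(H_4)$, and the even-incidence definition of a closed hypercycle is exactly the standard argument one finds in those references.
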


\begin{cor}\label{coroloperadoresloops}
Let $M,M'\subset E$ be any two closed hypercycles, then
\[
W(M)W(M') = (-1)^{\eta(M,M')} W(M')W(M) \,,
\]
where $\eta(M,M'):=|M\cap M'\cap E_3|$ is the number of triangles shared by $M$ and $M'$.
\end{cor}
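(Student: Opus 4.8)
The plan is to deduce the corollary directly from the preceding theorem on when an edge operator commutes with a loop operator, combined with the bilinearity implicit in the definition of $W(M)$ as a product of edge operators. First I would fix two closed hypercycles $M, M' \subset E$ and write $W(M') = \prod_{e' \in M'} K_{e'}$. The idea is to move $W(M)$ past $W(M')$ one factor $K_{e'}$ at a time: each time $W(M)$ is commuted past a single $K_{e'}$ with $e' \in M'$, the previous theorem tells us we pick up a sign $(-1)^{1}$ exactly when $e' \in M \cap E_3$, and no sign otherwise. Hence, after commuting past all $|M'|$ factors, the total accumulated sign is $(-1)^{N}$ where $N = |\{ e' \in M' : e' \in M \cap E_3 \}| = |M \cap M' \cap E_3|$, which is precisely $\eta(M,M')$.

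The one technical point I would want to be careful about is that the Pauli operators involved do not all commute among themselves, so rearranging $W(M)$ as "$W(M)$ times a single factor times the rest of $W(M')$" is not automatic. The clean way around this is to note that each $K_{e'}$ is (up to sign) a Pauli operator, so $K_{e'} W(M) = \pm W(M) K_{e'}$ for a well-defined sign; one then performs the exchange factor by factor, each exchange being a valid identity in the Pauli group, and the signs simply multiply. Concretely, I would set up an induction on $|M'|$: the base case $M' = \emptyset$ gives $W(M') = I$ and $\eta(M,M') = 0$, and the inductive step peels off one edge $e_0 \in M'$, writes $W(M') = K_{e_0} W(M' \setminus \{e_0\})$, applies the previous theorem to move $W(M)$ past $K_{e_0}$, and applies the induction hypothesis to move $W(M)$ past $W(M' \setminus \{e_0\})$. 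Adding the exponents gives $\eta(M, M' \setminus\{e_0\}) + [e_0 \in M \cap E_3] = \eta(M,M')$.

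I do not expect a serious obstacle here; the result is essentially a bookkeeping consequence of the single-edge commutation theorem. The only thing worth stating explicitly for rigor is that "commuting past $W(M)$" is well-defined independent of the order in which we list the edges of $M'$ — this follows because the final sign $(-1)^{\eta(M,M')}$ depends only on the sets $M, M'$ and not on any chosen ordering, which is exactly what the formula asserts. So the write-up is short: invoke the previous theorem, set up the induction on $|M'|$, and collect signs.
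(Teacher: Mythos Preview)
Your argument is correct and is exactly the standard one: expand $W(M')$ as a product of edge operators and commute $W(M)$ past each factor, invoking the preceding theorem to pick up a $(-1)$ precisely for each $e'\in M\cap M'\cap E_3$. The paper itself does not supply a proof of this corollary but simply refers the reader to \cite{breuck} and \cite{terhal}, where the same factor-by-factor sign-counting argument is used.

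One minor point of bookkeeping: in your induction step you write $W(M')=K_{e_0}\,W(M'\setminus\{e_0\})$ and appeal to the induction hypothesis for $M'\setminus\{e_0\}$, but removing a single edge from a closed hypercycle generally does not leave a closed hypercycle, so the hypothesis as stated does not literally apply. The fix is immediate: either phrase the induction over arbitrary finite subsets $M'\subset E$ (the preceding theorem requires only that $M$ be a closed hypercycle, not $M'$), or simply avoid induction and commute $W(M)$ past the product $\prod_{e'\in M'}K_{e'}$ one factor at a time, collecting the signs directly. Either way the conclusion is the same.
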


\begin{Rema}
It follows from the Corollary $\ref{coroloperadoresloops}$ that two loop operators will commute if they have an even number of triangles in common, and anticommute if they have an odd number of triangles in common. Thus, the number of triangles in any closed hypercycle must be even for $W(M)$ to commute with itself.
\end{Rema}

We also define the group
\begin{equation}\label{gloop}
\G_{loop}:=\{W(M); M\subseteq E\ \textrm{is a closed hypercycle}\}.
\end{equation}

Through this group, we can obtain the gauge group $\G$ and the stabilizer group $\textbf{S}=\G\cap C(\G)$ using the identities

\begin{equation}
\G=C(\G_{loop})\ \ \textrm{e}\ \ \textbf{S}=\G_{loop}\cap C(\G_{loop}).
\end{equation}
Thus, it follows that $C(\G) = \G_{loop}$.

The stabilizer operators will be the loop operators that commute with all the other loop operators. The bare logical operators $\overline{X}_i,\overline{Z}_i\in C(\G)$ can be chosen as a pair of loop operators satisfying the standard commutation rules of Pauli operators.

To prove that $C(\G)=\G_{loop}$ and the other results, it is necessary to transform the hypergraph $\Gamma_h$ into an ordinary graph, denoted by $\overline{\Gamma}_h$. Firstly, the set of vertices $\overline{V}$, that is, of qubits of $\overline{\Gamma}_h$, will be the same as that of $\Gamma_h$. The edges of $\overline{\Gamma}_h$ will all be rank-$2$, so the links, that is, the edges of $E_2$, will continue to be the same links in $\overline{\Gamma}_h$. Triangles will be transformed into links as follows: Given a triangle $(u,v,w)$, we will consider $(u,v)$, $(u,w)$ and $(v,w)$ as links, that is, each triangle contributes three new links in $\overline{\Gamma}_h$. Thus, $\overline{\Gamma}_h$ is obtained from $\Gamma_h$ just by transforming each triangle into three links.

In this way, the triangle operator $K_{e'}=Z_uZ_vZ_w$ will be divided into three operators: $\overline{K}_{u,v}=Z_uZ_v$, $\overline{K}_{u,w}=Z_uZ_w$, and $\overline{K}_{v,w}=Z_vZ_w$. Note that these three operators are not independent, as $\overline{K}_{u,v}\overline{K}_{u,w}\overline{K}_{v,w}=(Z_uZ_v)(Z_uZ_w )(Z_vZ_w) = I$. Thus, denoting the link operators $K_{e'}$ also by $\overline{K}_{e'}$, we have that the gauge group $\G$ follows as in (\ref{grupogauge}), that is, $\G = \langle \overline{K}_{e'};\ e'\in \overline{E}\rangle$, where $\overline{E}$ are the edges of $\overline{\Gamma }_h$.

The proof of the next result can be seen in \cite{terhal}.
\begin{lema}
Let $\G$ be the group generated by the link operators $\overline{K}_{e'}$, where $e'\in \overline{E}$ is an edge of $\overline{\Gamma}_h$. So $\G_{loop} = C(\G)$.
\end{lema}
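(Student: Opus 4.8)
The plan is to prove the two inclusions $\G_{loop}\subseteq C(\G)$ and $C(\G)\subseteq\G_{loop}$ separately, working throughout modulo overall phase (so Pauli operators are identified with their $\mathbb{F}_2$ symplectic representatives and $\G_{loop},C(\G)$ are regarded as subgroups of $P_n$ containing the scalars). I would first fix the $3$-colouring of the edges of $\Gamma_h$ provided by $(H_5)$, with the convention already built into the definition of the edge operators that every rank-$3$ edge carries the colour of $Z$; then by $(H_2)$ each vertex meets exactly one edge of each colour, and, for a hypercycle $M$, the operator $W(M)$ acts at a vertex $v$ as the product of the colour-Paulis of the edges of $M$ incident to $v$ (which by closedness of $M$ and trivalence number $0$ or $2$).

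For $\G_{loop}\subseteq C(\G)$, let $M$ be a closed hypercycle and $e'\in\overline{E}$. If $e'\in E_2$, then $W(M)$ commutes with $\overline{K}_{e'}$ by the quoted theorem on when an edge operator commutes with a loop operator, since $e'\notin E_3$. If $e'=(u,v)$ is one of the three rank-$2$ edges obtained by splitting a triangle $T=(u,v,w)\in E_3$, I would write $\overline{K}_{e'}=Z_uZ_v=K_T\cdot Z_w$ with $K_T=Z_uZ_vZ_w$ the triangle operator of $\Gamma_h$; the quoted theorem gives $[W(M),K_T]=0$ iff $T\notin M$, and since $(H_4)$ makes $T$ the unique triangle through $w$, the $Z$-coloured edge at $w$ is $T$, so $W(M)$ anticommutes with $Z_w$ exactly when $T\in M$. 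The two signs cancel, hence $W(M)\in C(\G)$.

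For $C(\G)\subseteq\G_{loop}$, take $P=\bigotimes_{v\in V}P_v\in C(\G)$, and for each vertex $v$ let $S_v$ be the set of edges at $v$ whose colour differs from the colour of $P_v$ (empty if $P_v=I$); then $|S_v|\in\{0,2\}$ and the product of the colour-Paulis over $S_v$ is $P_v$ up to phase. The heart of the argument is that these local choices glue: for any edge $e$, membership ``$e\in S_a$'' is the same for all endpoints $a$. For $e\in E_2$ this follows from $[P,\overline{K}_e]=0$, once one observes ``$e\in S_a$'' $\Leftrightarrow$ ``$P_a$ anticommutes with the colour-Pauli of $e$''; for a triangle $e=(u,v,w)\in E_3$, applying this to the three split operators $Z_uZ_v,Z_uZ_w,Z_vZ_w$ forces ``$e\in S_u$'', ``$e\in S_v$'', ``$e\in S_w$'' to coincide. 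Then $M:=\{e\in E:\ e\in S_a\text{ for the endpoints }a\text{ of }e\}$ is well defined, the edges of $M$ at each $v$ are precisely $S_v$, so $M$ is a closed hypercycle, and $W(M)$ equals $P$ up to an overall scalar, i.e. $P\in\G_{loop}$.

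Putting the inclusions together yields $\G_{loop}=C(\G)$, hence $C(\G)=\G_{loop}$. I expect the second inclusion to be the main obstacle: its substance is exactly that the locally-defined data $\{S_v\}$ assemble into a genuine subset of edges of $\Gamma_h$, which is where the $2$-locality of the split-triangle generators of $\G$ is essential, along with $(H_2)$ (trivalence, forcing $|S_v|\in\{0,2\}$) and $(H_4)$ (disjoint triangles, so the $Z$-coloured edge at a vertex is unambiguous). Keeping track of overall phases is a minor nuisance, resolved by working modulo scalars. One could instead close the gap by a dimension count in $\mathbb{F}_2^{2n}$ after the first inclusion, but that needs all relations among the edge operators of $\overline{\Gamma}_h$, which are surface-dependent, so I would favour the direct construction of $M$.
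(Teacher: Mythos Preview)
The paper does not give its own proof of this lemma; it defers to \cite{terhal} (and see also \cite{breuck}), so there is no internal argument to compare against. Your two-inclusion argument is correct and is essentially the standard one found in those references: for $\G_{loop}\subseteq C(\G)$ you reduce commutation with each split-triangle generator $Z_uZ_v$ to the quoted edge--loop commutation theorem applied to $K_T$ plus a single-site cancellation at the third vertex $w$, and for $C(\G)\subseteq\G_{loop}$ you reconstruct the hypercycle from the local support of $P$, using the $2$-locality of the generators of $\G$ to show that the vertex-local sets $S_v$ glue consistently along every rank-$2$ edge and every triangle.

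One small remark: in the first inclusion, the step ``the $Z$-coloured edge at $w$ is $T$'' follows already from $(H_5)$ (the proper $3$-edge-colouring forces the three edges at $w$ to have three distinct colours, and $T$ is the blue one); you do not actually need $(H_4)$ there, though invoking it is harmless. Your explicit handling of overall phases is appropriate, since the identity is meant in $P_n/\langle i\rangle$.
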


The next result, which can also be seen in \cite{breuck, terhal}, is one of the most important results, as it tells us when ($C_2$) is satisfied. Remembering that the item ($C_2$) says that the extraction of the syndrome can be done by measuring the eigenvalues of the link operators, which are $2$-local, that is, measuring only the elements of the gauge group $\G$.

\begin{Theo}\label{teosindrome}
The eigenstate of a stabilizer $S\in\textbf{S}$ can be measured by a set of link operators if and only if $S$ can be written as a product of these link operators, i.e.
\begin{equation}
S=\overline{K}_m\cdots \overline{K}_1
\end{equation}
and $\overline{K}_j$ commutes with the ordered product of all preceding link operators, that is,
\begin{equation}\label{regrasindrome}
[\overline{K}_j,\overline{K}_{j-1}\cdots \overline{K}_1]=0\ \forall j\in\{2,\ldots,m\}.
\end{equation}
\end{Theo}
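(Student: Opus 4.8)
The plan is to prove both implications of the equivalence separately, using the commutation structure of link operators established in Theorem \ref{teoregradecomutacao} and Corollary \ref{coroloperadoresloops}.

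\textbf{The forward direction.} Suppose the stabilizer $S\in\textbf{S}$ can be written as an ordered product $S=\overline{K}_m\cdots\overline{K}_1$ with the property \eqref{regrasindrome}. I would argue that the eigenvalue of $S$ is determined by sequentially measuring $\overline{K}_1,\ldots,\overline{K}_m$ in that order. The key observation is that measuring $\overline{K}_1$ projects onto one of its eigenspaces; then, since $\overline{K}_2$ commutes with $\overline{K}_1$, measuring $\overline{K}_2$ does not disturb the recorded eigenvalue of $\overline{K}_1$, and by induction measuring $\overline{K}_j$ disturbs none of the previously recorded eigenvalues because $\overline{K}_j$ commutes with the product $\overline{K}_{j-1}\cdots\overline{K}_1$ (the relevant quantity is compatibility with the already-measured observables, and one checks that commuting with the product is what is needed so that the simultaneous eigenspace is preserved). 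After all $m$ measurements, the common eigenstate carries definite eigenvalues $\lambda_1,\ldots,\lambda_m$ for the $\overline{K}_j$, and the eigenvalue of $S$ is simply the product $\lambda_m\cdots\lambda_1$. Thus the syndrome bit for $S$ is extracted purely from $2$-local measurements.

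\textbf{The converse direction.} Suppose the eigenvalue of $S$ can be measured by some finite set of link operators. I would show this forces $S$ to lie in the group generated by those link operators, and moreover that they can be ordered so that \eqref{regrasindrome} holds. The idea is that a sequence of link-operator measurements, in order to determine the eigenvalue of $S$ unambiguously (i.e., independently of the measurement outcomes), must have the property that $S$ acts as a scalar on the final joint eigenspace; since $S$ and each $\overline{K}_j$ are Pauli operators, $S$ restricted to the common eigenspace being scalar means $S$ is (up to phase, which we ignore as in the construction) a product of the $\overline{K}_j$'s appearing. The ordering condition then follows from the requirement that the measurements be non-destructive of prior information: if some $\overline{K}_j$ anticommuted with the product of its predecessors, that measurement would randomize the accumulated eigenvalue, making the extraction of $S$'s syndrome impossible. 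Formalizing this uses the standard fact about the Pauli group: two Pauli operators either commute or anticommute, and a Pauli operator that commutes with a maximal commuting subset of Paulis lies in the group they generate (together with phases).

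\textbf{Main obstacle.} The delicate point is the converse: making precise what ``can be measured by a set of link operators'' means and extracting from it the algebraic statement that $S$ is a product of those operators with the ordering property. One must rule out cleverly adaptive measurement schemes and show that, without loss of generality, it suffices to consider a fixed ordered list; this is where I expect the real work to be, and it likely proceeds by analyzing how each projective link-operator measurement acts on the stabilizer subgroup generated so far, showing inductively that the accumulated group of jointly-measured observables is exactly $\langle\overline{K}_1,\ldots,\overline{K}_j\rangle$ and that $S$ enters this group precisely when condition \eqref{regrasindrome} is maintained. The forward direction, by contrast, is essentially a routine induction on $m$ using the non-disturbance of commuting measurements.
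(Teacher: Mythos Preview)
The paper does not actually prove Theorem~\ref{teosindrome}. It is stated as a quoted result with the remark ``The next result, which can also be seen in \cite{breuck, terhal}, is one of the most important results\ldots'', and no proof is given in the paper itself. Consequently there is no ``paper's own proof'' against which your proposal can be compared; the theorem is imported from Breuckmann and from Suchara--Bravyi--Terhal.

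That said, a brief comment on your sketch: the forward direction is essentially the standard argument, though you should be careful that condition~\eqref{regrasindrome} only requires $\overline{K}_j$ to commute with the \emph{product} $\overline{K}_{j-1}\cdots\overline{K}_1$, not with each factor separately, so individual earlier eigenvalues may indeed be disturbed---what survives is the eigenvalue of the running partial product, and that is all one needs. Your treatment of the converse is, as you yourself flag, only a plan; the substantive work of formalizing ``can be measured by a set of link operators'' and ruling out adaptive schemes is not carried out. If you want a complete argument you will need to consult \cite{terhal} or \cite{breuck} directly.
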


The way of writing the stabilizer $S$ does not matter and is not necessarily unique, but it must satisfy (\ref{regrasindrome}).

Although this construction of subsystem codes using the group $\G_{loop}$ made in \cite{terhal} provides subsystem codes, these codes are not necessarily topological. In \cite{sarvepalli}, the authors give a counterexample, where a stabilizer is obtained that is associated with a homological non-trivial closed hypercycle, that is, it will not satisfy the condition $(C_1)$ of the definition \ref{defisubtopolo}. Even satisfying $(C_2)$ and $(C_3)$, it will not be a topological subsystem code.

Fortunately, this is not a restriction to be subsystem code, it just won't be topological. Thus, to use this construction, we will need more properties to guarantee that they will actually give topological subsystem codes.

Obtaining the parameters of topological subsystem codes is not an easy task, as they depend on the type of hypergraph and the surface where the hypergraph is embedded. 

We will now consider the topological subsystem codes built in \cite{sarvepalli}. In this construction, the authors start from a trivalent and 3-colorable graph, denoted by $\Gamma_2$, imposing the restriction that there is a non-empty set of faces whose number of edges is a multiple of and greater than $4$.

From this graph, a trivalent hypergraph $\Gamma_h$ is constructed, satisfying the conditions $(H_1),\ldots, (H_5)$ seen previously, thus, it is guaranteed that the codes obtained are subsystem codes, which are described by $\G_{loop}$ as in \cite{terhal}.

We will now describe how the process of constructing the hypergraph $\Gamma_h$ is carried out. Consider a trivalent and $3$-colorable graph $\Gamma_2$ with the requirement that there be a non-empty set of faces as previously stated. Let $F_R$, $F_B$, and $F_G$ be the sets of all red, blue, and green faces, respectively. Suppose without loss of generality that the faces $f\in F$, where $F\subseteq F_R$, are such that the number of sides of $f$ is congruent to $4(\textrm{mod}\ 0)$ and greater than $4$. For each $f\in F$, a face $f'$ is added inside $f$, so that $f'$ has half the number of edges of $f$.

Coloring the edges of $f$ alternately using the colors blue and green, there are two ways to construct the edges of rank-$3$ (the triangles) using the vertices of the face $f'$ inside $f$. The first way is to consider triangles using a vertex of $f'$ and a blue edge of $f$, as in Figure \ref{saverpalli1}-($a$), so that these triangles do not intersect, that is, they are disjoint. The second way is analogous to the first, but now the green edges are considered instead of the blue ones, as in Figure \ref{saverpalli1}-($b$).

\begin{figure}[!h]
	\centering
	\includegraphics[scale=0.9]{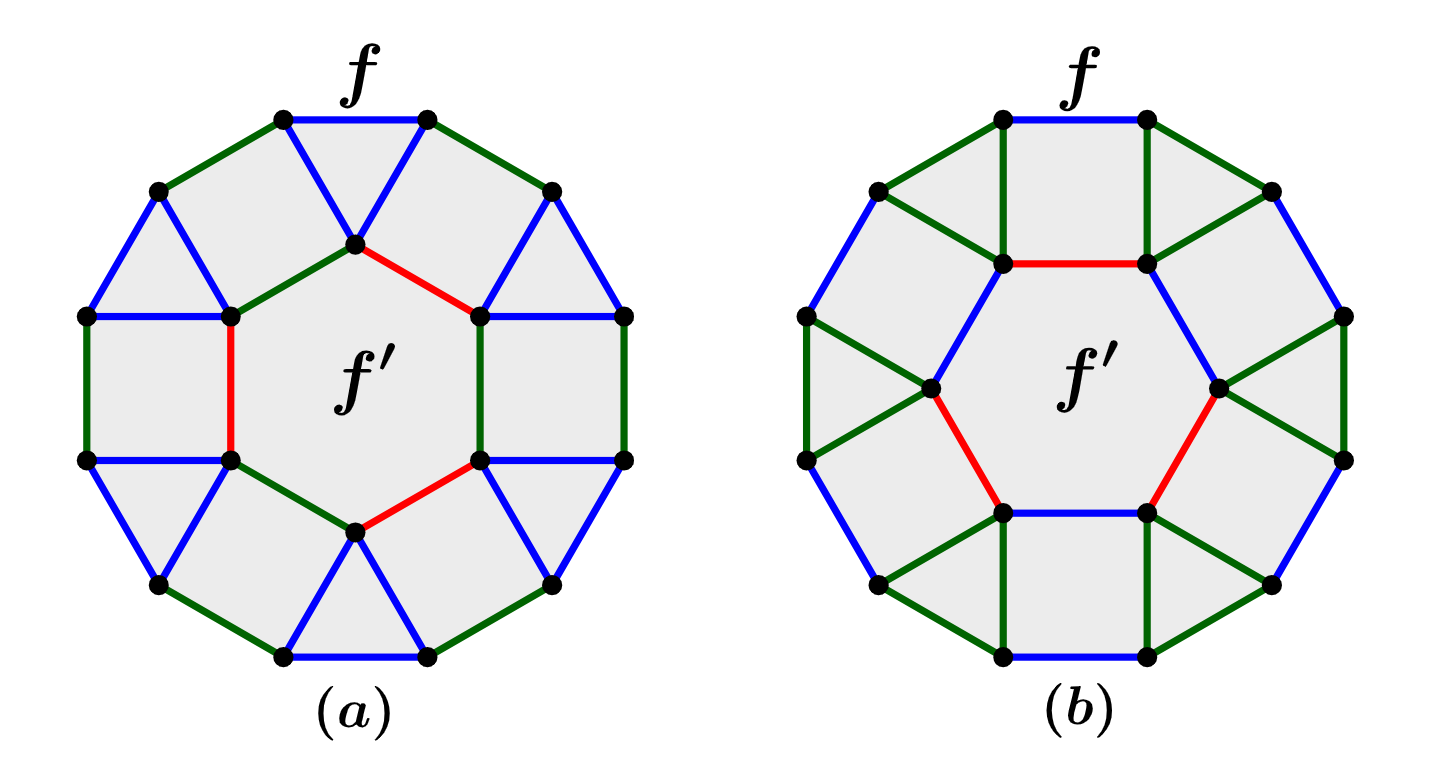} 
	\caption{Rank-$3$ edges inserted into face $f\in F$ from the vertices of face $f'$. ($a$) blue rank-$3$ edges and ($b$) green rank-$3$ edges.} \label{saverpalli1}
\end{figure}

The rank-$3$ edges have the same color as the edge $f$ used in its construction. The edges of $f'$ are of different colors from the color of the edges of rank-$3$, as in Figures \ref{saverpalli1}-($a$) and \ref{saverpalli1}-($b$). The resulting graph is the desired $\Gamma_h$ hypergraph.

In \cite{sarvepalli}, the authors proved that the graph $\Gamma_h$ satisfies $(H_1),\ldots,(H_5)$. Therefore, it gives rise to subsystem codes, whose group of loop operators is given as in (\ref{gloop}) and the gauge group is given by $\G=C(\G_{loop})$.

Because $F\subseteq F_R$ is a subset, the stabilizers of the code vary depending on $F$. However, given any face $f\in F$, there are two independent hypercycles in $\Gamma_h$ that we can associate with this face, and consequently, two independent stabilizing generators. The first of them, denoted by $f_{\sigma_1}$, is composed only of the edges of $f'$, as in Figure \ref{saverpalli2}-($a$). The second, denoted by $f_{\sigma_2}$, is formed by the triangles inserted in $f$, together with the rank-$2$ edges on the border of $f$ and the edges of $f'$ of the same color as the edges of $f$, as in Figure \ref{saverpalli2}-($b$). In \cite{sarvepalli} it is proved that the operators $W(f_{\sigma_1})$ and $W(f_{\sigma_2})$ associated with these hypercycles are independent stabilizer generators, and satisfy the Theorem \ref{teosindrome}.

It is possible to obtain a third hypercycle $f_{\sigma_3}$ (see Figure \ref{saverpalli2}-($c$)) and, consequently, a third stabilizer generator $W(f_{\sigma_3})$, doing the modulo 2 sum of $f_{\sigma_1}$ and $f_{\sigma_2}$, and multiplying $W(f_{\sigma_1})$ by $W(f_{\sigma_2})$, respectively.

\begin{figure}[!h]
	\centering
	\includegraphics[scale=0.9]{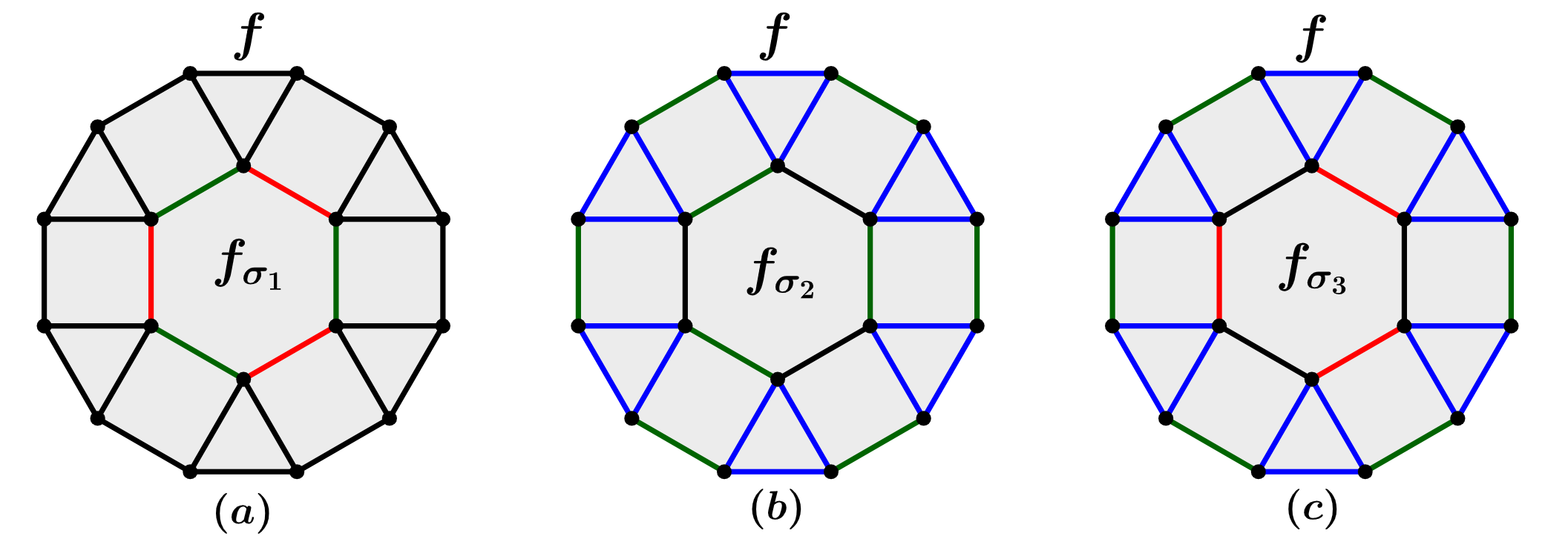} 
	\caption{Hypercycles corresponding to face $f\in F\subseteq F_R$ represented by colored edges. ($a$) Hypercycle $f_{\sigma_1}$, ($b$) hypercycle $f_{\sigma_2}$ and ($c$) Hypercycle $f_{\sigma_3}$.} \label{saverpalli2}
\end{figure}

Note that the operators $W(f_{\sigma_1})$ and $W(f_{\sigma_2})$, when restricted to the cases covered in \cite{bombinsub}, coincide with the operators $S_{v}^{c }$ seen previously.

For distance, the authors in \cite{sarvepalli} proved that $d$ will be upper bound by the smallest number of triangles in a homological non-trivial closed hypercycle, as done in \cite{bombinsub}, that is, it will be the smallest weight among the operators of $\G_{loop}\setminus \textbf{S}$, that is, $C(\G)\setminus \textbf{S}$.

Before presenting the new families of codes, where it will be possible to obtain the parameters, we need to present a way to construct a trivalent and $3$-colorable graph $\Gamma_2$ from another graph embedded in a surface. This construction is due to \cite{17sarvepalli}, but it can also be seen in \cite{sarvepalli} and is used to build one of the families that we will present below.

Given a graph $\Gamma$ embedded in a surface, one of the colors is chosen to color all its faces. Each edge of the graph is transformed into two edges, and the new faces obtained are now all colored with the same color but a different color from the color already chosen, as can be seen in Figure \ref{saverpalli3}. Finally, each vertex of valence $j$ is transformed into a face of $j$ edges. These new faces are all colored with the color that has not yet been used, and, therefore, a trivalent and $3$-colorable graph is obtained, which is denoted by $\Gamma_2$.

The next result from \cite{sarvepalli} tells us how to construct the first family of codes made by the authors, along with the parameters of these codes. 

\begin{figure}[!h]
	\centering
	\includegraphics[scale=0.9]{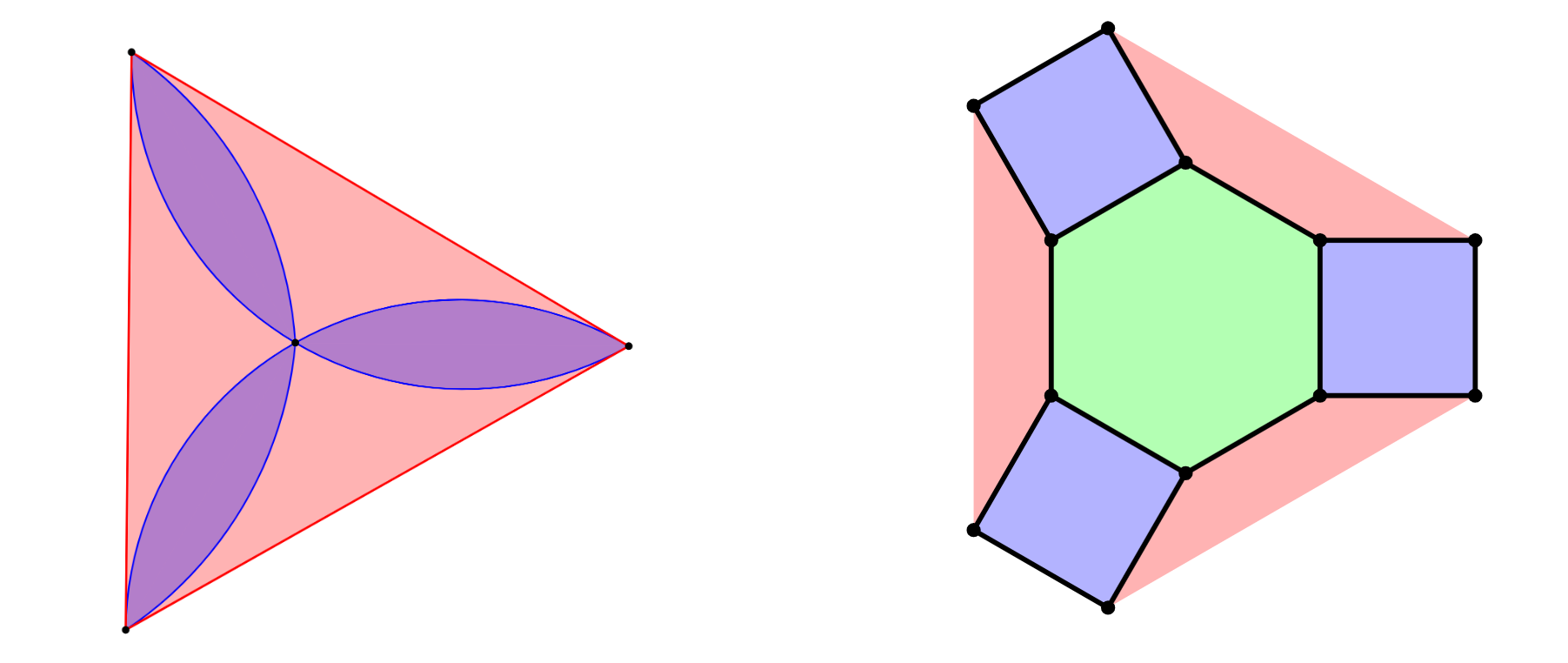} 
	\caption{Transforming any graph $\Gamma$ into a trivalent and $3$-colorable graph $\Gamma_2$.} \label{saverpalli3}
\end{figure}

We need the definition of a bipartite graph.
\begin{defi}
A graph $\Gamma=(V,E)$ is called bipartite, if $V$ admits a partition into two sets such that every edge has its endpoints in different sets. Vertices in the same set cannot be adjacent.
\end{defi}

\begin{Theo}\label{teo5indiano}
Let $\Gamma$ be a graph such that every vertex has an even degree (valence) greater than two. Construct a trivalent, $3$-colorable graph $\Gamma_2$ using the construction described previously. Then, using $\Gamma_2$ to construct the hypergraph $\Gamma_h$, where $F$ will be the set of $v$ faces of $\Gamma_2$, and the edges of rank-$3$ are the boundaries of $e$-faces of $\Gamma_2$. If $l$ is the number of edges of rank-$3$ of a homologically non-trivial hypercycle, then we obtain a subsystem code with parameters
\begin{equation}\label{teo5sarvepalli}
[[6e,1+\delta_{\Gamma^*,\textrm{bipartido}}-\chi,4e-\chi,d\leq l]],
\end{equation}
where $e=|O(\Gamma)|$ and $\delta{\Gamma^*,\textrm{bipartite}} = 1$ if $\Gamma^*$ is bipartite and $0$ otherwise.
\end{Theo}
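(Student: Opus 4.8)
\medskip
\noindent\emph{Sketch of a possible proof.} Write $v=|V(\Gamma)|$, $e=|E(\Gamma)|$, $f=|F(\Gamma)|$, so that $v-e+f=\chi$; let $j_u\ge 4$ denote the (even) valence of $u\in V(\Gamma)$, so $\sum_u j_u=2e$; and let $\Sigma$ be the underlying surface. The plan is to make the chain $\Gamma\to\Gamma_2\to\Gamma_h$ fully explicit, count its cells to obtain $n$, read off $r$ and $s$ from the dimensions of $\G$ and of $\textbf{S}=\G\cap C(\G)$, put $k=n-r-s$, and quote the general distance bound for $d$. For $n$: doubling each edge of $\Gamma$ creates $e$ new bigon faces, and replacing each vertex — which after the doubling has valence $2j_u$ — by a $2j_u$-gon face makes the tessellation trivalent, so $|V(\Gamma_2)|=\sum_u 2j_u=4e$, $|E(\Gamma_2)|=6e$, and $\Gamma_2$ has $f+e+v$ faces (consistent with $4e-6e+(f+e+v)=\chi$). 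Each of the $v$ vertex-faces has $2j_u\equiv 0\pmod{4}$ sides and $2j_u>4$, hence is eligible for the hyperedge insertion of \cite{sarvepalli}; taking $F$ to be these $v$ faces and inserting in each an interior face with $j_u$ vertices together with $j_u$ rank-$3$ edges gives $|V(\Gamma_h)|=4e+\sum_u j_u=6e$, so $n=6e$. Along the way I record $|E_3(\Gamma_h)|=2e$ and that the ordinary graph $\overline{\Gamma}_h$, being trivalent on $6e$ vertices, has $9e$ edges, of which $3|E_3|=6e$ come from split triangles.

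Next I would compute $\dim\G$, where $\G=\langle\overline{K}_{e'}:e'\in\overline{E}(\overline{\Gamma}_h)\rangle\subseteq\mathbb{F}_2^{2n}$. Its $9e$ generators satisfy the $2e$ triangle relations $\overline{K}_{uv}\overline{K}_{uw}\overline{K}_{vw}=I$ and, using the $3$-coloring $(H_5)$ and Theorem~\ref{teoregradecomutacao}, a homological family of relations spanned by the statements ``the link operators around a face of one fixed color class multiply to the identity'', the number of independent such relations being governed by $\dim H_1(\Sigma;\mathbb{F}_2)=2-\chi$. (Equivalently one may describe $\G_{loop}=C(\G)$ inside the cycle space of $\Gamma_h$ through the rank of its $\mathbb{F}_2$-incidence matrix and the kernel of $M\mapsto W(M)$.) Feeding this into $n=k+r+s$ together with the stabilizer count below should produce $r=4e-\chi$.

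For $\textbf{S}=\G_{loop}\cap C(\G_{loop})$ I would assemble generators from three sources: the $2v$ local operators $W(f_{\sigma_1}),W(f_{\sigma_2})$, $f\in F$ (with $W(f_{\sigma_3})=W(f_{\sigma_1})W(f_{\sigma_2})$; all satisfy Theorem~\ref{teosindrome}); the loop operators around the faces of $\Gamma_2$ not in $F$, which are homologically trivial and commute with every loop operator (Corollary~\ref{coroloperadoresloops}); and the homologically non-trivial loop operators $W(M)$ with $|M\cap M'\cap E_3|$ even for all closed hypercycles $M'$. Determining the dimension of the span of all of these is the main obstacle. Beyond relations analogous to the per-vertex and global ones of \cite{bombinsub}, I expect exactly one further relation when a certain class in $H^1(\Sigma;\mathbb{F}_2)$ vanishes — a condition I would identify with the bipartiteness of $\Gamma^{*}$, all of whose faces have even length, being dual to the even-valence vertices of $\Gamma$ — and, when $\Gamma^{*}$ is not bipartite, one extra independent global stabilizer in its place. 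This yields $s=2e+2\chi-1-\delta_{\Gamma^{*},\textrm{bipartite}}$, hence $k=6e-(4e-\chi)-s=1+\delta_{\Gamma^{*},\textrm{bipartite}}-\chi$, which is automatically non-negative: on $S^2$ every face of $\Gamma^{*}$ is even-sided, so $\Gamma^{*}$ is bipartite and $k=0$.

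Finally, $d\le l$ is exactly the general upper bound of \cite{bombinsub,sarvepalli}: the distance is at most the least number of rank-$3$ edges in a homologically non-trivial closed hypercycle, which is $l$ by definition; concretely, multiplying the loop operator of such a hypercycle by the product of its rank-$2$ edge operators together with one blue edge operator of each of its triangles yields a representative of the corresponding dressed logical operator whose weight equals that number of triangles. I expect essentially all of the difficulty — and in particular the appearance of $\delta_{\Gamma^{*},\textrm{bipartite}}$ — to lie in the two homological bookkeeping steps above: controlling the relation module of the link operators $\overline{K}_{e'}$ and that of the stabilizer generators $W(f_{\sigma_i})$ together with their global companions.
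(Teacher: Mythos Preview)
The paper does not actually prove Theorem~\ref{teo5indiano}; it is quoted verbatim from Sarvepalli--Brown \cite{sarvepalli}, and the text twice refers the reader there for the argument (``In the proof of the Theorem~\ref{teo5indiano}, which can be seen in \cite{sarvepalli}\ldots'', and similarly later). So there is no in-paper proof to compare against. What the paper does contain are proofs of its own analogous results (Theorems~\ref{T1meusub} and~\ref{T2meusub} and the two that follow), and your methodology --- count $|V(\Gamma_h)|$ to get $n$, count $E_2,E_3$ to get $\dim\G$, count stabilizer generators and their dependencies to get $s$, then set $k=n-r-s$ --- is exactly the template those proofs follow.

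Where your sketch diverges from that template, and where the genuine gap lies, is in how you propose to control the relations. The paper (and, from its description, \cite{sarvepalli}) does this purely combinatorially: it writes down explicit product identities such as $\prod_{f\in F_G}W(f_{\sigma_2})=\prod_{f\in F_R}W(f_{\sigma_1})$, and in the $\{2p_1,4,6\}$ case a tripartiteness condition on a reduced graph $\Gamma_R$ produces three further explicit relations. By contrast, you gesture at ``a homological family of relations\ldots governed by $\dim H_1(\Sigma;\mathbb{F}_2)$'' for $\dim\G$ and at ``a certain class in $H^1(\Sigma;\mathbb{F}_2)$'' for the bipartite correction to $s$, without saying which cycles or cocycles are in play or why the kernel has the claimed dimension. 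These two bookkeeping steps are precisely where the parameters are determined, and ``should produce $r=4e-\chi$'' is not yet an argument: you have not exhibited the relations among the $\overline{K}_{e'}$ beyond the $2e$ triangle ones, nor shown that no others exist. Likewise, your explanation of why $\delta_{\Gamma^*,\text{bipartite}}$ appears --- one fewer relation versus one more global stabilizer --- is suggestive but not established; in \cite{sarvepalli} this arises from whether a specific global product of the $W(f_{\sigma_i})$ is trivial, which depends on the $2$-colorability of the $f$-faces of $\Gamma_2$, i.e.\ on whether $\Gamma^*$ is bipartite. To turn your sketch into a proof you would need to replace the homological placeholders by those explicit product identities and verify their independence, as the paper does in its own cases.
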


Before presenting the next result that we will use to construct our second family of codes, we need the definition of a medial graph. The proof of the result can also be found in \cite{sarvepalli}.

\begin{defi}
We call the medial graph of a graph $\Gamma$ the graph that is obtained by placing a vertex on each edge of $\Gamma$ and an edge connecting two of these vertices, if the edges associated with these vertices in $\Gamma$ are adjacent. We denote the medial graph of $\Gamma$ by $\Gamma_m$.
\end{defi}

\begin{Theo}\label{teo6indiano}
Let $\Gamma$ be a graph whose vertices have even valences greater than $2$ and $\Gamma_m$ be its medial graph. Construct the trivalent and $3$-colorable graph $\Gamma_2$ from $\Gamma_m^*$ (dual of $\Gamma_m)$ using the construction presented previously. If $\Gamma_m^*$ is bipartite, the set of $v$-faces of $\Gamma_2$, denoted by $F_v$, forms a bipartition $F_v\cup F_f$, where $|F_v|=|V(\ Gamma)|$. Making the construction of $\Gamma_h$ with $F_v\subset F$ such that the edges of rank-$3$ are not on the borders of the $e$-faces of $\Gamma_2$. Let $l$ be the number of edges of rank-$3$ in a homologically non-trivial hypercycle, then we obtain a subsystem code with parameters
\begin{equation}
[[10e,1+\delta_{\Gamma^*,\textrm{bipartido}}-\chi,6e-\chi,d\leq l]],
\end{equation}
where $e=|O(\Gamma)|$ and $\delta{\Gamma^*,\textrm{bipartite}} = 1$ if $\Gamma^*$ is bipartite and $0$ otherwise.
\end{Theo}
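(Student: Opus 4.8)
The plan is to follow, for this medial-graph input, the same scheme used for Theorem~\ref{teo5indiano} and in \cite{sarvepalli}: to show that the hypergraph $\Gamma_h$ constructed here belongs to the admissible class, so that it defines a topological subsystem code, and then to read off the four parameters from a combinatorial count on $\Gamma$, $\Gamma_m$, $\Gamma_m^*$ and $\Gamma_2$, together with a homological argument for $k$.

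First I would collect the combinatorial data. Write $V$, $e$, $F$ for the numbers of vertices, edges and faces of $\Gamma$, so $V-e+F=\chi$. The medial graph $\Gamma_m$ is $4$-regular with $e$ vertices, $2e$ edges and $V+F$ faces, its faces splitting into the $V$ faces around vertices of $\Gamma$ and the $F$ faces inside faces of $\Gamma$; dualizing, $\Gamma_m^*$ has $e+\chi$ vertices, $2e$ edges and $e$ faces, and when it is bipartite its vertex bipartition is exactly this vertex/face split, which is what yields $F_v$ with $|F_v|=V$, the vertex of $\Gamma_m^*$ attached to $w\in V(\Gamma)$ having valence $\deg_\Gamma(w)$. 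Applying the construction of \cite{17sarvepalli} to $\Gamma_m^*$ gives a trivalent, $3$-colorable $\Gamma_2$ with $|V(\Gamma_2)|=8e$, $|E(\Gamma_2)|=12e$ and three colour classes of faces: $e$ faces from the faces of $\Gamma_m^*$, $2e$ faces from the doubled edges, and $e+\chi$ faces from the truncated vertices, the last being $F_v\cup F_f$; the $v$-face attached to $w$ is a $2\deg_\Gamma(w)$-gon, and since $\deg_\Gamma(w)$ is even and $>2$ this has a number of sides divisible by $4$ and larger than $4$, so $F_v$ consists of faces eligible for the hypergraph step. Taking $F=F_v$ and inserting inside the $v$-face for $w$ a face $f'$ with $\deg_\Gamma(w)$ vertices and $\deg_\Gamma(w)$ rank-$3$ edges, the totals $\sum_w\deg_\Gamma(w)=2e$ new vertices and $2e$ triangles give $n=|V(\Gamma_h)|=8e+2e=10e$.

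Since $\Gamma_h$ is produced by exactly the recipe of \cite{sarvepalli} applied to the trivalent $3$-colorable graph $\Gamma_2$, conditions $(H_1),\ldots,(H_5)$ are inherited as there, so $\Gamma_h$ defines a subsystem code with $\G=C(\G_{loop})$; the operators $W(f_{\sigma_1})$ and $W(f_{\sigma_2})$ for $f\in F_v$ are independent stabilizer generators satisfying Theorem~\ref{teosindrome}, and these together with the remaining short-loop stabilizers are spatially local while the bare logical operators are homologically non-trivial hypercycles, so properties $(C_1)$, $(C_2)$ and $(C_3)$ hold and the code is topological. It remains to compute $k$ and $r$, after which $s=n-k-r$ is forced. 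For $k$ I would use that $C(\G)/\textbf{S}$ has dimension $2k$ and is generated by the loop operators of homologically non-trivial hypercycles modulo $\textbf{S}$; this is governed by the homology of the underlying surface and would naively give $k=2-\chi$, except that the hypergraph carries one extra stabilizer relation precisely when $\Gamma^*$ is bipartite. Concretely one exhibits a distinguished hypercycle, assembled from all the triangles together with suitable boundary edges of the $v$-faces, whose loop operator is a product of link operators, hence lies in $\G$, and which commutes with every loop operator exactly when $\Gamma^*$ is bipartite; in that case it contributes a nontrivial element of $\textbf{S}$ that was otherwise being counted as a logical degree of freedom, so $k=1+\delta_{\Gamma^*,\text{bipartite}}-\chi$. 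For $r$ I would count the gauge qubits as $\tfrac{1}{2}\dim(\G/\textbf{S})$, using that $\G$ is generated by the $18e$ link operators of $\overline{\Gamma}_h$ subject to the $2e$ relations saying that the three links of a triangle multiply to the identity, plus the hypercycle relations, and separating out which link products are stabilizers; this yields $r=6e-\chi$. Finally, the distance bound is the same as in \cite{bombinsub,sarvepalli}: for a homologically non-trivial hypercycle $M$, multiplying $W(M)$ by the gauge operator formed from its non-triangle links and one link of each of its triangles gives a representative of the same dressed logical class whose weight equals the number of rank-$3$ edges of $M$, and minimizing over such $M$ gives $d\le l$.

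I expect the main obstacle to be the computation of $k$, and in particular the clean identification of the single extra stabilizer relation that appears exactly when $\Gamma^*$ is bipartite: this is the point where the topology of the surface, the even-valence hypothesis on $\Gamma$, and the precise bookkeeping of which edges of $\Gamma_2$ become triangles all interact, and it has no counterpart in the ordinary, non-subsystem topological code setting.
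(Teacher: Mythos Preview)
The paper does not contain its own proof of Theorem~\ref{teo6indiano}: it merely states the result and refers to \cite{sarvepalli} for the proof. So there is nothing in this paper to compare your argument against beyond the general methodology the paper uses for its own theorems (e.g.\ Theorem~\ref{T1meusub}).

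Your combinatorial bookkeeping is sound: the counts for $\Gamma_m$, $\Gamma_m^*$, $\Gamma_2$ and the identification of the $v$-faces as $2\deg_\Gamma(w)$-gons are correct, the deduction $n=10e$ is clean, and the distance bound $d\le l$ is the standard reduction from \cite{bombinsub,sarvepalli}. The appeal to $(H_1)$--$(H_5)$ and $(C_1)$--$(C_3)$ via the Sarvepalli--Brown machinery is also appropriate.

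The genuine gap is in your computation of $k$. Your description of the ``distinguished hypercycle'' is internally inconsistent: you say its loop operator ``is a product of link operators, hence lies in $\G$, and which commutes with every loop operator exactly when $\Gamma^*$ is bipartite'', but since $\G=C(\G_{loop})$, lying in $\G$ \emph{is} the condition of commuting with every loop operator, so these cannot be separated. Moreover, if this operator were a new element of $\textbf{S}$ in the bipartite case, that would \emph{increase} $s$ and hence \emph{decrease} $k$, contrary to the formula. The correct mechanism---and the one used both in \cite{sarvepalli} and in this paper's own arguments (cf.\ equation~(\ref{relacaosubsistem}) and the discussion around (\ref{RE1}))---is the reverse: one first enumerates a generating set of stabilizers (the $W(f_{\sigma_i})$ for the relevant faces), then identifies \emph{independence relations} among these generators. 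Bipartiteness of $\Gamma^*$ produces an additional relation of the form $\prod W(\cdot)=\prod W(\cdot)$, which lowers $s$ by one and therefore raises $k$ by one. Your sketch of $r$ via $\tfrac12\dim(\G/\textbf{S})$ is in principle fine but would need the same relation count to be made precise; the paper's method instead computes $\dim\G$ from $|E_2|+2|E_3|$ minus the number of relations, then solves $\dim\G=2r+s$.
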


In \cite{sarvepalli}, the authors proved that these codes cannot be obtained by the construction made in \cite{bombinsub}. In particular, all the codes of Theorem \ref{teo6indiano} are different from those obtained in \cite{bombinsub}, and the codes of Theorem \ref{teo5indiano} are different when the graph $\Gamma^*$ is non-bipartite. They also proved that these codes satisfy the $(C_2)$ condition and that they are topological subsystem codes.

\section{A new construction to obtain topological subsystem codes }\label{construcaomeussubcode}

Inspired by the constructions of topological subsystem codes in the last section, in this section we present a generalization of the construction given in \cite{sarvepalli}, where we will construct new subsystem codes and present four new families of topological subsystem codes. We will prove that these codes are topological subsystem codes.

In our construction, instead of starting from a trivalent and $3$-colorable graph, we will start from a trivalent and $3$-colorable tessellation. Although the graph approach is more generic, using tessellations, we will be able to obtain greater control over the parameters of the codes obtained. We also draw attention to the fact that in \cite{artigo} all trivalent and $3$-colorable tessellations existing on the surfaces were determined, as well as the quantities of faces, edges, and vertices of these tessellations were calculated, the which we will use here.

Consider a trivalent and $3$-colorable tessellation $\{2p_1,2p_2,2p_3\}$. We denote the set of red, green, and blue faces by $F_R$, $F_G$, and $F_B$, respectively. In order to make our construction easier, we will fix the colors of the tessellation polygons as follows: Suppose without loss of generality that the $2p_1$-gons are red, the $2p_2$-gons are green, and the $2p_3$-gons are blue.

Consider $F\subseteq F_R$ such that $F\neq\emptyset$ and assume that $2p_1>4$ with $p_1>2$ odd. If we make the construction proposed in \cite{sarvepalli}, only for trivalent and 3-colorable tessellations $\{2p_1,2p_2,2p_3\}$, we will have the restriction that $2p_1>4$ with $p_1>2$ even, thus we will only work with the case $p_1>2$ odd.

To construct the hypergraph $\Gamma_h$, we insert a face $f'$ with $p_1$ sides inside each face $f\in F$ and construct the triangles from the vertices of $f'$, so that one of its edges is also an edge of a blue $2p_3$-gon, and these triangles do not intersect each other, as in Figure \ref{fig6-6-6-subsistem}.

\begin{figure}[!h]
	\centering
	\includegraphics[scale=1.1]{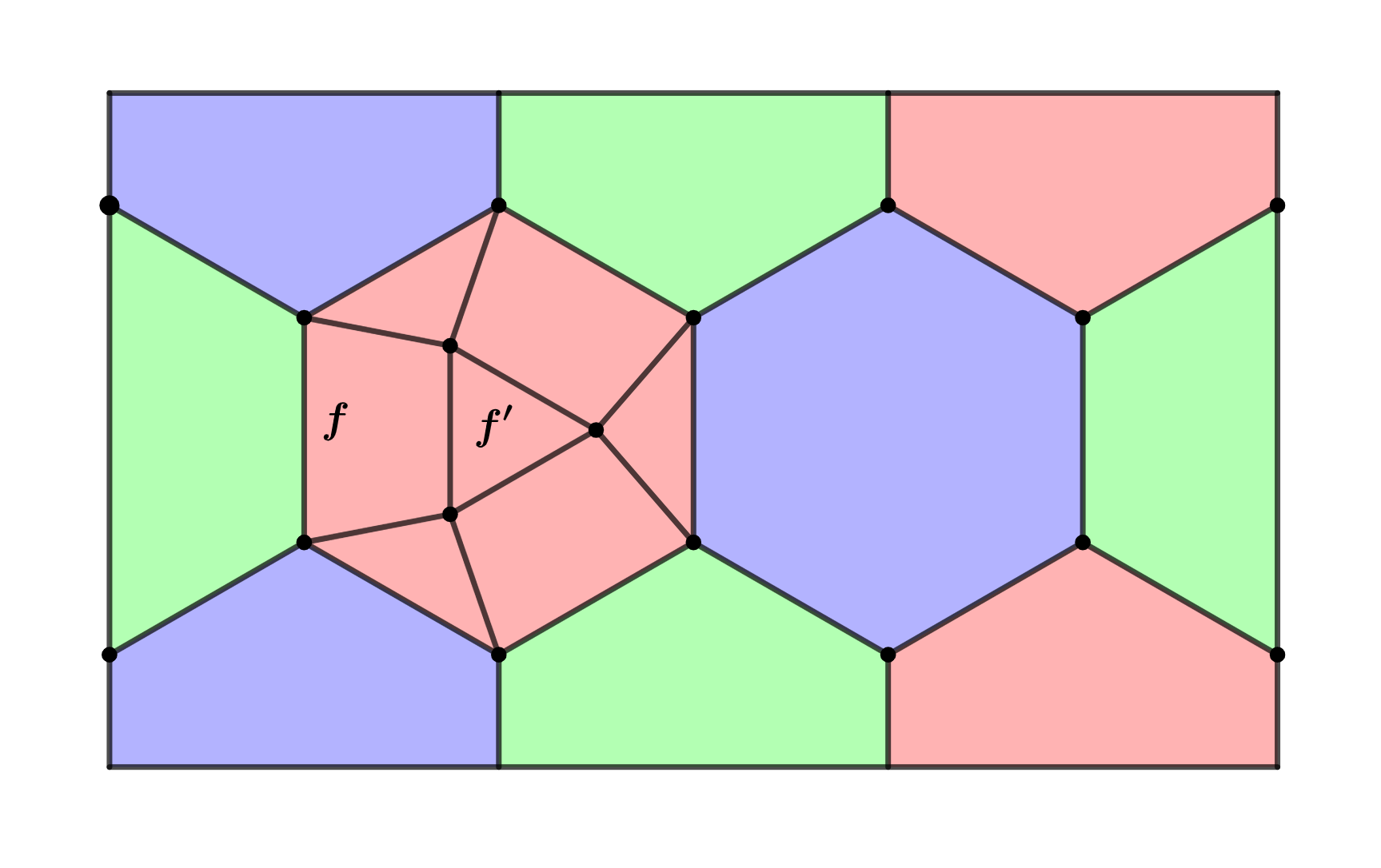} 
	\caption{Tessellation $\{6,6,6\}$, where inside the hexagonal face $f\in F_R$ we insert the triangular face $f'$, and from the vertices of $f'$ we construct three triangles.} \label{fig6-6-6-subsistem}
\end{figure}

By fixing this choice in relation to the construction of triangles together with fixing the colors, we are also fixing an order for the tessellations, as tessellation $\{6,12,4\}$ will provide a different hypergraph than tessellation $\{6,4,12\}$, as can be seen in Figure \ref{quasegrafos(6,12,4 e 6,4,12)}.

\begin{Rema}
If $p_1$ is even, to obtain the construction done in \cite{sarvepalli},just color the edges as we saw in Section $\ref{codigosubsistemtopolo}$. However, it is worth highlighting that in our approach, we are starting from a tessellation and not from a graph, as in \cite{sarvepalli}. Furthermore, if $p_1$ is even, $p_3=2$, and $F=F_R$, then we are in the conditions of Theorem $\ref{teo5indiano}$, as we have $\{2p_1,2p_2,4\}$. This is the only family of tessellations to which this theorem applies.
\end{Rema}

\begin{figure}[!h]
	\centering
	\includegraphics[scale=0.9]{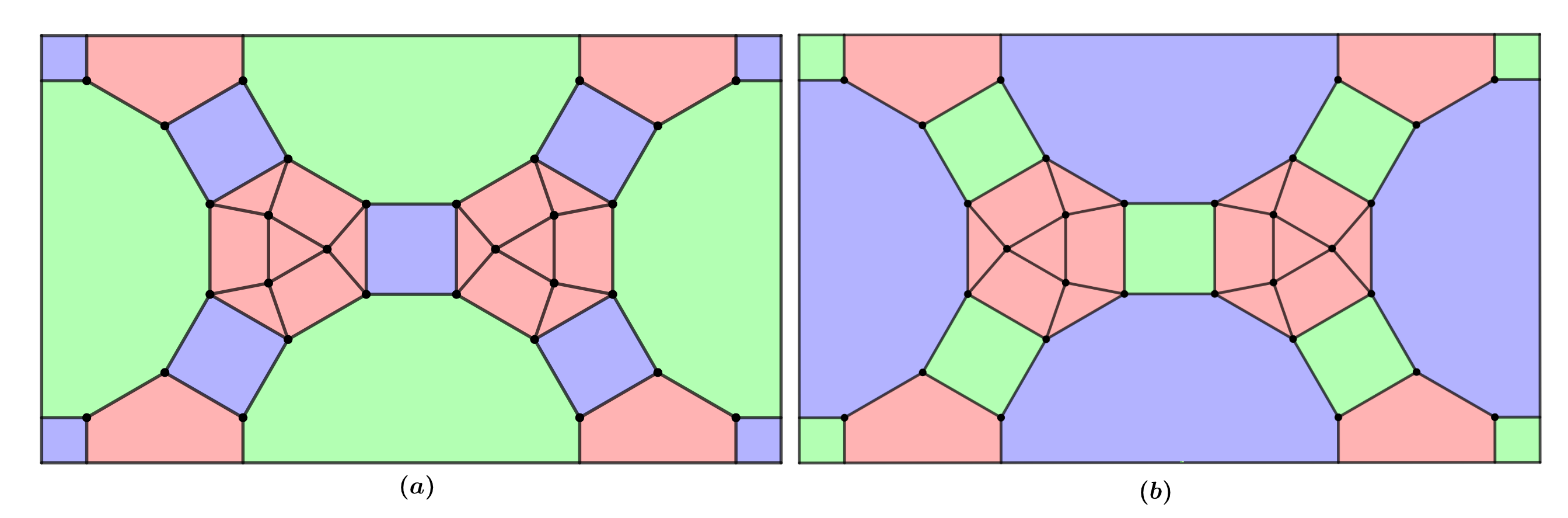} 
	\caption{Triangles inserted inside two faces of $F_R$ in the tessellations (a) $\{6,12,4\}$ and (b) $\{6,4,12\}$.} \label{quasegrafos(6,12,4 e 6,4,12)}
\end{figure}

%Repare que até o momento, nossa contrução é basicamente a mesma que feita em \cite{sarvepalli}, com a diferença de que estamos trabalhando somente com tesselações trivalentes e 3-coloríveis e estamos fixando $p_1$ ímpar, e é essa escolha que nos fornecerá os novos códigos.

For this construction with the odd case $p_1$, the resulting hypergraph does not satisfy the restriction $(H_5)$, therefore, it does not satisfy the switching rule and, therefore, does not admit the construction of subsystem codes. To correct this problem, we choose any three distinct edges of $f'$ and take a point on each of them, different from the extremes. Then we connect these points to obtain a triangle, as in Figure \ref{face10-10-4}. The edges where we chose the points give rise to two new edges, and these points are new vertices. Therefore, the face $f'$ will have $p_1 + 3$ edges.

\begin{figure}[!h]
	\centering
	\includegraphics[scale=5.5]{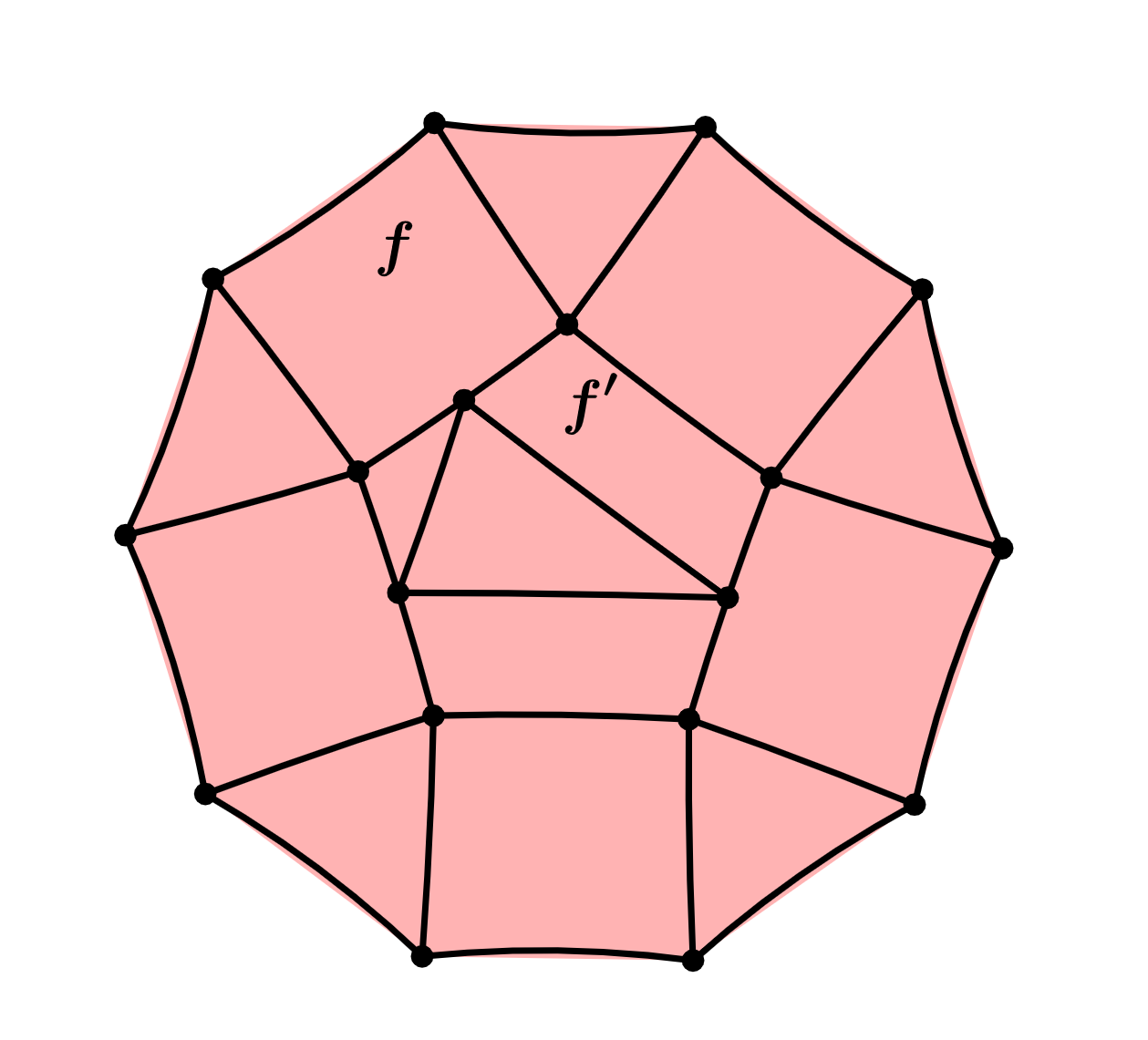} 
	\caption{Face $f\in F_R$ of the hyperbolic tessellation $\{10,10,4\}$, where we construct a triangle inside the face $f'$. Thus, $f'$ now has 8 edges.} \label{face10-10-4}
\end{figure}

We color all the edges of the triangles blue, and the other edges of $f$, where the triangles were inserted, we color red. For possible faces $f\in F_R$ such as $f\not\in F$, we alternate between red and blue, where the blue edges are those connected to the $2p_3$-gons. For the $p_1 + 3$ edges of $f'$, we alternate between the colors red and green, and the remaining edges, which belong to the $2p_2$-gons and $2p_3$-gons, we color green, as in Figure \ref{hipergrafo6-12-4}.

\begin{figure}[!h]
	\centering
	\includegraphics[scale=0.9]{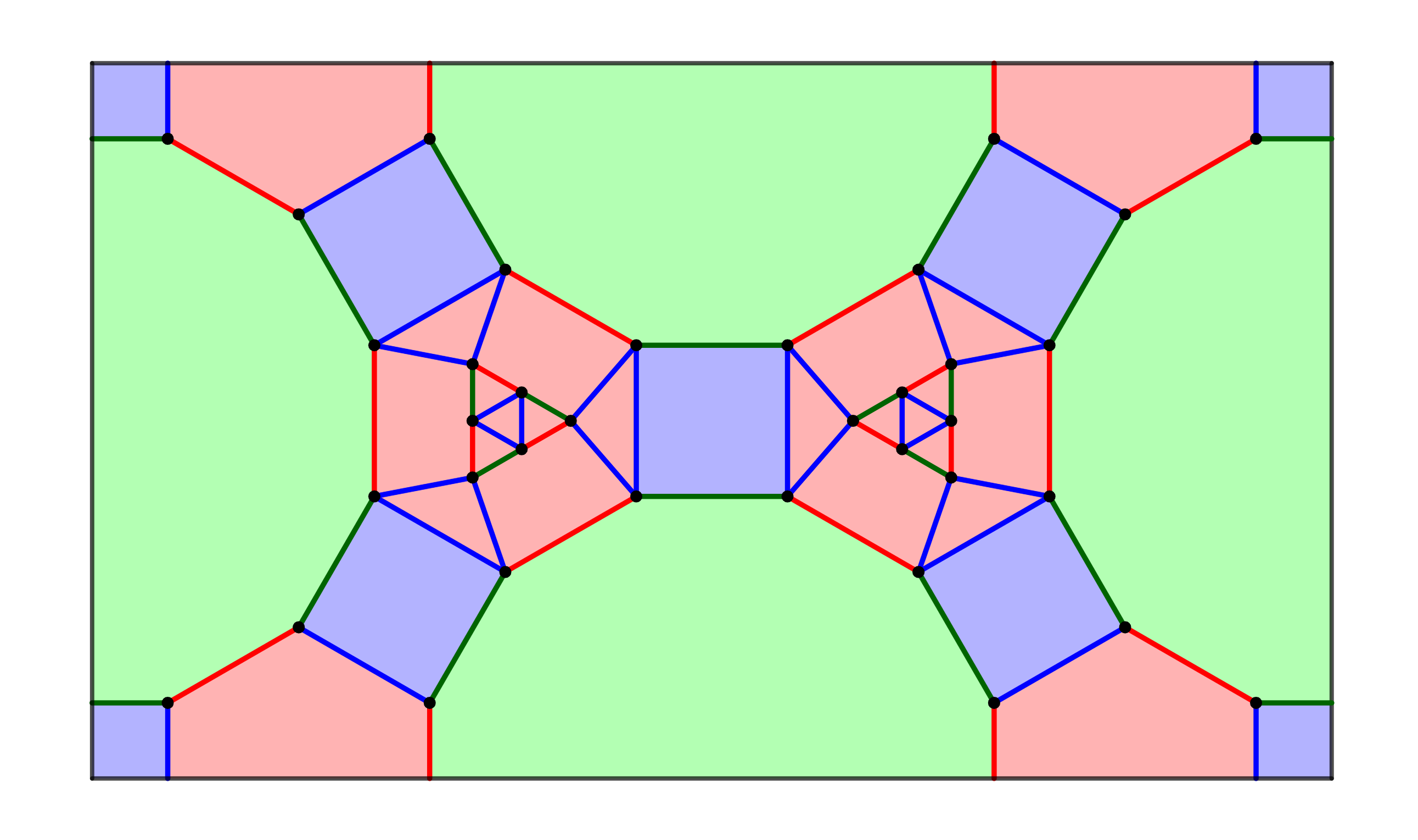} 
	\caption{Hypergraph $\Gamma_h$ obtained from tessellation $\{6,12,4\}$.} \label{hipergrafo6-12-4}
\end{figure}

Note that when we construct the triangle from points taken on the edges of $f'$, we correct the problem of the hypergraph not satisfying $(H_5)$. We denote the resulting hypergraph by $\Gamma_h$ and use it to construct the subsystem codes in the same way as in \cite{terhal}.

To see that the hypergraph $\Gamma_h$ given in our construction can be used to construct subsystem codes as in \cite{terhal}, we will prove that $\Gamma_h$ satisfies $(H_1),\ldots,(H_5)$. 

\begin{Theo}\label{satisfazasrestricoes}
The hypergraph $\Gamma_h$ obtained by the previous construction, starting from a tessellation $\{2p_1,2p_2,2p_3\}$ with odd $p_1>2$, satisfies the constraints $(H_1),\ldots,(H_5)$, and, therefore, it gives rise to subsystem codes whose group formed by loop operators is given by
\begin{equation}
\G_{loop} = \{W(M);\ M\subseteq E\ \ \textrm{é um hiperciclo fechado}\},
\end{equation}
where $E$ is the set formed by the edges of rank-$2$ and rank-$3$, and the gauge group $\G$ is given by $\G = C(\G_{loop})$.
\end{Theo}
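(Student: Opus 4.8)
The plan is to verify the five structural conditions $(H_1)$–$(H_5)$ one at a time for the hypergraph $\Gamma_h$ produced by the construction, and then to invoke the results quoted from \cite{terhal} (together with the Remark following Theorem \ref{teoregradecomutacao}) to conclude that $\Gamma_h$ gives rise to subsystem codes with $\G_{loop}$ and $\G = C(\G_{loop})$ as stated. The conditions $(H_1)$ and $(H_4)$ are essentially immediate from the construction: the only edges we ever introduce are rank-$2$ edges (the boundary edges of $f$, the edges of the inner face $f'$, the subdividing edges created on $f'$, and the three new edges of the inner triangle) and rank-$3$ edges (the triangles built on a vertex of $f'$ and a blue edge of $f$, plus the new inner triangle made blue); and the rank-$3$ edges are pairwise disjoint because the triangles on the vertices of $f'$ were explicitly constructed not to intersect, the inner triangle uses only the newly created subdivision vertices, and triangles living inside different faces $f \in F$ share no vertices. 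For $(H_2)$ I would go vertex-class by vertex-class: an original vertex of the tessellation keeps its three incident edges (two boundary edges of faces plus either a third boundary edge or a triangle through it, depending on whether it borders a face of $F$); a vertex of $f'$ lies on two edges of $f'$ and exactly one triangle; and a subdivision vertex of $f'$ lies on the two half-edges it splits an edge of $f'$ into, plus one edge of the inner triangle. Each case gives exactly three incident edges.

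For $(H_3)$ — two distinct edges meet in at most one vertex — I would again argue locally: edges of the original tessellation already satisfy this since $\{2p_1,2p_2,2p_3\}$ is a genuine tessellation; an edge of $f'$ and a triangle on $f'$ share at most the one vertex of $f'$ that the triangle is anchored at; two triangles on $f'$ share no vertex at all by the disjointness already established; and the inner triangle meets each subdivided edge of $f'$ in just one subdivision vertex. The one place requiring a little care is checking that a triangle anchored at a vertex $u$ of $f'$ does not also touch a second vertex of $f'$ or a boundary edge of $f$ in a second point — but by construction such a triangle is $(u, a, b)$ where $a,b$ are the endpoints of a blue edge of $f$, none of which is a vertex of $f'$, so no double incidence occurs.

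The main obstacle, and the real content of the theorem, is $(H_5)$: that the edges of $\Gamma_h$ admit a proper $3$-colouring (so that the commutation rule of Theorem \ref{teoregradecomutacao} holds and, by the Remark, subsystem codes actually exist). Here I would exhibit the explicit colouring described in the paragraph preceding the theorem and check that at every vertex the three incident edges receive three distinct colours. The colouring assigns blue to all triangle edges; on a face $f \in F$ the boundary edges are red, on a face $f \in F_R \setminus F$ they alternate red/blue with blue on the edges shared with $2p_3$-gons, the edges of $f'$ alternate red/green, and the remaining edges of the $2p_2$- and $2p_3$-gons are green. The crucial point — and the reason $p_1$ must be \emph{odd} for the extra inner triangle to be needed — is that when $p_1$ is odd one cannot $2$-colour the $p_1$-cycle $f'$ with alternating red/green, which is exactly why $f'$ is subdivided to $p_1+3$ edges; I would verify that after subdivision the cycle has even length and the red/green alternation closes up consistently, and that at each vertex of $f'$ the two $f'$-edges (of colours red and green) together with the unique incident triangle edge (blue) are all distinct, while at each subdivision vertex the two half-edges of an $f'$-edge keep that edge's colour on one side and switch on the other so that together with the blue inner-triangle edge we again get three colours. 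The remaining verifications — that original vertices of $\Gamma_h$ still see three colours, which reduces to the $3$-colourability of the starting tessellation $\{2p_1,2p_2,2p_3\}$ guaranteed by hypothesis and recorded in \cite{artigo}, and that the blue triangle edges never clash with a blue boundary edge at a shared vertex (they don't, since a triangle is anchored at a vertex of $f'$, which is interior to $f$ and not on any $2p_3$-gon) — I would dispatch quickly. Once $(H_1)$–$(H_5)$ are in hand, the statement about $\G_{loop}$ and $\G = C(\G_{loop})$ follows verbatim from the construction of \cite{terhal} recalled earlier in this section, via the passage to the ordinary graph $\overline{\Gamma}_h$ and the Lemma identifying $\G_{loop} = C(\G)$.
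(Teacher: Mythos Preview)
Your proposal is correct and follows essentially the same approach as the paper: verify $(H_1)$--$(H_5)$ one by one by direct inspection of the construction, then invoke the framework of \cite{terhal} for the description of $\G_{loop}$ and $\G$. In fact you supply considerably more detail than the paper does---the paper dispatches $(H_3)$ in a single sentence and $(H_5)$ by simply asserting that the inner triangle was introduced precisely to make the edges $3$-colourable, whereas you spell out the local case analysis at each vertex class and explain why subdividing $f'$ to $p_1+3$ (even) edges is what makes the red/green alternation close up.
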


\begin{proof}
In the tessellation $\{2p_1,2p_2,2p_3\}$, we only have rank-$2$ edges. When we insert $f'$ into $f\in F\subseteq F_R$ and create triangles from the vertices of $f'$, we are producing only rank-$2$ and rank-$3$ edges. In the same way, when we create the triangle inside $f'$ from points taken on three of its edges, we are producing one edge of rank-$3$ and three of rank-$2$. Therefore, $\Gamma_h$ satisfies $(H_1)$.

It also follows from the way we insert the faces $f'$ and create the triangles that the original vertices of $\{2p_1,2p_2,2p_3\}$ will continue to be trivalent. The vertices of $f'$ are also trivalent because of the triangles created from their vertices. Likewise, the three new vertices will be trivalent, as they divide an edge into two and have the new triangle as the third edge. Thus, $\Gamma_h$ satisfies $(H_2)$.

The restriction $(H_3)$ is satisfied because, despite creating rank-$2$ and rank-$3$ edges, it follows from the way they were created that any two distinct edges intersect at most one vertex.

The restriction $(H_4)$ is also satisfied, because when we create the triangles from $f'$, we already require that they do not intersect. And the last triangle is also created without intersecting with the others. Therefore, the rank-$3$ edges of $\Gamma_h$ are two-by-two disjoint.

Finally, $(H_5)$ is satisfied, as we constructed the triangle inside $f'$ so that the commutation rule is valid, that is, so that the edges of $\Gamma_h$ were $3$-colorable. Therefore, $\Gamma_h$ gives rise to subsystem codes as constructed in \cite{terhal}.
\end{proof}

\begin{Rema}
The choice of the three edges of $f'$ that will have the vertices of the triangle is made, in most cases, randomly, that is, we have no restrictions. However, we will have a small restriction for two families of tessellations. The first of them is for tessellation $\{2p_1,2p_2,4\}$ with $F=F_R$, and the second of them is tessellation $\{2p_1,4,6\}$ also with $F=F_R$. This restriction is due to the fact that the Theorem $\ref{teosindrome}$ has to be satisfied to guarantee that $(C_2)$ is valid. We will study these two cases in particular in the next section. For any other family of tessellations, we can take the three edges of $f'$ at random.
\end{Rema}

As we saw in Section \ref{cap5}, we fixed the triangle operators $K_{e'}=Z_uZ_vZ_w$, where $e'=(u,v,w)$. When we transform $\Gamma_h$ into the ordinary graph $\overline{\Gamma}_h$, we will have $\overline{K}_{u,v}=Z_uZ_v$, $\overline{K}_{u,w}=Z_uZ_w $ and $\overline{K}_{v,w}=Z_vZ_w$. Thus, in $\overline{\Gamma}_h$ given $e'=(u,v)\in\overline{E}$, we have that $\overline{K}_{e'} = Z_uZ_v$ if $e '\in\overline{E}_B$, $\overline{K}_{e'} = X_uX_v$ if $e'\in\overline{E}_R$ and $\overline{K}_{e'} = Y_uY_v$ if $e'\in\overline{E}_G$, where $\overline{E}_B$, $\overline{E}_R$ and $\overline{E}_G$ are the sets with all blue, red and green edges, respectively, of the graph $\overline{\Gamma}_h$.

As the non-empty set $F\subseteq F_R$ is any set, we cannot generally determine all the stabilizers in the code. However, for each $f\in F$, we can always determine three stabilizer generators, where only two are independent, and for each $2p_2$-gon, we can determine one more independent stabilizer generator. We show this in the following lemma.

\begin{lema}\label{estabilizadoresubsistem}
Let $f_1\in F$ and $f_2\in F_G$, and let $\Gamma_h$ be given by our construction. Consider the three hypercycles (two independent) that we can associate with the face $f_1$ and one independent hypercycle that we can associate with the face $f_2$. Then, there are three stabilizer generators (two independent) associated with $f_1$ and one independent stabilizer generator associated with $f_2$.
\end{lema}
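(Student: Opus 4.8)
Following the pattern of the operators $f_{\sigma_i}$ of Section \ref{cap5}, the plan is to exhibit explicit closed hypercycles realising the asserted stabilizer generators and, for each, to check three things: that it is a closed hypercycle, that its loop operator lies in $\textbf{S}=\G_{loop}\cap C(\G_{loop})$, and that the stated independence holds. For $f_1\in F$ I take $f_{1,\sigma_1}=\partial f'$ (the $p_1+3$ edges of the inner face $f'$), and $f_{1,\sigma_2}$ to be the union of the $p_1$ inserted rank-$3$ edges $T_1,\dots,T_{p_1}$, the inner rank-$3$ edge $\tau$, the $p_1$ rank-$2$ edges of $\partial f_1$ that were not absorbed into triangles, and every other edge of $\partial f'$ (this last choice is consistent because $p_1$ odd makes $p_1+3$ even, so the alternating red/green colouring of $\partial f'$ closes up). I set $f_{1,\sigma_3}=f_{1,\sigma_1}\,\triangle\,f_{1,\sigma_2}$, so that $W(f_{1,\sigma_3})=W(f_{1,\sigma_1})\,W(f_{1,\sigma_2})$ up to phase. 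For $f_2\in F_G$ I take the $2p_2$-cycle $\partial f_2$.

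First I would run the incidence count showing these are closed hypercycles: for $\partial f'$ and $\partial f_2$ every vertex lies on exactly two edges of the set; for $f_{1,\sigma_2}$, a vertex of $\partial f_1$ lies on its rank-$2$ edge of $\partial f_1$ and on the unique $T_i$ based there, an original vertex of $f'$ lies on $T_i$ and on exactly one incident edge of $\partial f'$, and a subdivision vertex lies on $\tau$ and on exactly one half of its subdivided edge --- always two; and $f_{1,\sigma_3}$ is closed since incidences add mod $2$. Next, by Corollary \ref{coroloperadoresloops} and the Remark following it, $W(M)$ is a stabilizer precisely when $M$ shares an even number of triangles with every closed hypercycle; this is immediate for $f_{1,\sigma_1}$ and for $\partial f_2$, which contain no triangles (note $\partial f_2$ is triangle-free because every rank-$3$ edge sits inside a red face of $F$), and $W(f_{1,\sigma_3})$ is then a product of two stabilizers. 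Finally I would argue independence: since at each vertex of $\partial f'$ the two incident edges of $\partial f'$ have the two different colours red and green, $W(f_{1,\sigma_1})=\bigotimes_{v\in\partial f'}Z_v$ up to phase, hence non-trivial and trivial on $\partial f_1$, whereas $W(f_{1,\sigma_2})$ restricts to $X_xZ_x\propto Y_x$ at each vertex $x$ of $\partial f_1$; thus $\langle W(f_{1,\sigma_1}),W(f_{1,\sigma_2})\rangle$ has order $4$ and exactly two of the three generators are independent. The same colour argument gives $W(\partial f_2)\neq I$, and evaluating it at a vertex of $\partial f_2$ lying on neither $\partial f_1$ nor $f'$ (one exists since $f_2\neq f_1$) shows it is independent of the two generators attached to $f_1$.

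The hard part is the stabilizer property of $W(f_{1,\sigma_2})$, i.e.\ showing that an arbitrary closed hypercycle $M'$ meets $\{T_1,\dots,T_{p_1},\tau\}$ in an even number of edges. I would get this by summing the incidences $|\{e'\in M':v\in e'\}|$ over the $p_1+3$ vertices of $\partial f'$: the sum is even, each edge of $\partial f'$ contributes $2$, each $T_i$ contributes $1$ (it meets $\partial f'$ only in its apex), $\tau$ contributes $3$, and no other edge touches those vertices (each is trivalent with third edge a $T_i$ or $\tau$); hence $|M'\cap\{T_1,\dots,T_{p_1}\}|+3\cdot\mathbf{1}[\tau\in M']$ is even, and since $3\equiv 1\pmod 2$ this is exactly the parity of $|M'\cap\{T_1,\dots,T_{p_1},\tau\}|$. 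This is where the inner triangle $\tau$ is essential: when $p_1$ is odd the $p_1$ outer triangles by themselves give an odd count, and $\tau$ contributes the compensating $3\equiv 1$ --- without it $W(f_{1,\sigma_2})$ would not even commute with itself. A minor recurring point is that all these identities between loop operators are read modulo the global phase, which is legitimate because, as in Section \ref{cap5}, $\G$ and $\textbf{S}$ are defined phase-free.
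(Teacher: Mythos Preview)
Your proof is correct and follows essentially the same architecture as the paper's: the same hypercycles $f_{1\sigma_1}=\partial f'$, $f_{1\sigma_2}$ (outer and inner triangles together with the red boundary edges of $f_1$ and the red edges of $\partial f'$), their symmetric difference, and $\partial f_2$, and the same appeal to Corollary~\ref{coroloperadoresloops} to land in $\textbf{S}=\G_{loop}\cap C(\G_{loop})$.

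Where you go beyond the paper is in two places. First, the paper simply \emph{asserts} that ``every hypercycle contains none or an even number of $W(f_{1\sigma_2})$ triangles''; your vertex-incidence parity count over $\partial f'$ actually proves this, and your observation that $\tau$ contributes $3\equiv 1\pmod 2$ precisely compensates the odd number $p_1$ of outer triangles is exactly the mechanism that makes the inner triangle necessary. Second, the paper does not argue independence at all, whereas your computation $W(f_{1\sigma_1})\propto\bigotimes_{v\in\partial f'}Z_v$ versus $W(f_{1\sigma_2})|_{\partial f_1}\propto\bigotimes Y$ gives it cleanly. So your route is the same as the paper's but with the two gaps the paper leaves open filled in.
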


\begin{proof}
Let $f_1'$ be the face inserted into $f\in F\subseteq F_R$, with the three extra edges coming from the insertion of the triangle inside $f_1'$. Define $f_{1\sigma_1}$ as the hypercycle formed by the $p_1+3$ rank-$2$ edges on the border of $f_1'$. Therefore, it follows that
\begin{equation}\label{hiperciclo1}
W(f_{1\sigma_1})=\prod_{e'\in \partial(f_1')\cap\overline{E}_R} \overline{K}_{e'} \prod_{e'\in \partial(f_1')\cap\overline{E}_G} \overline{K}_{e'}
\end{equation}
belongs to the group $\G_{loop}$, and by the Corollary \ref{coroloperadoresloops} we have that $W(f_{1\sigma_1})\in C(\G_{loop})$. Thus, $W(f_{1\sigma_1})\in\G_{loop}\cap C(\G_{loop})=\textbf{S}$.

Now, define $f_{1\sigma_2}$ to be the hypercycle consisting of all rank-$3$ edges embedded within $f_1$ and within $f_1'$, together with the rank-$2$ edges on the boundary of $ f_1$ and the rank-$2$ red edges on the border of $f_1'$. Therefore, it follows that
\begin{equation}\label{hiperciclo2}
W(f_{1\sigma_2})=\prod_{e'\in \partial(f_1)\cap\overline{E}_B} \overline{K}_{e'} \prod_{e'\in \partial(f_1)\cap\overline{E}_R} \overline{K}_{e'}\prod_{e'\in \partial(f_1')\cap\overline{E}_G} \overline{K}_{e'}
\end{equation}
belongs to the group $\G_{loop}$. We also know that every hypercycle contains none or an even number of $W(f_{1\sigma_2})$ triangles. Thus, by the Corollary \ref{coroloperadoresloops} we have that $W(f_{1\sigma_2})\in C(\G_{loop})$. Thus, $W(f_{1\sigma_2})\in\G_{loop}\cap C(\G_{loop})=\textbf{S}$.

The third and final hypercycle associated with $f_1$, which we denote by $f_{1\sigma_3}$, is the modulo $2$ sum of the hypercycles $f_{1\sigma_1}$ and $f_{1\sigma_2}$. The loop operator $W(f_{1\sigma_3})$ will be the product of $W(f_{1\sigma_1})$ by $W(f_{1\sigma_2})$.

Finally, for the face $f_2\in F_G$, define $f_{2\sigma_1}$ as the hypercycle formed by the $2p_2$ rank-$2$ edges on the border of $f_2$. Therefore, by the same reasoning used for $W(f_{1\sigma_1})$, we have that
\begin{equation}\label{hiperciclo3}
W(f_{2\sigma_1})=\prod_{e'\in \partial(f_2)\cap\overline{E}_R} \overline{K}_{e'} \prod_{e'\in \partial(f_2)\cap\overline{E}_G} \overline{K}_{e'}
\end{equation}
is an independent stabilizer generator.
\end{proof}

\begin{lema}\label{lemac2}
The stabilizer operators $W(f_{1\sigma_i})$ with $i=1,2,3$ and $W(f_{2\sigma_1})$ from Lemma \ref{estabilizadoresubsistem} satisfy the condition $(C_2)$ of the definition of topological subsystem code.
\end{lema}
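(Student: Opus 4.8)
The plan is to verify condition $(C_2)$ by appealing directly to Theorem \ref{teosindrome}: for each stabilizer generator $W$ exhibited in Lemma \ref{estabilizadoresubsistem}, I must produce an ordering of its constituent link operators $\overline{K}_m \cdots \overline{K}_1$ so that each $\overline{K}_j$ commutes with the product $\overline{K}_{j-1}\cdots \overline{K}_1$ of all preceding ones. The commutation of two link operators is governed by Theorem \ref{teoregradecomutacao}: $\overline{K}_{e'}$ and $\overline{K}_{e''}$ anticommute precisely when $e'$ and $e''$ share exactly one vertex and are not both rank-$3$ (equivalently, in $\overline{\Gamma}_h$, when they share one vertex and have different colors at that vertex — but since each vertex of $\overline{\Gamma}_h$ is trivalent with three incident edges colored with the three colors, any two edges meeting at a common vertex do carry different Pauli types). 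So the task reduces to a purely combinatorial one about the cycle of edges forming each hypercycle.

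First I would handle $W(f_{1\sigma_1})$ and $W(f_{2\sigma_1})$, which are the easiest cases: these hypercycles are honest simple closed cycles of rank-$2$ edges (the boundary of $f_1'$ with its $p_1+3$ edges, respectively the boundary of $f_2$ with its $2p_2$ edges), alternating red/green. For a simple cycle $e_1, e_2, \ldots, e_L$ (with $e_L$ adjacent to $e_1$), I would order the link operators along the cycle, $\overline{K}_{e_1}, \overline{K}_{e_2}, \ldots, \overline{K}_{e_L}$. Each $\overline{K}_{e_j}$ for $j<L$ shares exactly one vertex with $\overline{K}_{e_{j-1}}$ and no vertex with any earlier $\overline{K}_{e_i}$, $i<j-1$; hence it anticommutes with exactly one factor of the preceding product and thus commutes with the whole product. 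The only subtlety is the last edge $e_L$, which touches both $e_{L-1}$ and $e_1$: it anticommutes with an \emph{even} number (two) of the preceding factors, hence still commutes with the product. This is exactly the argument in \cite{sarvepalli} and I would just transcribe it with the new edge counts.

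Next I would treat $W(f_{1\sigma_2})$, which is the genuinely new object: it is built from the rank-$3$ edges (both the triangles attached to blue edges of $f_1$ and the single triangle inside $f_1'$), the rank-$2$ boundary edges of $f_1$, and the red rank-$2$ edges of $f_1'$. When passed to $\overline{\Gamma}_h$ each triangle becomes three rank-$2$ edges, of which two are retained in the hypercycle (since each triangle contributes an even number of incident edges at each vertex). The strategy is to order the factors \emph{triangle by triangle}: start by writing out, for each rank-$3$ edge, its two contributing link operators consecutively — their product is a single $Z$-type operator on the ``outer'' two vertices of the triangle and commutes with everything of $Z$-type — then interleave the blue/red boundary edges of $f_1$ and finally the red edges of $f_1'$, walking around $f_1$ in cyclic order. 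Because consecutive blocks in this walk share exactly one vertex and non-consecutive blocks are vertex-disjoint, each newly appended operator anticommutes with an even number (zero or two) of preceding factors, so Theorem \ref{teosindrome} applies. I also need to check that all $Z$-type pieces among themselves commute trivially, which is automatic. Finally $W(f_{1\sigma_3}) = W(f_{1\sigma_2})W(f_{1\sigma_1})$ is a product of two stabilizers each measurable by link operators; concatenating the two measuring sequences — with the $f_{1\sigma_2}$-sequence first — gives a sequence for $W(f_{1\sigma_3})$, provided that each operator of the second sequence commutes with the full product of the first sequence, which holds because it commutes with $W(f_{1\sigma_2})$ (a stabilizer) and with the partial products within its own sequence.

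The main obstacle I anticipate is the bookkeeping around $W(f_{1\sigma_2})$: unlike the clean single-cycle case, this hypercycle has ``branch-like'' local structure near each triangle (three link-edges meeting, two retained), and one must choose the traversal order carefully so that at every step the count of anticommuting preceding factors is even. The delicate point is ensuring that when the walk returns to close up around $f_1$, the last appended block still sees an even number of anticommuting predecessors — this is where, as flagged in the Remark preceding the lemma, the placement of the interior triangle's three vertices on the edges of $f_1'$ matters for the families $\{2p_1,2p_2,4\}$ and $\{2p_1,4,6\}$. I would isolate that closing step as a separate claim and verify it by a direct parity count on the shared vertices, handling the two constrained families by exhibiting the specific admissible choice of the three edges of $f_1'$.
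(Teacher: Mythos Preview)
Your approach is correct but considerably more elaborate than the paper's. The paper's proof is a single line: ``It follows from the equations (\ref{hiperciclo1}), (\ref{hiperciclo2}), and (\ref{hiperciclo3}).'' The point is that those equations already present each stabilizer as an \emph{ordered} product of link operators grouped by colour, and this colour-grouped ordering is precisely what Theorem~\ref{teosindrome} asks for. Same-colour link operators carry the same Pauli type and hence pairwise commute; when a new colour block is appended, each of its operators shares either zero or two vertices with the support of everything preceding (for instance, each red edge on $\partial f_1$ meets exactly two blue edges of $\partial f_1$, while the green edges of $\partial f_1'$ are supported on the inner vertices, disjoint from $\partial f_1$), so the parity condition~(\ref{regrasindrome}) is automatic. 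Your cycle-traversal argument for $W(f_{1\sigma_1})$ and $W(f_{2\sigma_1})$ and your block-by-block walk for $W(f_{1\sigma_2})$ would also work, but the colour grouping avoids the bookkeeping entirely.

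One point where you go astray: your anticipated ``main obstacle'' about the placement of the interior triangle for the families $\{2p_1,2p_2,4\}$ and $\{2p_1,4,6\}$ is not relevant to this lemma. That restriction (the Remark after Theorem~\ref{satisfazasrestricoes}) is imposed so that \emph{additional} stabilizers --- namely $W(f_{\sigma_2})$ for $f\in F_G$ in the first family and $W(\overline f_\sigma)$ in the second --- satisfy Theorem~\ref{teosindrome}. The four stabilizers of Lemma~\ref{estabilizadoresubsistem} satisfy $(C_2)$ for any placement of the interior triangle, as the colour-grouped expressions (\ref{hiperciclo1})--(\ref{hiperciclo3}) show without further hypotheses. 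So you can drop that last paragraph entirely.
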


\begin{proof}
It follows from the equations (\ref{hiperciclo1}), (\ref{hiperciclo2}), and (\ref{hiperciclo3}).
\end{proof}

\begin{Exam}
Let $f_1\in F\subseteq F_R$ and $f_2\in F_G$ and consider the construction of $\Gamma_h$ from tessellation $\{6,6,6\}$, as in Figure $\ref{fig2faces6-6-6}$.

\begin{figure}[!h]
	\centering
	\includegraphics[scale=1.0]{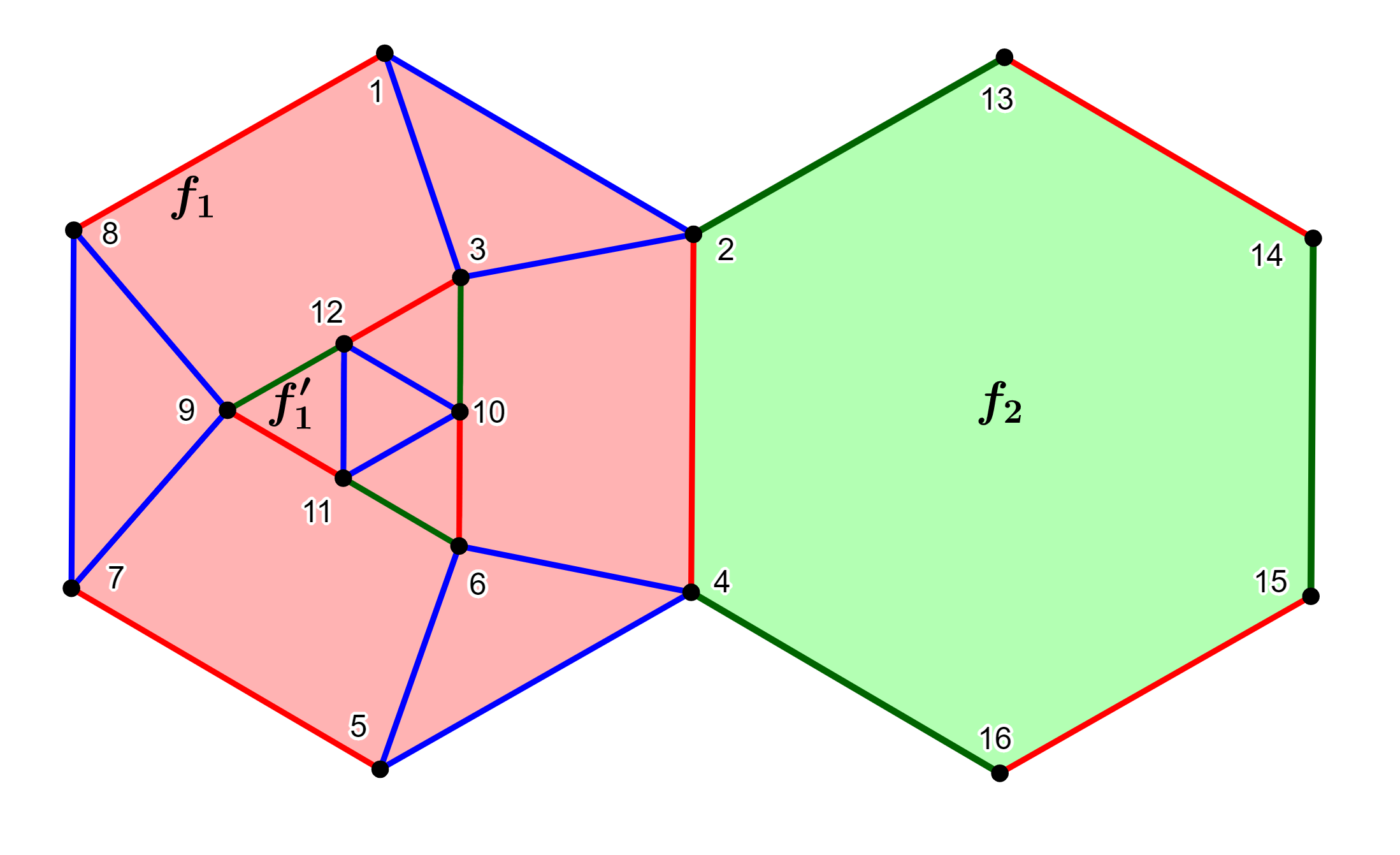} 
	\caption{Face $f_ 1\in F\subseteq F R$ and face $f_ 2\in F_G$ of the hypergraph $\Gamma_h$ obtained from tessellation $\{6,6,6\}$.} \label{fig2faces6-6-6}
\end{figure}

See that $f_{1\sigma_1}$ is given by the $6$ edges of $f_1'$ inserted inside $f_1$. Therefore, the loop operator is given as $(\ref{hiperciclo1})$, that is, $W(f_{1\sigma_1}) = (X_3X_{12})(X_{6}X_{10})(X_{9}X_{11})(Y_{3}Y_{10})(Y_{6}Y_{11})(Y_{9}Y_{12})$. Thus, $W(f_{1\sigma_1})$ satisfies the Theorem $\ref{teosindrome}$ and the condition $(C_2)$.

On the other hand, $f_{1\sigma_2}$ is formed by the three triangles inserted in $f_1$, by the triangle inserted in $f_1'$, by the red edges on the border of $f_1$, and by the red edges on the border of $f_1'$. It follows from $(\ref{hiperciclo2})$ that the loop operator is given by
\begin{eqnarray*}
 W(f_{1\sigma_2}) & = & (Z_1Z_2Z_3)(Z_4Z_5Z_6)(Z_7Z_8Z_9)(Z_{10}Z_{11}Z_{12})(X_1X_8)(X_2X_4)(X_5X_7)(X_3X_{12})\\
 & & (X_{9}X_{11})(X_{6}X_{10})\\
                 & = & (Z_1Z_2)(Z_4Z_5)(Z_7Z_8)(X_1X_8)(X_2X_4)(X_5X_7)(Y_3Y_{10})(Y_6Y_{11})(Y_9Y_{12}),  
\end{eqnarray*}
which also satisfies the Theorem $\ref{teosindrome}$ and the condition $(C_2)$.

Finally, we have that $f_{2\sigma_1}$ is formed by the $6$ rank-$2$ edges of $f_2$. Therefore, the loop operator is given as in $(\ref{hiperciclo3})$, that is, $W(f_{2\sigma_1}) = (X_2X_4)(X_{13}X_{14})(X_{15}X_{16})(Y_4Y_{16})(Y_2Y_{13})(Y_{14}Y_{15})$. Thus, $W(f_{2\sigma_1})$ satisfies the Theorem $\ref{teosindrome}$ and the condition $(C_2)$.
\end{Exam}

To conclude this section, let's remember the number of faces $N_f$, edges $N_e$, and vertices $N_v$ of the tessellation $\{2p_1,2p_2,2p_3\}$ on a compact orientable surface of genus $g\geq 2$, as well as the number of red $F_R$, green $F_G$, and blue $F_B$ faces. These values will be needed so that we can determine the parameters of three of the four families of topological subsystem codes that we will present in the next section.

As we can see in \cite{silva}, the number of faces, edges, and vertices of this tessellation is given by 
\begin{eqnarray}
N_f & = & \frac{2(p_1p_2+p_1p_3+p_2p_3)(g-1)}{p_1p_2p_3-p_1p_2-p_1p_3-p_2p_3},\label{Nf2p2q2rSub}\\ 
N_e & = & \frac{6p_1p_2p_3(g-1)}{p_1p_2p_3-p_1p_2-p_1p_3-p_2p_3},\label{Ne2p2q2rSub} \\
N_v & = & \frac{4p_1p_2p_3(g-1)}{p_1p_2p_3-p_1p_2-p_1p_3-p_2p_3} \label{Nv2p2q2rSub}.
\end{eqnarray}

The number of red, green, and blue faces is given by 
\begin{eqnarray*}
F_{R} & = & \frac{2p_2p_3(g-1)}{p_1p_2p_3-p_1p_2-p_1p_3-p_2p_3},\\  
F_{G} & = & \frac{2p_1p_3(g-1)}{p_1p_2p_3-p_1p_2-p_1p_3-p_2p_3},\\
F_{B} & = & \frac{2p_1p_2(g-1)}{p_1p_2p_3-p_1p_2-p_1p_3-p_2p_3}.\label{Nfcolors2p2q2rSub}
\end{eqnarray*}

\section{New topological subsystem codes}\label{familiassubsistem}

In this section, we present four families of subsystem codes and prove that they are, in fact, topological. We also display the $n$, $k$, $r$, and $d$ parameters of these codes.

Two of these families are particular cases of our construction, seen in the last section. The first family refers to the case in which we start from tessellation $\{2p_1,2p_2,4\}$ and consider $F = F_R$. The second and third families refer to the case in which we start from the tessellation $\{2p_1,4,6\}$ with $p_1>4$ even and with $p_1>6$ odd, respectively, and also consider $F = F_R$. As the case $p_1>4$ even was not analyzed in \cite{sarvepalli}, we will address it here. Finally, the fourth and final family that we will present is related to tessellation $\{p,4,3,4\}$ with $p$ odd, where we solve the problem of $p$ being odd by introducing a triangle inside the $p$-gon, just as we did in our construction in the last section.

These codes will be built both on the torus and on other compact orientable surfaces with genus $g\geq 2$. However, in the case of the torus, we only have one example for the first family, which is starting from the tessellation $\{6,12,4\}$, one example for the second family, which is starting from the tessellation $\{12,4 .6\}$, and no examples for the third and fourth families.

\subsection{Topological subsystem codes from tessellations $\{2p_1,2p_2,4\}$}

Consider the trivalent, $3$-colorable tessellation $\{2p_1,2p_2,4\}$ with odd $p_1>2$ on a compact orientable surface $\Msup$. Using $F = F_R$, construct the hypergraph $\Gamma_h$ as seen in Section \ref{construcaomeussubcode}. However, here, the way we choose the edges of $f'$, where we will take the vertices of the triangle that we insert inside it, cannot be done randomly.

As we saw in Section \ref{codigosubsistemtopolo}, when constructing the hypergraph $\Gamma_h$ for $p_1>2$ even and $F=F_R$, we have that each $f\in F_R$ determines three stabilizer generators (2 independent). Furthermore, because we consider $F=F_R$, we also have that each $f\in F_G$ determines three stabilizer generators (2 independent), as can be seen in Figure \ref{3hiperciclosSarvepalli}. In the proof of the Theorem \ref{teo5indiano}, which can be seen in \cite{sarvepalli}, it is shown that the latter are stabilizing generators.

\begin{figure}[!h!]
	\centering
	\includegraphics[scale=0.9]{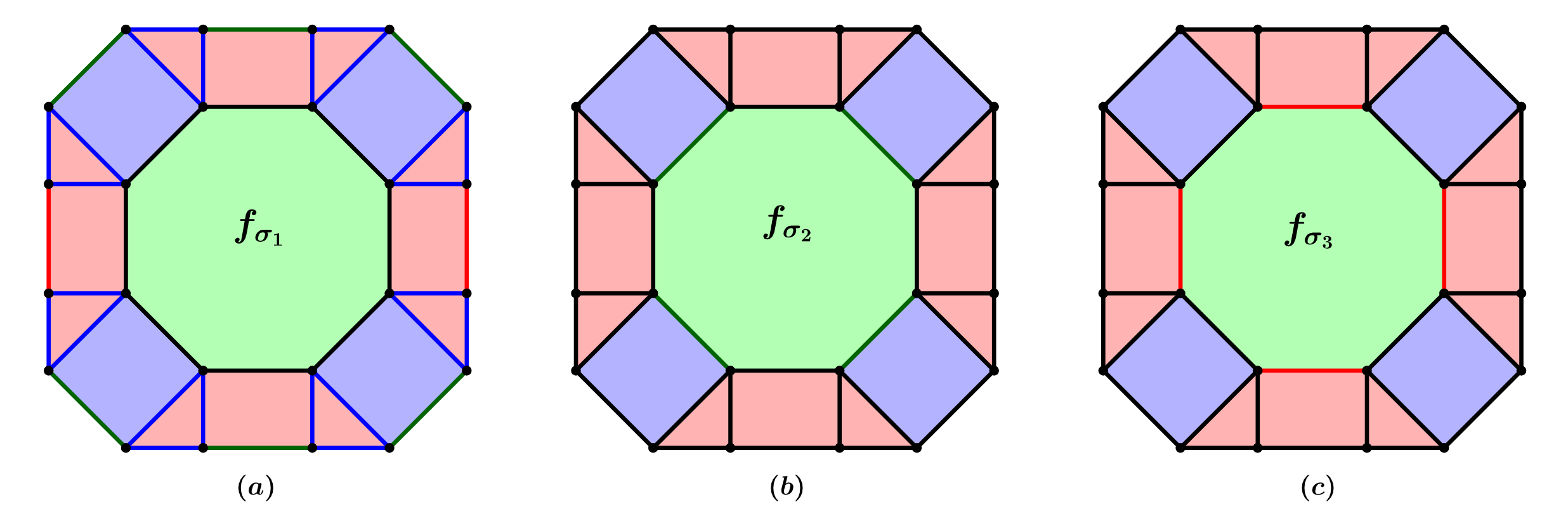} 
	\caption{Hypercycles of the face $f\in F_G$ over the hypergraph $\Gamma_h$ obtained from tessellation $\{8,8,4\}$. In black, we have the hypercycles: $(a)$ $f_{\sigma_1}$ given by the rank-$2$ edges of $f$; $(b)$ $f_{\sigma_2}$ given by the rank-$3$ edges around $f$, by the rank-2 red edges of $f$ and other red and green edges around $f$, necessary to complete the hypercycle; $(c)$ $f_{\sigma_3}$ sum modulo $2$ of $f_{\sigma_1}$ and $f_{\sigma_2}$.} \label{3hiperciclosSarvepalli}
\end{figure}

For our construction, we also want that each $f \in F_G$ provide us with three stabilizer generators ($2$ independent). But, when we add the triangles inside $f_i'$, we are creating an additional edge to the hypercycle involving the triangles around some faces $f\in F_G$. This will be a problem because, in this way, the stabilizer coming from this hypercycle will not satisfy the Theorem \ref{teosindrome}; therefore, it does not satisfy $(C_2)$, and consequently, we will not be able to obtain topological subsystem codes. To solve this, we impose the restriction that these triangles be constructed in $f_i'$, so that it does not increase new vertices or increase an even number of vertices in each hypercycle involving the triangles around $f \in F_G$, as can be seen in Figure \ref{10-8-4-hyperbolico}, where we added four new vertices to the hypercycle involving the triangles around $f$.

\begin{figure}[!h!]
	\centering
	\includegraphics[scale=7.0]{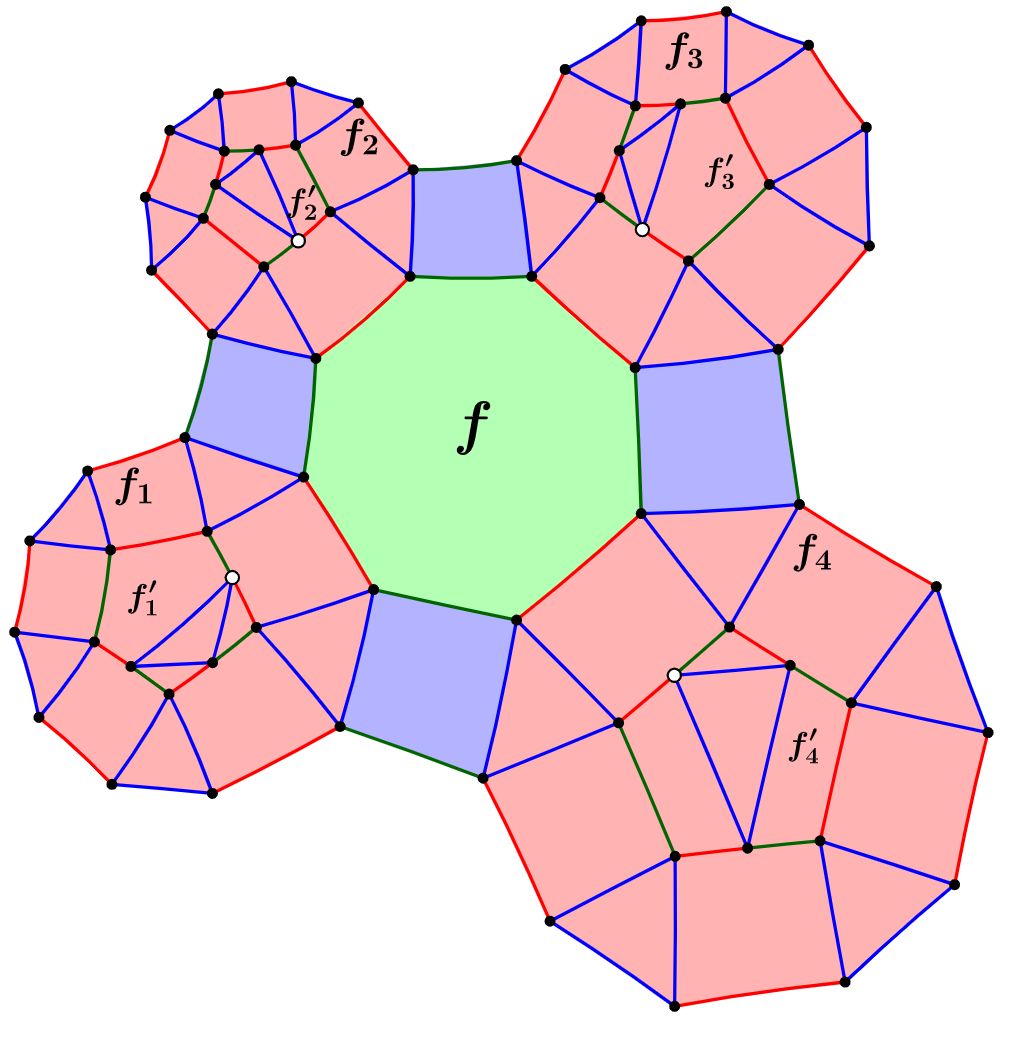} 
	\caption{Part of a hypergraph $\Gamma_h$ obtained from hyperbolic tessellation $\{10,8,4\}$, where we added $4$ new vertices (highlighted in white), that is, $4$ new edges for the hypercycle involving the triangles around $f\in F_G$.} \label{10-8-4-hiperbolico}
\end{figure}

If no vertex is increased in the hypercycle $f_{\sigma_2}$ with $f\in F_G$, we have that the loop operator will be given by
\begin{equation}\label{equa1}
W(f_{\sigma_2})=\prod_{e'\in \partial(\overline{f})\cap\overline{E}_B} \overline{K}_{e'} \prod_{e'\in \partial(\overline{f})\cap\overline{E}_R} \overline{K}_{e'} \prod_{e'\in \partial(\overline{f})\cap\overline{E}_G} \overline{K}_{e'} \prod_{e'\in \partial(f)\cap\overline{E}_R} \overline{K}_{e'},
\end{equation}
where $\overline{f}$ is the ``face" composed of the rank-$2$ edges of $\overline{\Gamma}_h$ that encompass the face $f\in F_G$ (see Figure \ref{facedefbarra1}), and are used in the hypercycles $f_{\sigma_2}$ and $f_{\sigma_3}$,

\begin{figure}[!h!]
	\centering
	\includegraphics[scale=0.9]{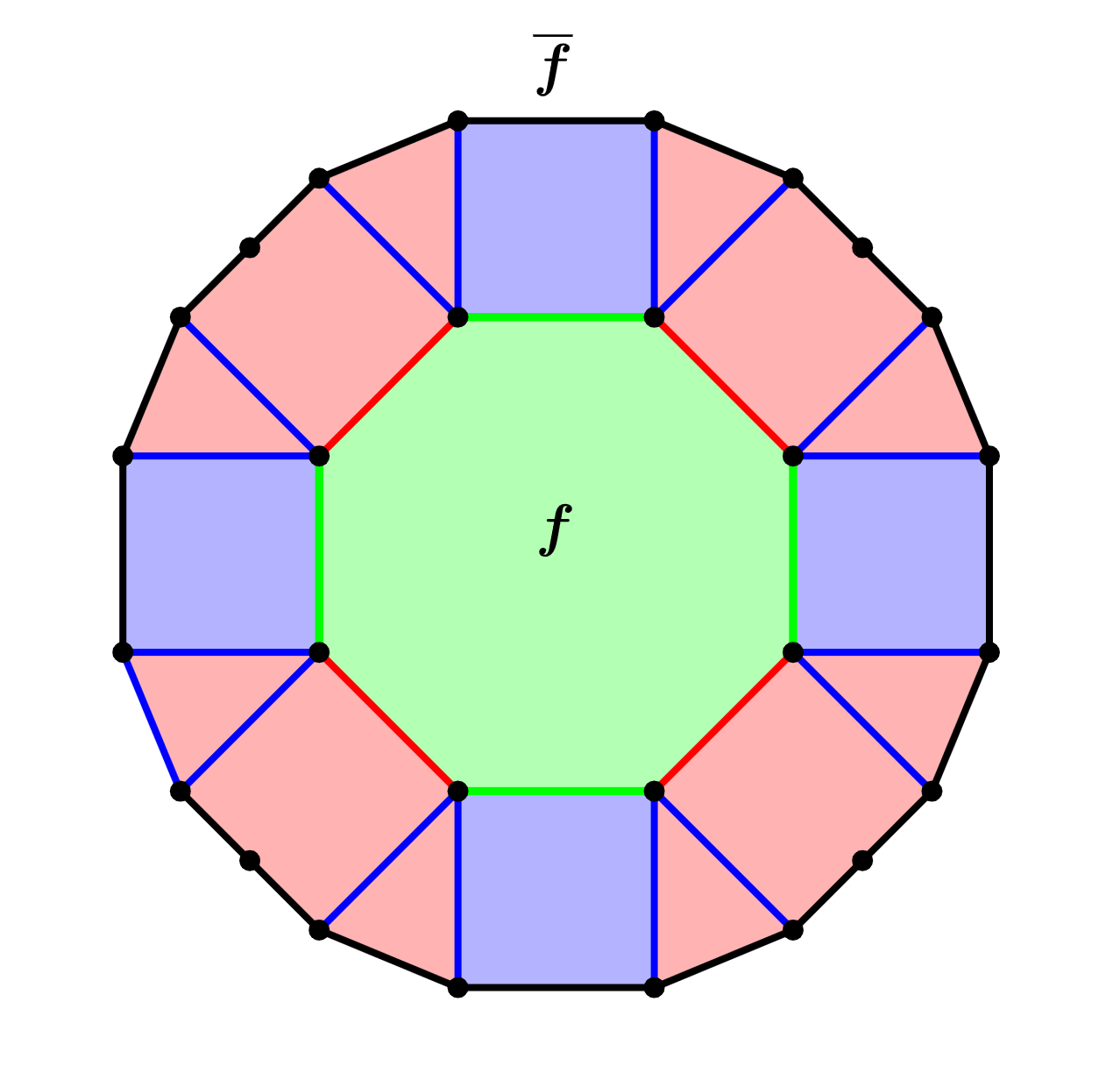} 
	\caption{Representation of the ``face" $\overline{f}$, which is composed of the black edges of $\overline{\Gamma}_h$ that encompass the face $f\in F_G$.} \label{facedefbarra1}
\end{figure}
 
If the number of vertices increases by an even amount, we will not be able to write the loop operator as in the equation (\ref{equa1}), separating by the colors of the edges, but this will not prevent the loop operator from satisfying the Theorem \ref{teosindrome}.

\begin{Exam}
Consider the face $f$ given in Figure $\ref{exemplo10-8-4}$-(a), which represents a face of $F_G$ arising from the construction of the hypergraph $\Gamma_h$ over the hyperbolic tessellation $\{10,8,4\}$. If we have no new vertices on the rank-$2$ edges of $\overline{f}$, it follows that the hypercycle $f_{\sigma_2}$ surrounding the face $f\in F_G$ gives the loop operator
\begin{eqnarray*}
W(f_{\sigma_2}) & = & (Z_1Z_2)(Z_4Z_5)(Z_7Z_8)(Z_{10}Z_{11})(Z_{13}Z_{14})(Z_{16}Z_{17})(Z_{19}Z_{20})(Z_{22}Z_{23})\\
                &   & (X_1X_{23})(X_5X_{7})(X_{11}X_{13})(X_{17}X_{19})(Y_2Y_4)(Y_8Y_{10})(Y_{14}Y_{16})(Y_{20}Y_{22})\\
                &   & (X_3X_{24})(X_6X_{9})(X_{12}X_{15})(X_{18}X_{21}).
\end{eqnarray*}
Considering that only one vertex increases (qubit $25$ in Figure $\ref{exemplo10-8-4}$-(b)), it will not be possible to write the $21$ edges in such a way that they satisfy the Theorem $\ref{teosindrome}$, because we have an odd amount of the rank-$2$ edges of $\overline{f}$, so they will never commute. Therefore, we require that the number of vertices, if it increases, be even.

On the other hand, in Figure $\ref{example10-8-4}$-(c) increased two vertices (qubits $25$ and $26$). Therefore, the loop operator will be given by
\begin{eqnarray*}
W(f_{\sigma_2}) & = & (Z_1Z_2)(Z_4Z_5)(X_7X_{26})(Y_8Y_{10})(X_{11}X_{13})(Y_{14}Y_{16})(X_{17}X_{19})(Y_{20}Y_{22})(X_{23}X_{25})\\
                &   & (Y_1Y_{25})(Y_2Y_4)(Y_5Y_{26})(Z_7Z_8)(Z_{10}Z_{11})(Z_{13}Z_{14})(Z_{16}Z_{17})(Z_{19}Z_{20})(Z_{22}Z_{23})\\
                &   & (X_3X_{24})(X_6X_{9})(X_{12}X_{15})(X_{18}X_{21}).
\end{eqnarray*}
Note that we cannot separate the link operators by colors as in the $(\ref{equa1})$ equation, but even so, this operator satisfies Theorem $\ref{teosindrome}$.
\end{Exam}

\begin{figure}[!h]
	\centering
	\includegraphics[scale=0.7]{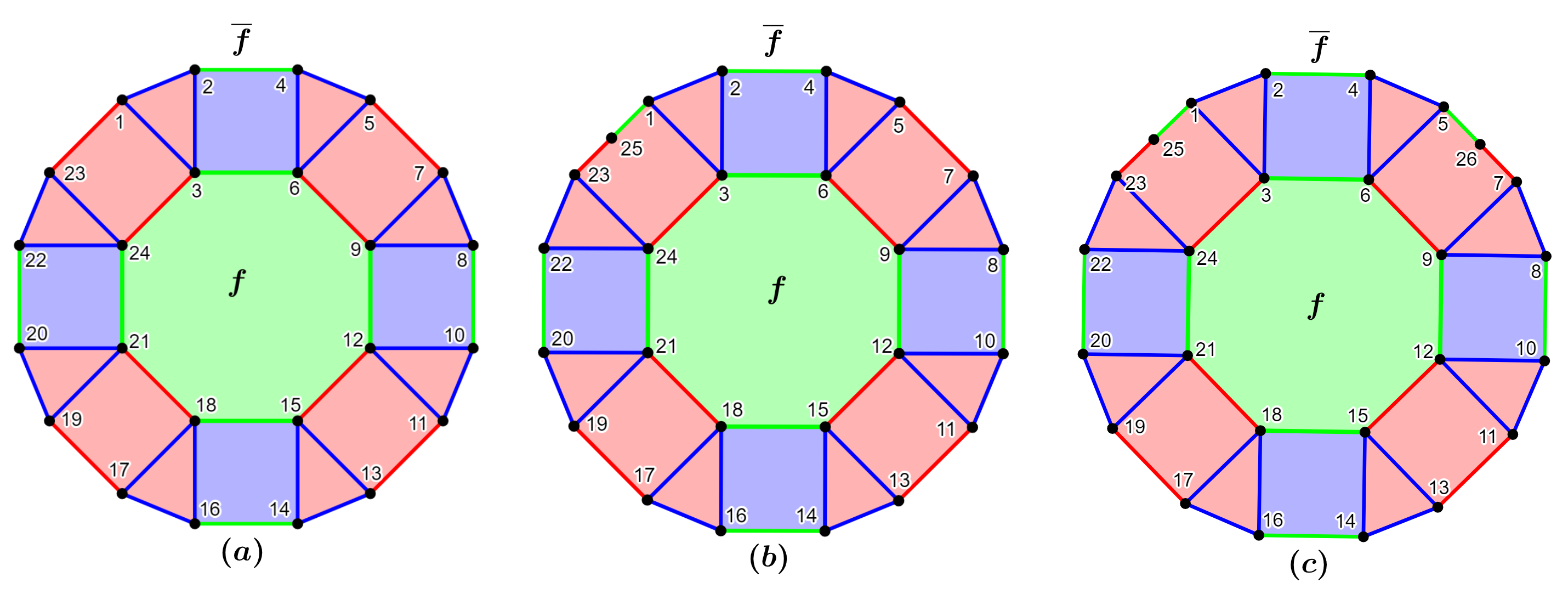} 
	\caption{Representation of a face $f\in F_G$ resulting from the construction of the hypergraph $\Gamma_h$ over tessellation $\{10,8,4\}$. $(a)$ No new vertices, $(b)$ one new vertex, and $(c)$ two new vertices on the rank-$2$ edges of $\overline{f}$.} \label{exemplo10-8-4}
\end{figure}

This guarantees that we can make this restriction regarding the way we fix the triangles inside $f_i'$ so that the hypercycles of $f\in F_G$ provide stabilizers satisfying the Theorem \ref{teosindrome}, given the fact that we have a finite amount of $2p_1$-gons and $2p_2$-gons. Therefore, we have that each $f\in F_G$ will provide three stabilizer generators ($2$ independent).

Thus, the stabilizer generators of this code will be given by emma \ref{estabilizadoresubsistem}, by the loop operators $W(f_{\sigma_2})$ and by $W(f_{\sigma_3})$ with $f\in F_G$, where $W(f_{\sigma_2})$ is the loop operator just described, and $W(f_{\sigma_3})$ is the product of $W(f_{\sigma_1})$ by $W(f_{\sigma_2})$. Therefore, this subsystem code satisfies the condition $(C_2)$.

Let us determine the number of independent stabilizer generators. It follows from what we have just seen and the Lemma \ref{estabilizadoresubsistem} that each $f\in F_R$ determines $2$ independent stabilizer generators, and each $f\in F_G$ also determines $2$ independent stabilizer generators. However, when we add all these stabilizers together, not all of them will be independent, as we have the following valid independence relationship
\begin{equation}\label{relacaosubsistem}
\prod_{f\in F_G} W(f_{\sigma_2}) = \prod_{f\in F_R} W(f_{\sigma_1}).
\end{equation}

For the case of $p_1$ even, that is, for the Theorem \ref{teo5indiano}, other independence relations may exist, as can be seen in the proof of this theorem in \cite{sarvepalli}. However, for our construction, the only valid relation in this case is the relation (\ref{relacaosubsistem}). In this way, the number of independent stabilizing generators will be
\begin{equation}\label{gerainde2p12p24}
s=2F_R+2F_G-1.
\end{equation}

Let's now determine the parameters $n$, $k$, $r$, and $d$ of our subsystem code. As we have only one Euclidean case, which is given by $\{6,12,4\}$, the determination of the parameters is made in relation to tessellations embedded in compact orientable surfaces of genus $g\geq 2$. However, the same construction applies to the case $\{6,12,4\}$, with the difference that we need to count the number of faces manually. As we know the values of $N_v$ and $F_R$, we can determine the quantity $n$ of physical qubits, which is given by the number of vertices of the hypergraph $\Gamma_h$, that is,
\begin{equation}\label{nsub}
n = N_v + p_1F_R + 3F_R = \frac{12(p_1p_2+2p_2)(g-1)}{p_1p_2-2p_1-2p_2}.
\end{equation}

We also know that the quantity of independent stabilizing generators $s$ is given in (\ref{gerainde2p12p24}) and that $\dim\G=2r+s$, so we will determine the quantity $r$ of gauge qubits. Remembering that the gauge group $\G$ is given by $\G=\langle\overline{K}_{e'};\ e'\in\overline{E}\rangle$.

To determine $\dim\G$, we need to determine the number of independent generators, that is, the number of independent link operators. As quoted in Section \ref{cap5}, the rank-$3$ edges, that is, the $\Gamma_h$ triangles, when transformed into three rank-$2$ edges, provide us with only two independent link operators. Combining these two independent link operators with all the link operators of $\Gamma_h$, we also have that they will not all be independent, as they obey the independence relations obtained for $s$, which in this case is given only by the equation (\ref {relacaosubsistem}). Thus, we have that the number of independent generators of $\G$ will be the number of rank-$2$ edges of $\Gamma_h$, plus twice the number of rank-$3$ edges, minus the number of independence relations. We have that the rank-$2$ and rank-$3$ edges for the hypergraph $\Gamma_h$, obtained by our construction on the tessellation $\{2p_1,2p_2,4\}$, are given by
\begin{equation}
E_2 = 2p_2F_G + (p_1 + 3 )F_R\ \ \textrm{e}\ \ E_3 = (p_1 + 1)F_R. 
\end{equation}
Thus,
\begin{eqnarray*}
\dim\G & = & 2p_2F_G + (p_1+3)F_R + 2(p_1+1)F_R-1\\
       & = & 2p_2F_G + 3p_1F_R+5F_R - 1\\
       & = & N_v+N_e+5F_r-1,
\end{eqnarray*}
since $2p_2F_G = N_v$ e $3p_1F_R = N_e$. As $\dim\G=2r+s$ e $s=2F_G+2F_R-1$, we have
\begin{eqnarray*}
2r+2F_G+2F_R-1 & = & N_v+N_e+5F_R-1 \Leftrightarrow\\
2r & = & N_v+N_e+5F_R-2F_G-2F_R-2F_B+2F_B+2N_v-2N_v\\
   & = & -2(N_v-N_e+N_f)+2N_v+5F_R\\
   & = & -2\chi +2N_v+5F_R,
\end{eqnarray*}
since $F_G+F_R+F_B = N_f$ e $2F_B+N_v=N_e$. Therefore, 
\begin{equation}\label{rsub}
r = -\chi +N_v+\frac{5F_R}{2},
\end{equation}
which will only make sense if $r$ is a positive integer. As we already know the values of $n$, $r$, and $s$, it is easy to determine the value of $k$, as it is given by $k=n-r-s$. So,
\begin{eqnarray*}
k & = & N_v+p_1F_R+3F_R+\chi - N_v-\frac{5F_R}{2}-2F_R-2F_G+1\\
  & = & N_e-N_v-2F_R-2F_G-2F_B+2F_B+3F_R+\chi -\frac{5F_R}{2}+1\\
  & = & \frac{-2N_f+4F_B+6F_R-5F_R+2+2N_v-2N_v+2N_e-2N_e}{2}\\
  & = & \frac{-2(N_v-N_e+N_f)+3N_v-2N_e+F_R+2}{2}\\
  & = & -\chi + 1 + \frac{F_R}{2},
\end{eqnarray*}
since $N_v+p_1F_R = N_e$, $4F_B=N_v$ e $3N_v-2N_e=0$. Thus, 
\begin{equation}\label{ksub}
k=-\chi + 1 + \frac{F_r}{2},
\end{equation}
which will only make sense if $k$ is a positive integer. Finally, the distance $d$ follows in a similar way to what was done in Section \ref{cap5}, that is, we will limit it by the smallest number of triangles in a homological non-trivial closed hypercycle, that is, it will be the smallest weight among the operators of $C(\G)\setminus\textbf{S}$.

The next two results are used to prove that $\Gamma_h$ satisfies the conditions $(C_1)$ and $(C_3)$ of the topological subsystem code definition. These results can be seen in \cite{sarvepalli}, but here we will prove that they are valid for our construction.

\begin{lema}\label{lemasub1}
Let $\sigma$ be a homologically non-trivial closed hypercycle of $\Gamma_h$. Then $\sigma$ must contain edges of rank-$3$.
\end{lema}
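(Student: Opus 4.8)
Looking at this statement, I need to prove that any homologically non-trivial closed hypercycle must contain rank-3 edges.

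The plan is to argue by contraposition: I will show that any closed hypercycle $\sigma$ consisting only of rank-$2$ edges is homologically trivial, i.e., it is a boundary and hence bounds a region in the surface. The key observation is that the rank-$2$ edges of $\Gamma_h$ fall into a controlled structure: the red, green, and blue edges around each face, together with the edges of the inserted faces $f'$. Since $\sigma$ avoids all rank-$3$ edges (the triangles), at every vertex $v$ of $\Gamma_h$ the hypercycle condition forces an even number of rank-$2$ edges of $\sigma$ incident to $v$; but each vertex is trivalent with exactly one incident rank-$3$ edge among its three edges in the relevant cases (or the vertex belongs to a triangle), so $\sigma$ restricted to the two non-triangle edges at such a vertex must use both or neither. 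Tracing this local constraint, $\sigma$ decomposes into a union of boundaries of faces: specifically, the rank-$2$ edges on the boundary of each original polygon and of each $f'$.

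First I would set up the correspondence between closed hypercycles avoiding $E_3$ and $\mathbb{Z}_2$-cycles in the rank-$2$ subgraph, and recall from the construction that the rank-$2$ edges form, up to the insertion modifications, essentially the $1$-skeleton pattern of the colored tessellation $\{2p_1,2p_2,2p_3\}$ with each $f'$ glued in. Second, I would show that the $\mathbb{Z}_2$-homology of this rank-$2$ subgraph is generated by face boundaries: each such $\sigma$ is a sum (mod $2$) of boundaries $\partial f$ of faces and boundaries $\partial f'$ of inserted faces, which are all homologically trivial since they bound $2$-cells. This uses that $W(f_{1\sigma_1})$-type hypercycles (boundaries of $f'$) and the polygon boundaries generate all rank-$2$-only cycles. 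Third, I would conclude that $\sigma$, being a sum of null-homologous cycles, is itself null-homologous, contradicting the hypothesis; hence a homologically non-trivial $\sigma$ must use a rank-$3$ edge.

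The main obstacle I expect is verifying rigorously that the rank-$2$-only cycle space is exactly spanned by the face boundaries, rather than containing some extra cycle that wraps around a handle of the surface using only rank-$2$ edges. This requires a careful accounting: one must check that any rank-$2$ edge of $\Gamma_h$ lies on the boundary of a well-defined local region (an original polygon minus its inserted gadget, or an inserted face $f'$, or the triangle-free part around a blue/green polygon), and that the collection of these regions tiles the surface. In other words, the rank-$2$ subgraph, together with these regions as $2$-cells, still forms a CW decomposition of the same surface, so its $1$-cycles coming from this subcomplex are all boundaries; any non-trivial homology class must therefore ``see'' a triangle. I would carry out this regional decomposition by going case-by-case through the types of rank-$2$ edges introduced in the construction (original tessellation edges recolored, edges of $f'$, and the two new edges created when splitting an edge of $f'$ for the internal triangle), checking that removing the triangles leaves the surface partitioned into contractible pieces whose boundaries generate the cycle space.
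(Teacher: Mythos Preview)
Your contrapositive strategy and your local observation are exactly right, and they match the paper's proof: since every vertex of $\Gamma_h$ is trivalent with precisely one incident rank-$3$ edge, the rank-$2$ subgraph is $2$-regular, hence a disjoint union of simple cycles. The paper then identifies these cycles directly: any such cycle through a vertex of some $f'$ is forced to be the full boundary $\partial f' = f_{1\sigma_1}$, and after peeling those off, any remaining cycle through an original tessellation vertex is forced to be the full boundary $\partial f_2 = f_{2\sigma_1}$ of a green face. Each of these bounds a disk, so $\sigma$ is null-homologous --- contradiction. You already sketched this in your first paragraph (``Tracing this local constraint, $\sigma$ decomposes into a union of boundaries of faces''), and that sketch is the complete argument.

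Where your proposal goes wrong is in the third paragraph, the verification you propose for the ``main obstacle''. You suggest showing that the rank-$2$ subgraph together with the complementary regions forms a CW decomposition of the surface with contractible $2$-cells, and concluding that therefore ``its $1$-cycles coming from this subcomplex are all boundaries''. This implication is backward. If the rank-$2$ subgraph really were the $1$-skeleton of a CW decomposition of a genus-$g$ surface with $g\geq 1$, then $H_1$ of that complex would be $(\mathbb{Z}/2)^{2g}\neq 0$, and there \emph{would} exist a rank-$2$-only cycle that is homologically non-trivial --- the opposite of what you want. In fact the complementary regions do \emph{not} all have disk type: the rank-$2$ subgraph is a disjoint collection of circles, each bounding a small disk (an $f'$ or a green face), and the remaining big complementary region carries all of the surface's topology. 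That is precisely why no rank-$2$ cycle can detect the handles: the cycle space of a disjoint union of circles is spanned by those circles themselves, and here each one is null-homologous. So drop the CW-decomposition plan and instead make explicit the $2$-regularity $\Rightarrow$ disjoint-circles step, then identify each circle as $\partial f'$ or $\partial f_2$ as the paper does.
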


\begin{proof}
Suppose the hypercycle $\sigma$ does not contain rank-$3$ edges. It follows from our construction that every vertex of the hypergraph $\Gamma_h$ is trivalent and has a rank-$3$ edge incident on it. We also have that $\Gamma_h$ satisfies $(H_5)$, and the rank-$3$ edges are all blue. Thus, $\sigma$ is formed by rank-$2$ edges that have alternating colors between red and green.

If $\sigma$ contains some vertex $v \in f_1'$ for $f_1'$ inserted in $f_1 \in F = F_R$, then all $p_1 + 3$ vertices of $f_1'$ are in $\sigma$, that is, the hypercycle $f_{1\sigma_1}$, which is homologically trivial, is part of $\sigma$. We can make the sum modulo $2$ of $\sigma$ with $f_{1\sigma_1}$, thus discarding $f_{1\sigma_1}$ of $\sigma$. This way, only hypercycles that do not contain vertices of $f_1'$ will remain in $\sigma$. We can do this for all faces $f'$ that have vertices in $\sigma$. Therefore, the remaining vertices of $\sigma$ will be vertices of faces of $F_G$. Again, due to the fact that $\sigma$ does not have rank-$3$ edges, it follows that it will have all $2p_2$ vertices of some face $f_2\in F_G$, that is, it has $f_{2\sigma_1}$. Then, $\sigma$ is a homologically trivial hypercycle, which is absurd. Therefore, $\sigma$ must contain rank-$3$ edges.
\end{proof}

\begin{lema}\label{lemasub2}
Let $\sigma$ be a homologically non-trivial closed hypercycle of $\Gamma_h$. Then, $W(\sigma)$ is not in the gauge group $\G$.
\end{lema}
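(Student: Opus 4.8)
The plan is to argue by contradiction and reduce the assertion ``$W(\sigma)\notin\G$'' to the fact, established earlier in this subsection, that $\textbf{S}$ is generated by \emph{homologically trivial} loops. First I would note that, $\sigma$ being a closed hypercycle, $W(\sigma)$ always lies in $\G_{loop}$; so if in addition $W(\sigma)\in\G=C(\G_{loop})$, then $W(\sigma)\in\G_{loop}\cap C(\G_{loop})=\textbf{S}$, i.e.\ $W(\sigma)$ would be a stabilizer. Thus it is enough to show that a homologically non-trivial closed hypercycle cannot yield a stabilizer --- and Lemma~\ref{lemasub1} already tells us such a $\sigma$ really does involve rank-$3$ edges, so there is genuine content to rule out.

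Second, I would invoke the explicit description of $\textbf{S}$ obtained above: by Lemma~\ref{estabilizadoresubsistem}, the discussion that follows it (each $f\in F_G$ also contributes two independent stabilizer generators $W(f_{\sigma_1}),W(f_{\sigma_2})$ once the surrounding triangles are arranged suitably), and the count \eqref{gerainde2p12p24}, the stabilizer group is generated, modulo scalars, by the local loop operators $W(f_{\sigma_1}),W(f_{\sigma_2})$ with $f\in F_R$ and $W(f_{\sigma_1}),W(f_{\sigma_2})$ with $f\in F_G$, the only dependence among them being \eqref{relacaosubsistem} (which is why there are exactly $s=2F_R+2F_G-1$). Each of the underlying hypercycles is homologically trivial: $f_{\sigma_1}$ bounds the disk $f'$ (respectively $f$), and $f_{\sigma_2}$ is a cycle contained in a disk (the face $f$ together with its neighbors, resp.\ $\overline{f}$ together with the faces meeting it), hence also bounds.

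The key algebraic step is that $M\mapsto W(M)$, composed with the quotient that forgets the global phase, is an \emph{injective} $\Z_2$-linear map from the space of closed hypercycles of $\Gamma_h$ into the phase-free Pauli group. It is additive because an edge lying in both $M$ and $M'$ contributes the square of a Hermitian Pauli, hence nothing modulo phase, so $W(M)W(M')$ and $W(M\,\triangle\,M')$ (symmetric difference) agree up to phase; and it is injective because if $W(M)$ is scalar then, reading off any qubit $v$, the operators $\overline{K}_{e'}$ with $e'\in M$ incident to $v$ must multiply to a scalar, while $v$ lies on $0$ or $2$ edges of $M$ ($\Gamma_h$ is trivalent and $M$ is closed), and if on $2$ then by $3$-colorability $(H_5)$ those two edges act at $v$ through two distinct letters among $X,Y,Z$, whose product is a non-identity single-qubit Pauli --- forcing $M=\emptyset$. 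Granting this, if $W(\sigma)\in\textbf{S}$ then, modulo phase, $W(\sigma)$ equals a product of some of the generators $W(f_{\sigma_i})$; additivity and injectivity then force $\sigma$ to be the mod-$2$ sum of the corresponding homologically trivial hypercycles, so $[\sigma]=0$, contradicting the hypothesis on $\sigma$.

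I expect the main obstacle to be the second step: making sure the explicitly listed local loop operators really exhaust $\textbf{S}$, and not merely a proper subgroup. This rests on the earlier rank computation $s=2F_R+2F_G-1$ together with the non-obvious fact, argued just above, that for $f\in F_G$ the triangles around $f$ were chosen so that $W(f_{\sigma_2})$ is a bona fide stabilizer generator; if one instead tries to attack the lemma directly --- by exhibiting a closed hypercycle sharing an odd number of rank-$3$ edges with $\sigma$ --- the combinatorics of how the triangles inside the faces of $F_R$ link up becomes much harder to control. A secondary, routine point is verifying the injectivity computation vertex by vertex, where one uses precisely that $(H_5)$ forbids two equally colored edges at a vertex and that the rank-$3$ edges of $\Gamma_h$, being single edges there, do not disturb the argument.
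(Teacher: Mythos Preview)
Your reduction $W(\sigma)\in\G \Rightarrow W(\sigma)\in\textbf{S}$ and the injectivity of $M\mapsto W(M)$ modulo phase are both sound. The gap is your second step, and it is exactly the obstacle you flag but do not actually close: you need that the listed local loop operators $W(f_{\sigma_i})$ already generate all of $\textbf{S}$. Citing the ``rank computation'' \eqref{gerainde2p12p24} does not help, because $s=2F_R+2F_G-1$ is an \emph{assertion} in the text, not an independent calculation of $\dim\textbf{S}$ --- the paper simply names these local operators and counts them, subject to the one relation \eqref{relacaosubsistem}. If some homologically non-trivial $\sigma$ did satisfy $W(\sigma)\in\textbf{S}$, then $\textbf{S}$ would strictly contain the subgroup generated by the local $W(f_{\sigma_i})$, and the true value of $s$ would be larger; nothing preceding Lemma~\ref{lemasub2} rules this out. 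So the argument is circular: showing that no non-trivial hypercycle contributes to $\textbf{S}$ is precisely what is needed to justify that the local generators exhaust $\textbf{S}$, not the other way round.

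The paper's proof avoids this by working directly inside $\G$ rather than inside $\textbf{S}$. Assuming $W(\sigma)\in\G$, it strips off the rank-$2$ part to obtain a pure $Z$-type operator $O_\sigma=\prod_{e'\in E_3\cap\sigma}K_{e'}\in\G$, and then splits any expression of $O_\sigma$ as a product of link operators into its $X/Y$-type and $Z$-type factors $K^{(X,Y)}K^{(Z)}$. The edges supporting $K^{(X,Y)}$ are forced to form a closed hypercycle with no rank-$3$ edges, which Lemma~\ref{lemasub1} shows is homologically trivial --- a union of boundaries of faces $f\in F_G$ and $f'_i$ with $f_i\in F_R$. From this collection the paper assembles, face by face, an explicit product of local stabilizers $W(f_{\sigma_i})$ whose triangle content matches that of $O_\sigma$, thereby exhibiting $\sigma$ as a combination of trivial hypercycles. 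In other words, the paper \emph{proves} from scratch the generation statement that you take as input.
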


\begin{proof}
Suppose $W(\sigma)\in \G$. Then 
\[
W(\sigma) = \prod_{e'\in E_2\cap\sigma} K_{e'}\prod_{e'\in E_3\cap\sigma} K_{e'}\in\G,
\]
where $E_2$ and $E_3$ are the sets of rank-$2$ and rank-$3$ edges of $\Gamma_h$, respectively. We know that the edges of $E_2\cap\sigma$ are edges of $\overline{\Gamma}_h$, and the link operators $K_{e'}$ and $\overline{K}_{e'}$ are the same as $e'\in E_2\cap\sigma$. Thus, taking the product of $W(\sigma)\in\G$ and $\prod_{e'\in E_2\cap\sigma} \overline{K}_{e'}\in\G$, it follows that the operator of the type $Z$ 
\[
O_\sigma = \prod_{e'\in E_3\cap\sigma} K_{e'} = W(\sigma)\prod_{e'\in E_2\cap\sigma} \overline{K}_{e'}\in\G,
\]
that is, it will be generated by the link operators of the form $\{X\otimes X,Y\otimes Y, Z\otimes Z\}$.

Suppose $O_{\sigma}$ is generated by $O_{\sigma} = K^{(X,Y)}K^{(Z)}$, where $K^{(X,Z)}$ consists only of operators of the form $X\otimes X$ and $Y\otimes Y$. $K^{(Z)}$ only consists of operators of the form $Z\otimes Z$. Therefore, $O_{\sigma}K^{(Z)}=K^{(X,Y)}$ and the edges in the support of $K^{(X,Y)}$ must form a closed hypercycle consisting only of rank-$2$ edges. According to the previous lemma, this closed hypercycle is homologically trivial and consists of the union of hypercycles that are boundaries of a collection of faces $f\in F_G$ and $f_{i}'$ with $f_i\in F_R$. We denote this collection by $\F$. Thus, the rank-$3$ edges of $\sigma$ are incident on the vertices of this hypercycle obtained through $K^{(X,Y)}$.

Consider the operator $O'=\prod_{f\in\F} W(f_{\sigma_i})$, where $f_{\sigma_i}$ is a hypercycle associated with $f$, which contains all edges of rank-$3$ of $\sigma$.

\noindent Claim: The rank-$3$ edges, which are incident on the vertices in the support of $K^{(X,Y)}$ but are not in $\sigma$, are also not in $O'$.

\noindent Indeed, suppose that $v$ is not a vertex in the support of $O_{\sigma}$ but is a vertex in the support of $K^{(X,Y)}$. Then, the three gauge operators $X_vX_i$, $Y_vY_j$, and $Z_vZ_u$ act on $v$. Therefore, suppose, without loss of generality, that $u$ is a vertex in the support of $K^{(Z)}$. Thus, $u$ is not a vertex in the support of $O_{\sigma}$, otherwise, as $u$ belongs to the support of $K^{(Z)}$, then $v$ would belong to the support of $O_{\sigma}$, which is absurd.

Now, let $e'\in E_3\setminus\sigma$ be such that $e'$ has exactly two vertices in the support of $O_{\sigma}K^{(Z)}=K^{(X,Y)}$. Therefore, there are two faces $f_a$ and $f_b$ associated with these two vertices. Thus, there is a hypercycle $f_{a\sigma_2}$ that encompasses $f_a$ and contains $e'$, and there is a hypercycle $f_{b\sigma_2}$ that encompasses $f_b$ and contains $e'$. So the operator $W(f_{a\sigma_2})W(f_{b\sigma_2})$ is not supported in $e'$. Therefore, the vertices of $e'$ are not in the support of $O'$.

Continuing this procedure, we obtain stabilizers that are supported on the same set of edges as $O_{\sigma}$, which implies that $\sigma$ is homologically trivial, as it will be generated by a combination of trivial hypercycles, which is absurd. Therefore, $W(\sigma)\not\in\G$.
\end{proof}

\begin{Rema}
Lemma $\ref{lemasub1}$ guarantees that the homologically non-trivial closed hypercycles $\sigma$ have edges of rank-$3$. This is very important, because otherwise $W(\sigma)$ would be a stabilizer and $(C_1)$ and $(C_3)$ would not be satisfied. The Lemma $\ref{lemasub2}$ guarantees that $W(\sigma)\not\in\G =C(\G_{loop})$, thus guaranteeing that logical operators can be identified with homologically closed hypercycles that are not trivial, and stabilizer operators can be identified with homologically trivial closed hypercycles, that is, $(C_1)$ and $(C_3)$ are satisfied.
\end{Rema}

\begin{Theo}\label{T1meusub}
Consider a trivalent, $3$-colorable tessellation $\{2p_1,2p_2,4\}$ with odd $p_1>2$. Applying the construction of $\Gamma_h$ with $F=F_R$ seen in the section $\ref{construcaomeussubcode}$, we obtain topological subsystem codes with parameters
\begin{eqnarray}\label{parasub}
[[N_e+3F_R,-\chi +1+\frac{F_R}{2},-\chi + N_v+\frac{5F_R}{2},d\leq l]]
\end{eqnarray}
where $l$ is the number of triangles in a homologically non-trivial hypercycle.
\end{Theo}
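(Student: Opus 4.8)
The plan is to verify the three defining properties $(C_1),(C_2),(C_3)$ of Definition \ref{defisubtopolo} for the code attached to $\Gamma_h$ and then to read off $n,k,r,d$ from the computations carried out just above the statement. The starting point is Theorem \ref{satisfazasrestricoes}: since the hypergraph $\Gamma_h$ built from $\{2p_1,2p_2,4\}$ with odd $p_1>2$ and $F=F_R$ satisfies $(H_1),\ldots,(H_5)$, it already defines a subsystem code with loop group $\G_{loop}$, gauge group $\G=C(\G_{loop})$, and stabilizer $\textbf{S}=\G_{loop}\cap C(\G_{loop})$; so only the topological nature of the code and the values of its parameters remain to be established.

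For property $(C_2)$ I would exhibit a generating set of $\textbf{S}$ satisfying Theorem \ref{teosindrome}. By Lemma \ref{estabilizadoresubsistem} each $f\in F_R$ contributes the operators $W(f_{\sigma_1}),W(f_{\sigma_2})$ of \ref{hiperciclo1}--\ref{hiperciclo2} and each $f\in F_G$ contributes $W(f_{\sigma_1})$ of \ref{hiperciclo3}; moreover, because the inner triangle of every $f_i'$ can be placed so that each hypercycle surrounding an $F_G$-face acquires an even number of extra vertices (possible since there are only finitely many $2p_1$- and $2p_2$-gons), each $f\in F_G$ also contributes $W(f_{\sigma_2}),W(f_{\sigma_3})$, which can be written as ordered products of link operators as in \ref{equa1}. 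By Lemma \ref{lemac2}, by the explicit orderings in \ref{hiperciclo1}--\ref{hiperciclo3} and \ref{equa1}, and by Theorem \ref{teosindrome}, every generator in this list is a product of link operators obeying \ref{regrasindrome}; hence each stabilizer eigenvalue can be extracted by $2$-local measurements, which is $(C_2)$.

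The parameter count then assembles the identities displayed before the statement. The number of physical qubits equals the number of vertices of $\Gamma_h$, namely $n=N_v+(p_1+3)F_R=N_e+3F_R$ using $N_v+p_1F_R=N_e$; the only independence relation among the $2F_R+2F_G$ generators listed above is \ref{relacaosubsistem}, so the number of independent stabilizer generators is $s=2F_R+2F_G-1$ as in \ref{gerainde2p12p24}; counting the $E_2=2p_2F_G+(p_1+3)F_R$ rank-$2$ edges and the $E_3=(p_1+1)F_R$ rank-$3$ edges of $\Gamma_h$, and using that each triangle yields two independent $Z$-type link operators, gives $\dim\G=N_v+N_e+5F_R-1$, whence $r=(\dim\G-s)/2=-\chi+N_v+\tfrac{5F_R}{2}$ as in \ref{rsub} and $k=n-r-s=-\chi+1+\tfrac{F_R}{2}$ as in \ref{ksub}. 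For $(C_1)$ and $(C_3)$, Lemma \ref{lemasub1} shows that every homologically non-trivial closed hypercycle $\sigma$ contains rank-$3$ edges and Lemma \ref{lemasub2} shows $W(\sigma)\notin\G=C(\G_{loop})$; hence the homologically trivial hypercycles are exactly those whose loop operators lie in $\textbf{S}$ (so the stabilizer generators are spatially local, property $(C_1)$), while the homologically non-trivial ones give elements of $C(\G)\setminus\textbf{S}$, i.e.\ non-trivial bare logical operators, which is $(C_3)$ and forces $k\ge 1$. Finally, for the distance, exactly as in Section \ref{cap5}, I would take a homologically non-trivial hypercycle $\sigma$ with $l$ triangles, let $E'$ be the set of all red and green rank-$2$ edges of $\sigma$ together with one blue edge of each triangle of $\sigma$, set $G=\prod_{e'\in E'}\overline{K}_{e'}\in\G$, and observe that $W(\sigma)G$ is supported only on the vertices of the triangles of $\sigma$; it is therefore a dressed logical operator of weight at most $l$, so $d\le l$, completing the proof.

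The step I expect to be the main obstacle is the bookkeeping behind $(C_2)$ together with the rank count $s=2F_R+2F_G-1$: one must justify carefully that the inner triangle of each $f_i'$ can always be inserted so that every hypercycle surrounding an $F_G$-face picks up an even number of vertices — otherwise the corresponding stabilizer fails Theorem \ref{teosindrome} — and that \ref{relacaosubsistem} is the \emph{only} independence relation in the odd-$p_1$ case, in contrast to the richer family of relations that appears in the even case treated in \cite{sarvepalli}.
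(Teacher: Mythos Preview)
Your proposal is correct and follows essentially the same approach as the paper: the paper's own proof is a short paragraph that cites Theorem~\ref{satisfazasrestricoes} for $(H_1)$--$(H_5)$, Lemma~\ref{lemac2} together with equation~(\ref{equa1}) for $(C_2)$, Lemmas~\ref{lemasub1} and~\ref{lemasub2} for $(C_1)$ and $(C_3)$, and equations~(\ref{nsub}), (\ref{rsub}), (\ref{ksub}) for the parameters, exactly as you do. One cosmetic slip: the triangle-count distance argument you reproduce is from Section~\ref{codigosubsistemtopolo}, not Section~\ref{cap5}.
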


\begin{proof}
It follows from Theorem \ref{satisfazasrestricoes} that $\Gamma_h$ provides subsystem codes, and it follows from Lemma \ref{lemac2} and the equation (\ref{equa1}) that this subsystem code satisfies $(C_2)$. By the Lemmas \ref{lemasub1} and \ref{lemasub2}, we have that it will satisfy $(C_1)$ and $(C_3)$. Therefore, the code obtained is a topological subsystem code. Finally, by the equations (\ref{nsub}), (\ref{rsub}), and (\ref{ksub}) we have that the parameters $n$, $r$, and $k$ are given as in (\ref{parasub}). The fact that the distance is limited by the number of triangles in a homologically non-trivial hypercycle follows from the discussion in this section.
\end{proof}

\begin{figure}[!h]
	\centering
	\includegraphics[scale=0.9]{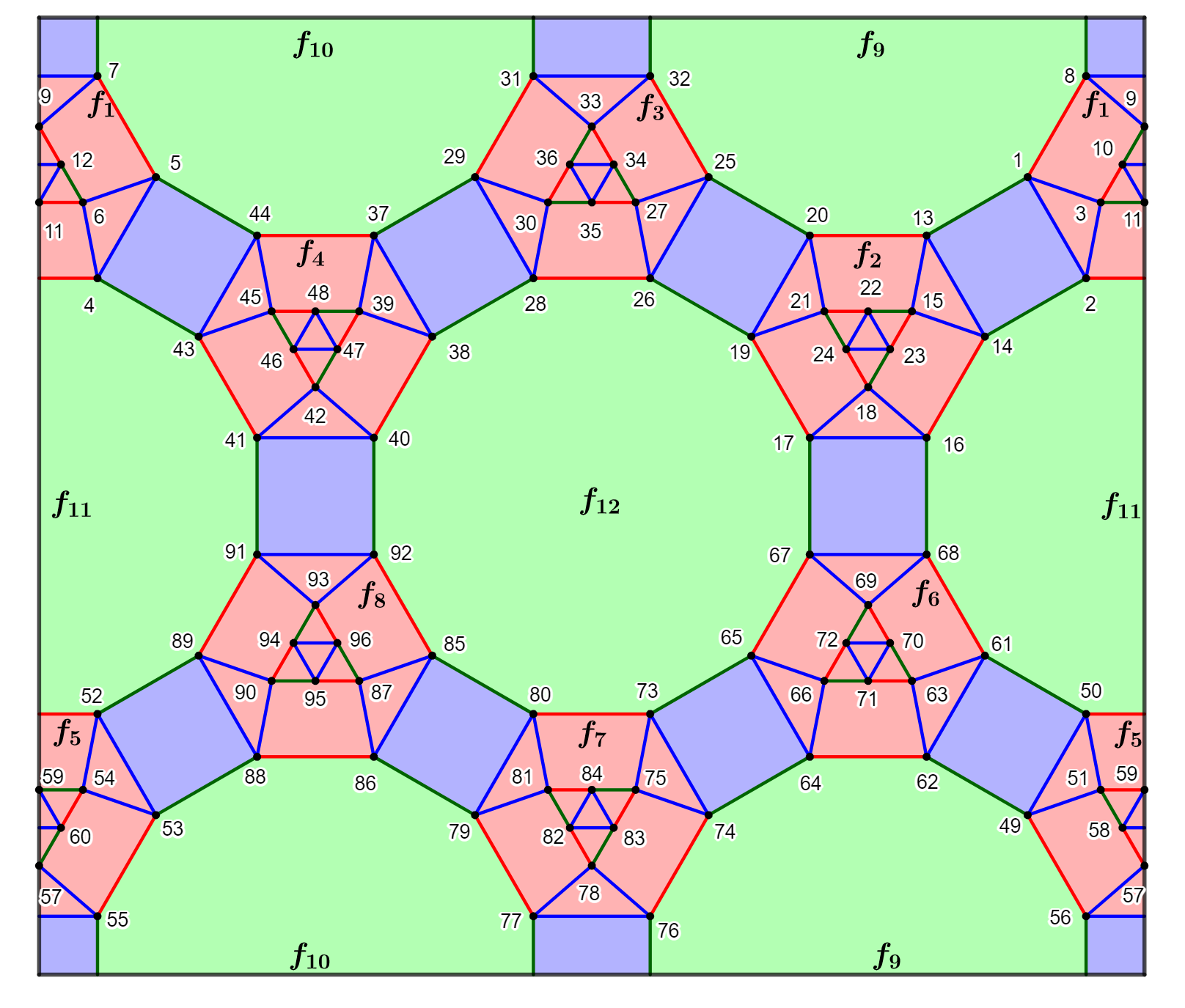} 
	\caption{Hypergraph $\Gamma_h$ arising from tessellation $\{6,12,4\}$.} \label{exemplo96qubit}
\end{figure}

\begin{Exam}
Consider the hypergraph $\Gamma_h$ arising from tessellation $\{6,12,4\}$ given in Figure $\ref{exemplo96qubit}$. By the Theorem $\ref{T1meusub}$, we obtain a topological subsystem code through $\Gamma_h$. Let's determine the parameters of this code.

This hypergraph has $96$ vertices, that is, $96$ qubits. Therefore, it follows that $n=96$. To determine $r$, we need to determine $s$, $E_2$, $E_3$, and $\dim\G$. Remembering that the number of independent stabilizer generators is given by $s = 2F_R+2F_G-1$ and, in this case, $F_R=8$ and $F_G=4$, therefore, $s=23$. $E_2$ and $E_3$ are given by $E_2=2p_2F_G+(p_1+3)F_R$ and $E_3=(p_1+1)F_R$. Then, in this case, $E_2=96$ and $E_3=32$. Since $\dim\G = E_2+2E_3-1$, it follows that $\dim\G=159$. But $\dim\G=2r+s$, so it follows that $159=2r+23\Longleftrightarrow r=68$. Now, since $k=n-r-s$, it follows that $k=5$.

Finally, searching for homologically non-trivial hypercycles, the one with the smallest number of triangles found has $l=4$ triangles. Therefore, $d\leq 4$. Therefore, we obtain a topological subsystem code with parameters $[[96,5,68,4]]$.
\end{Exam}

To conclude the study on this family of codes, we present in Tables \ref{t1subsistem} and \ref{t2subsistem} the parameters $n$, $k$ and $r$, obtained from the hypergraph $\Gamma_h$ originating from the tessellations $\{2p_1,2p_2,4\}$ on a compact orientable surface $\Msup$ with genus $g=2, 3, 4$ and $5$. We also present in each case the quantity $s$ of independent stabilizer generators together with the tessellations $\{2p_1,2p_2,4\}$ used to construct $\Gamma_h$. We do not present the value of $d$ in any of the tables, because despite having an upper bound for the distance, we still cannot determine a value in general.

\begin{table}[h]
\centering
\begin{multicols}{2}
\begin{tabular}{|w{c}{2mm}|w{c}{18mm}|w{c}{3mm}|w{c}{28mm}|}
\hline	$g$	&	Tesselação	&	$s$	&	$[[n,k,r,d]]$\\
\hline	2	& $\{6,14,4\}$	& 79 &	$[[336,17,240,d]]$\\
\hline	2	& $\{14,6,4\}$	& 79 &  $[[288,9,200,d]]$\\
\hline	2	& $\{6,16,4\}$	& 43 &	$[[192,11,138,d]]$\\
\hline	2	& $\{6,18,4\}$	& 31 &	$[[144,9,104,d]]$\\
\hline	2	& $\{10,8,4\}$	& 35 &  $[[144,7,102,d]]$\\
\hline	2	& $\{6,20,4\}$	& 25 &	$[[120,8,87,d]]$\\
\hline	2	& $\{18,6,4\}$	& 31 &  $[[120,5,84,d]]$\\
\hline	2	& $\{6,24,4\}$	& 19 &	$[[96,7,70,d]]$\\
\hline	2	& $\{6,36,4\}$	& 13 &	$[[72,6,53,d]]$\\
\hline	2	& $\{10,10,4\}$	& 15 &	$[[72,5,52,d]]$\\
\hline	2	& $\{10,20,4\}$	& 5	 &	$[[36,4,27,d]]$\\
\hline	3	& $\{6,14,4\}$	& 159 &	$[[672,33,480,d]]$\\
\hline	3	& $\{14,6,4\}$	& 159 &	$[[576,17,400,d]]$\\
\hline	3	& $\{6,16,4\}$	& 87  &	$[[384,21,276,d]]$\\
\hline	3	& $\{6,18,4\}$	& 63  &	$[[288,17,208,d]]$\\
\hline 
\end{tabular}
\vspace{5mm}

\begin{tabular}{|w{c}{2mm}|w{c}{18mm}|w{c}{3mm}|w{c}{28mm}|}
\hline	$g$	&	Tesselação	& $s$ &	$[[n,k,r,d]]$\\
\hline	3	& $\{10,8,4\}$	& 71  &	$[[288,13,204,d]]$\\
\hline	3	& $\{6,20,4\}$	& 51  &	$[[240,15,174,d]]$\\
\hline	3	& $\{18,6,4\}$	& 63  &	$[[240,9,168,d]]$\\
\hline	3	& $\{6,24,4\}$	& 39  &	$[[192,13,140,d]]$\\
\hline	3	& $\{6,28,4\}$	& 33  &	$[[168,12,123,d]]$\\
\hline	3	& $\{6,36,4\}$	& 27  &	$[[144,11,106,d]]$\\
\hline	3	& $\{10,10,4\}$	& 31  &	$[[144,9,104,d]]$\\
\hline	3	& $\{6,60,4\}$	& 21  &	$[[120,10,89,d]]$\\
\hline	3	& $\{10,12,4\}$	& 21  &	$[[108,8,79,d]]$\\
\hline	3	& $\{10,20,4\}$	& 11  &	$[[72,7,54,d]]$\\
\hline	3	& $\{18,12,4\}$	& 9	  &	$[[60,6,45,d]]$\\
\hline	3	& $\{14,28,4\}$	& 5   &	$[[48,6,37,d]]$\\
\hline
\end{tabular}
\end{multicols}
\caption{Parameters of topological subsystem codes from tessellations $\{2p_1,2p_2,4\}$ with $p_1>2$ odd, $g=2$ and $3$.}\label{t1subsistem}
\end{table}

\begin{table}[!h]
\centering
\begin{multicols}{2}
\begin{tabular}{|w{c}{2mm}|w{c}{18mm}|w{c}{3mm}|w{c}{28mm}|}
\hline	$g$	& Tesselação   & $s$ & $[[n,k,r,d]]$\\
\hline	4	& $\{6,14,4\}$ & 239 & $[[1008,49,720,d]]$\\
\hline	4	& $\{14,6,4\}$ & 239 & $[[864,25,600,d]]$\\
\hline	4	& $\{6,16,4\}$ & 131 & $[[576,31,414,d]]$\\
\hline	4	& $\{6,18,4\}$ & 95	 & $[[432,25,312,d]]$\\
\hline	4	& $\{10,8,4\}$ & 107 & $[[432,19,306,d]]$\\
\hline	4	& $\{6,20,4\}$ & 77	 & $[[360,22,261,d]]$\\
\hline	4	& $\{18,6,4\}$ & 95	 & $[[360,13,252,d]]$\\
\hline	4	& $\{6,24,4\}$ & 59	 & $[[288,19,210,d]]$\\
\hline	4	& $\{6,30,4\}$ & 47	 & $[[240,17,176,d]]$\\
\hline	4	& $\{6,36,4\}$ & 41	 & $[[216,16,159,d]]$\\
\hline	4	& $\{10,10,4\}$& 47	 & $[[216,13,156,d]]$\\
\hline	4	& $\{6,48,4\}$ & 35	 & $[[192,15,142,d]]$\\
\hline	4	& $\{14,8,4\}$ & 43  & $[[192,11,138,d]]$\\
\hline	4	& $\{30,6,4\}$ & 47	 & $[[192,9,136,d]]$\\
\hline	4	& $\{6,84,4\}$ & 29	 & $[[168,14,125,d]]$\\
\hline	4	& $\{10,20,4\}$& 17	 & $[[108,10,81,d]]$\\
\hline	4	& $\{14,14,4\}$& 15	 & $[[96,9,72,d]]$\\
\hline	4	& $\{18,36,4\}$& 5	 & $[[60,8,47,d]]$\\
\hline	5	& $\{6,14,4\}$ & 319 & $[[1344,65,960,d]]$\\
\hline	5	& $\{14,6,4\}$ & 319 & $[[1152,33,800,d]]$\\
\hline
\end{tabular}
\vspace{5mm }

\begin{tabular}{|w{c}{2mm}|w{c}{18mm}|w{c}{3mm}|w{c}{28mm}|}
\hline	$g$	& Tesselação   & $s$ & $[[n,k,r,d]]$\\
\hline	5	& $\{6,16,4\}$ & 175 & $[[768,41,552,d]]$\\
\hline	5	& $\{6,18,4\}$ & 127 & $[[576,33,416,d]]$\\\hline	5	& $\{10,8,4\}$ & 143 & $[[576,25,408,d]]$\\\hline	5	& $\{6,20,4\}$ & 103 & $[[480,29,348,d]]$\\
\hline	5	& $\{18,6,4\}$ & 127 & $[[480,17,336,d]]$\\
\hline	5	& $\{6,24,4\}$ & 79	 & $[[384,25,280,d]]$\\
\hline	5	& $\{6,28,4\}$ & 67	 & $[[336,23,246,d]]$\\
\hline	5	& $\{6,36,4\}$ & 55	 & $[[288,21,212,d]]$\\
\hline	5	& $\{10,10,4\}$& 63	 & $[[288,17,208,d]]$\\
\hline	5	& $\{6,44,4\}$ & 49	 & $[[264,20,195,d]]$\\
\hline	5	& $\{6,60,4\}$ & 43	 & $[[240,19,178,d]]$\\
\hline	5	& $\{6,108,4\}$& 37	 & $[[216,18,161,d]]$\\
\hline	5	& $\{10,12,4\}$& 43	 & $[[216,15,158,d]]$\\
\hline	5	& $\{10,20,4\}$& 23	 & $[[144,13,108,d]]$\\
\hline	5	& $\{14,12,4\}$& 25	 & $[[144,12,107,d]]$\\
\hline	5	& $\{18,12,4\}$& 19	 & $[[120,11,90,d]]$\\
\hline	5	& $\{10,60,4\}$& 13	 & $[[108,12,83,d]]$\\
\hline	5	& $\{14,28,4\}$& 11	 & $[[96,11,74,d]]$\\
\hline	5	& $\{30,12,4\}$& 13	 & $[[96,10,73,d]]$\\
\hline	5	& $\{22,44,4\}$& 5	 & $[[72,10,57,d]]$\\
\hline 
\end{tabular}
\end{multicols}
\caption{Parameters of topological subsystem codes from tesellations $\{2p_1,2p_2,4\}$ with $p_1>2$ odd, $g=4$ and $5$.}\label{t2subsistem}
\end{table}

\subsection{Topological subsystem codes From tessellation $\{2p_1,4,6\}$}

Consider the trivalent and $3$-colorable tessellation $\{2p_1,4,6\}$ on a compact orientable surface $\Msup$. We will construct two families of topological subsystem codes. The first one is for even $p_1>4$. This family is not considered in \cite{sarvepalli}. In this way, we will provide a family of codes following the construction given in \cite{sarvepalli}, as we saw in Section \ref{codigosubsistemtopolo}. The second family will be for odd $p_1>6$, where we make our construction of $\Gamma_h$ given in Section \ref{construcaomeussubcode}. For both cases, we keep fixing the $2p_1$-gons as red faces, the $4$-gons as green faces, and the $6$-gons as blue faces.

\subsection*{Codes From tessellation $\{2p_1,4,6\}$ with even $p_1>4$}

Using $F=F_R$, construct the hypergraph $\Gamma_h$ as seen in Section \ref{codigosubsistemtopolo}, so that the triangles border the $6$-gons as in Figure \ref{hipergrafo12-4-6--1}. Note that the only Euclidean example of this case is $\{12,4,6\}$.

\begin{figure}[!h]
	\centering
	\includegraphics[scale=0.15]{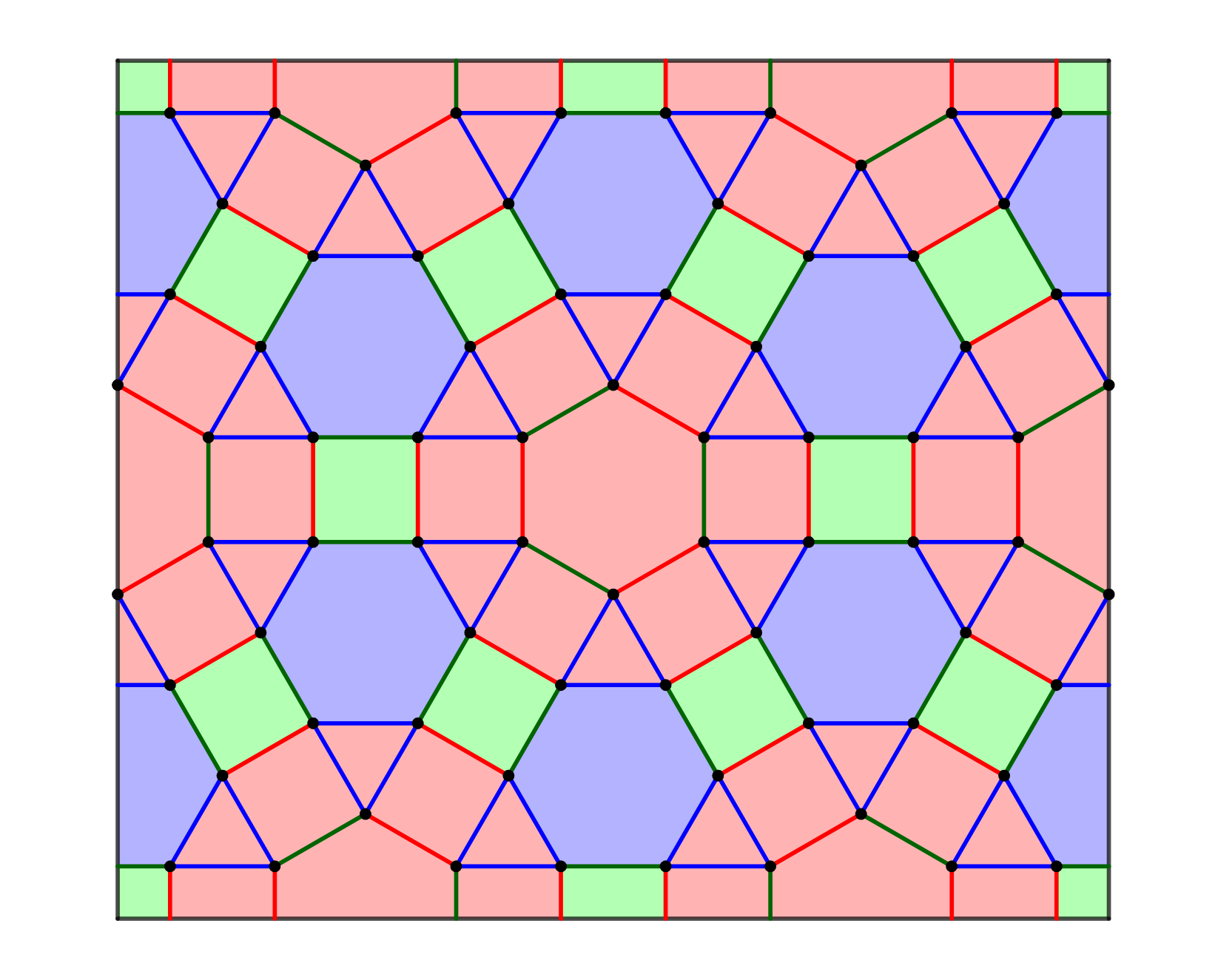} 
	\caption{Hypergraph $\Gamma_h$ arising from tessellation $\{12,4,6\}$.} \label{hipergrafo12-4-6--1}
\end{figure}

As in Section \ref{codigosubsistemtopolo}, each $f\in F_R$ provides $3$ stabilizer generators ($2$ independent), and, just as in the Lemma \ref{estabilizadoresubsistem}, we will have a stabilizer for each face $f\in F_G$, which is given by

\begin{equation}\label{equa2}
W(f_{\sigma_1}) = \prod_{e'\in\partial(f)\cap\overline{E}_R} \overline{K}_{e'}\prod_{e'\in\partial(f)\cap\overline{E}_G} \overline{K}_{e'}.
\end{equation}
However, in this case, even with $F=F_R$, there is no other stabilizer for $f\in F_G$. But there will be a third independent stabilizer generator for each face $f\in F_R$, which comes from a hypercycle, denoted by $\overline{f}_\sigma$, as in Figure \ref{hipergrafo12-4-6---2}-($a$), where $\overline{f}$ is the ``face" given in Figure \ref{hipergrafo12-4-6---2}-($b$). See that the stabilizer coming from the hypercycle $\overline{f}_\sigma$ satisfies the Theorem \ref{teosindrome}, since it can be written as
\begin{equation} \label{equa3}
\begin{aligned}
W(\overline{f}_\sigma) = \prod_{e'\in\partial(\overline{f})\cap\overline{E}_B} \overline{K}_{e'} \prod_{e'\in\partial(\overline{f})\cap\overline{E}_G} \overline{K}_{e'} \prod_{e'\in\partial(\overline{f})\cap\overline{E}_R} \overline{K}_{e'} \prod_{\substack{e'\in\partial(f_i)\cap\overline{E}_G \\ f_i\in F_G;\ f_i\subset \overline{f}}} \overline{K}_{e'}\prod_{e'\in\partial(f)\cap\overline{E}_B} \overline{K}_{e'}\prod_{e'\in\partial(f)\cap\overline{E}_R} \overline{K}_{e'}
\end{aligned}
\end{equation}

\begin{figure}[!h]
	\centering
	\includegraphics[scale=0.145]{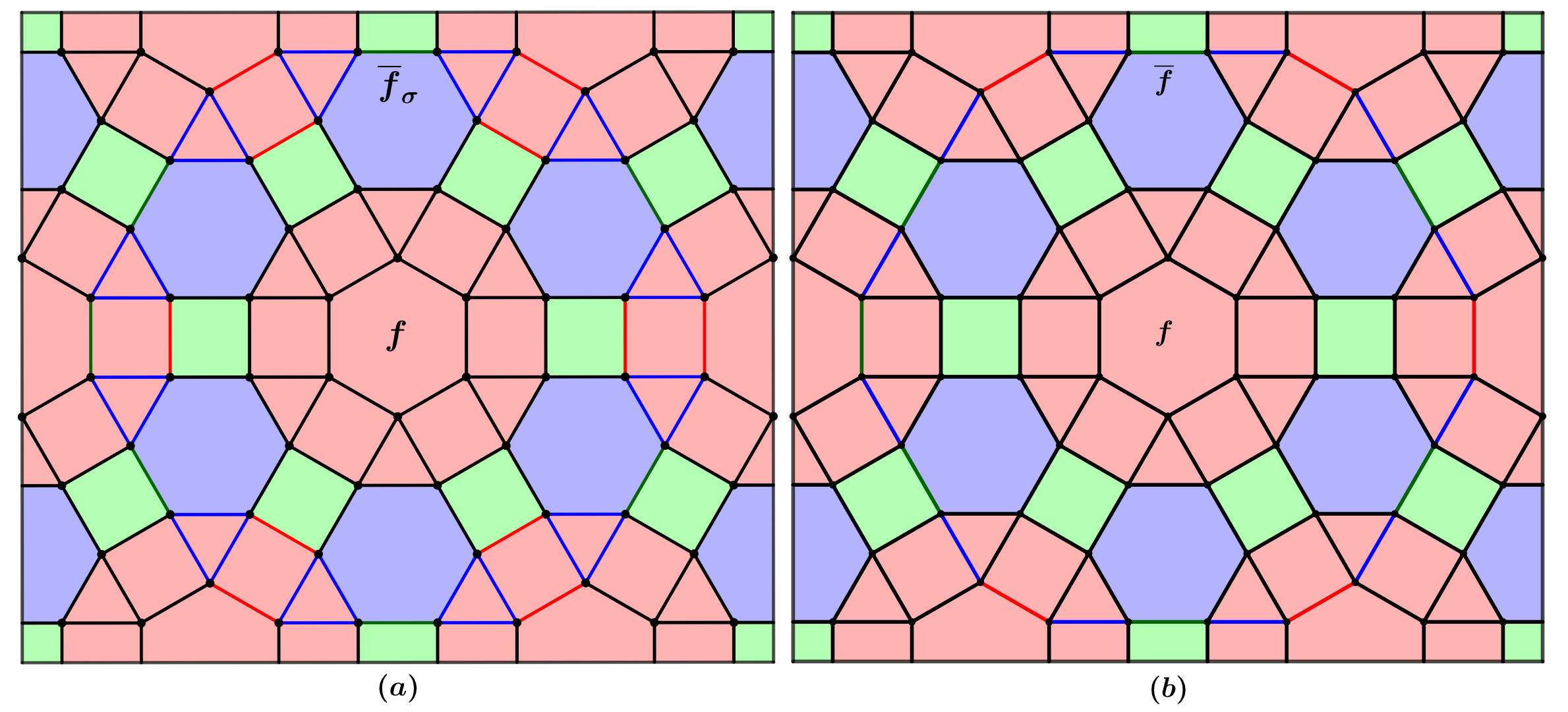} 
	\caption{($a$) Hypercycle $\overline{f}_\sigma$ with colored edges, which corresponds to face $f\in F_R$. ($b$) Colored edges highlighting the ``face" $\overline{f}$.} \label{hipergrafo12-4-6---2}
\end{figure}

%\begin{eqnarray}
%W(\overline{f}_\sigma) & = & (Z_{1}Z_{2})(Z_{3}Z_{4})(Z_{5}Z_{6})(Z_{7}Z_{8})(Z_{9}Z_{10})(Z_{11}Z_{12})(Z_{13}Z_{14})(Z_{15}Z_{16})(Z_{17}Z_{18})\nonumber\\
%                       &   & (Z_{19}Z_{20})(Z_{21}Z_{22})%(Z_{23}Z_{24})(X_{1}X_{24})(Y_{2}Y_{3})(X_{4}X_{5})(Y_{6}Y_{7})(X_{8}X_{9})(Y_{10}Y_{11})\nonumber\\
%                       &   & (Y_{12}Y_{13})(Y_{14}Y_{15})(X_{16}X_{17})(Y_{18}Y_{19})(X_{20}X_{21})(Y_{22}Y_{23})(Y_{25}Y_{37})(Y_{26}Y_{38})\nonumber\\
%                       &   & (Y_{9}Y_{39})(Y_{28}Y_{40})(Y_{29}Y_{41})(Y_{30}Y_{42})(Y_{31}Y_{43})(Y_{32}Y_{44})(Y_{33}Y_{45})(Y_{34}Y_{46})(Y_{35}Y_{47})\nonumber\\
%                       &   & (Y_{36}Y_{48})(Z_{37}Z_{48})(Z_{38}Z_{39})(Z_{40}Z_{41})(Z_{42}Z_{43})(Z_{44}Z_{45})(Z_{46}Z_{47})(X_{37}X_{38})\nonumber\\
%                       &   & (X_{39}X_{40})(X_{41}X_{42})%(X_{43}X_{44})(X_{45}X_{46})(X_{47}X_{48}).
%\end{eqnarray}

Thus, the stabilizer generators will be given as we saw in Section \ref{codigosubsistemtopolo}, that is, they are given by $W(f_{\sigma_1})$, $W(f_{\sigma_2})$, and $W(f_{\sigma_3})$, where $W(f_{\sigma_3}) = W(f_{\sigma_1})W(f_{\sigma_2})$. They are also given by (\ref{equa2}) and (\ref{equa3}). Therefore, this subsystem code satisfies the condition $(C_2)$.

We need to determine the number of independent stabilizer generators. We have that each $f\in F_R$ determines three independent stabilizer generators, and each $f\in F_G$ determines one single independent stabilizer generator. However, when we add all these stabilizers together, not all of them will be independent, as we have the following valid independence relationship
\begin{equation}\label{RE1} 
\prod_{f\in F_R} W(\overline{f}_{\sigma}) = \prod_{f\in F_R} W(f_{\sigma_1})\prod_{f\in F_G} W(f_{\sigma_1}).
\end{equation}

Now, let $\Gamma_R$ be the red reduced tessellation obtained from $\{2p_1,4,6\}$, that is, we reduce the red faces to a point and connect these points, so that we obtain triangles as in Figure \ref{tesselacaoreduzida}. If $\Gamma_R$ is tripartite, that is, if the set of vertices admits a partition into three sets, such that every edge has its end points in different sets, then we divide the faces of $F_R$ into three sets, according to the tripartition, which we denote by $F_1$, $F_2$, and $F_3$. Define $F^{F_i}_G$ to be the set of green faces that are adjacent to the hypercycles $\overline{f}_{\sigma}$ with $f\in F_i$ and $i=1,2,3$. Therefore, there are three independence relations
\begin{eqnarray}
\prod_{f\in F_1} W(\overline{f}_{\sigma}) \prod_{f\in F^{F_1}_G} W(f_{\sigma_1}) & = & \prod_{f\in F_2} W(f_{\sigma_2})\prod_{f\in F_3} W(f_{\sigma_2}),\\
\prod_{f\in F_2} W(\overline{f}_{\sigma}) \prod_{f\in F^{F_2}_G} W(f_{\sigma_1}) & = & \prod_{f\in F_1} W(f_{\sigma_2})\prod_{f\in F_3} W(f_{\sigma_2}),\\
\prod_{f\in F_3} W(\overline{f}_{\sigma}) \prod_{f\in F^{F_3}_G} W(f_{\sigma_1}) & = & \prod_{f\in F_1} W(f_{\sigma_2})\prod_{f\in F_2} W(f_{\sigma_2}),
\end{eqnarray}
of which only two are independent. Together with (\ref{RE1}), we will have three independence relations, and, therefore, the number of independent stabilizer generators will be

\begin{equation}\label{estabisub1}
s=3F_R + F_G -1-2\delta_{\Gamma_R},
\end{equation}
onde $\delta_{\Gamma_R}=1$ se $\Gamma_R$ for tripartida e 0, caso contrário.
\begin{figure}[!h]
	\centering
	\includegraphics[scale=0.15]{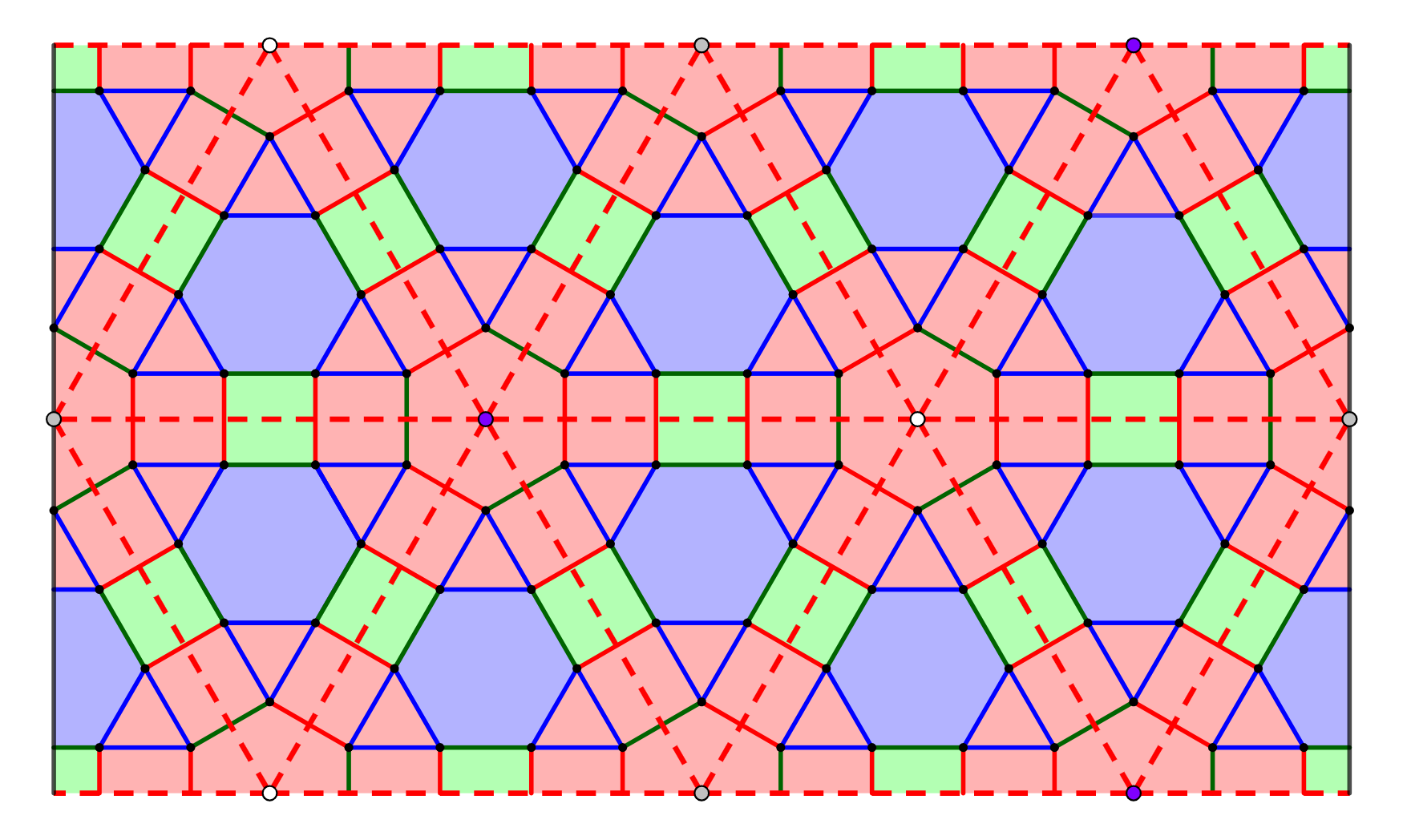} 
	\caption{Red dashed edges correspond to reduced tessellation $\Gamma_R$ of the hypergraph $\Gamma_h$. See that $\Gamma_R$ is tripartite.} \label{tesselacaoreduzida}
\end{figure}

Let's now determine the parameters $n$, $k$, $r$, and $d$ of our subsystem code. As we have only one Euclidean case, $\{12,4,6\}$, the determination of the parameters is made in relation to the hyperbolic tessellations embedded in compact orientable surfaces of genus $g\geq 2$. However, the same construction applies to the case $\{12,4,6\}$, with the difference that we will have to count the number of faces manually. As we know the values of $N_v$ and $F_R$, we can determine the quantity $n$ of physical qubits, which is given by the number of vertices of the hypergraph $\Gamma_h$, that is,
\begin{equation}
n = N_v+p_1F_R = \frac{36p_1(g-1)}{p_1-6}=N_e.
\end{equation}

We also know that the number of independent stabilizer generators $s$ is given in (\ref{estabisub1}) and that $\dim\G = 2r + s$. But, to determine $\dim\G$, we need to determine the number of independent generators of $\G$, that is, we need to determine the number of independent link operators. We have that the number of rank-2 and rank-3 edges for the hypergraph $\Gamma_h$ is given by
\begin{equation}
E_2 = 4F_G+p_1 F_R\ \ \textrm{e}\ \ E_3=p_1F_R.
\end{equation}
Thus, 
\begin{eqnarray*}
\dim\G & = & 4F_G+p_1 F_R + 2p_1F_R -1-2\delta_{\Gamma_R}\\
       & = & N_v+N_e-1-2\delta_{\Gamma_R},
\end{eqnarray*}
since $4F_G=N_v$ and $3p_1F_R=N_e$. As $\dim\G2r+s$ and $s=3F R + F G -1-2\delta_{\Gamma_R}$, making the appropriate substitutions, we determine that the quantity $r$ of gauge qubits is given by
\begin{equation}
r = -\chi + 2N_v-\frac{N_f+N_e}{2},
\end{equation}
where $r$ must be a positive integer. 

As we already know the values of $n$, $r$, and $s$ it is easy to determine the value of $k$, as it is given by $k = n - r - s$. Thus, making the appropriate substitutions, we have that the quantity $k$ of logical qubits is given by
\begin{equation}
k = \chi -3F_R + \frac{F}{2} + 1 +2\delta_{\Gamma_R},
\end{equation}
where $k$ must be a positive integer. 

Finally, the distance $d$ will be the same as for the family constructed previously, that is, we get an upper bound to the code distance by the smallest number of triangles in a homologically non-trivial closed hypercycle, that is, it will be the smallest weight among the $C(\G)\setminus\textbf{S}$ operators.

Finally, the distance $d$ is calculated in the same way as in the previous case, thus we obtain an upper limit for the code distance considering the smallest number of triangles in a non-trivial homologically closed hypercycle, that is, it will be the smallest weight among the $C(\G)\setminus\textbf{S}$ operators.

The Lemmas \ref{lemasub1} and \ref{lemasub2} are also valid here, and their proofs are done in a similar way. From this, it follows that this code satisfies the conditions $(C_1)$ and $(C_3)$. We have already seen that this code satisfies $(C_2)$ and it was shown in \cite{sarvepalli} that the graph $\Gamma_h$ built on this type of tessellation satisfies $(H_1),\ldots,(H_5)$. Therefore, this demonstrates the following theorem.

\begin{Theo}\label{T2meusub}
Consider a trivalent and $3$-colorable tessellation $\{2p_1,4,6\}$ with even $p_1>4$. Applying the construction of $\Gamma_h$ given in \cite{sarvepalli} with $F=F_R$ such that the edges of rank-$3$ are the border of the $6$-gons, we obtain topological subsystem codes with parameters
\begin{equation}
[[N_e, \chi -3F_R + \frac{F}{2} + 1 +2\delta_{\Gamma_R},  -\chi + 2N_v-\frac{N_f+N_e}{2}, d\leq l]], 
\end{equation}
where $l$ is the number of triangles in a homologically non-trivial hypercycle.
\end{Theo}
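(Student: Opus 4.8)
The plan is to assemble the statement from the pieces already established earlier in the excerpt, checking each of the conditions $(C_1)$, $(C_2)$, $(C_3)$ in turn and then reading off the parameters from the counting identities derived in the paragraphs preceding the theorem. First I would invoke the result of \cite{sarvepalli} cited just above the theorem, which guarantees that for a trivalent, $3$-colorable tessellation $\{2p_1,4,6\}$ with even $p_1>4$, the hypergraph $\Gamma_h$ obtained by the construction of Section \ref{codigosubsistemtopolo} (with $F=F_R$ and the rank-$3$ edges chosen along the $6$-gons) satisfies $(H_1),\ldots,(H_5)$; hence it gives rise to a genuine subsystem code whose loop group is $\G_{loop}$ and whose gauge group is $\G=C(\G_{loop})$, exactly as in \cite{terhal}. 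This is the structural backbone and requires no new argument.

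Next I would verify $(C_2)$. The stabilizer generators are the operators $W(f_{\sigma_1}), W(f_{\sigma_2}), W(f_{\sigma_3})$ attached to each $f\in F_R$ together with the $W(f_{\sigma_1})$ attached to each $f\in F_G$ (formula (\ref{equa2})) and the extra generators $W(\overline{f}_\sigma)$ attached to each $f\in F_R$ (formula (\ref{equa3})). For each of these one must exhibit an ordering of the underlying rank-$2$ link operators such that the $j$-th factor commutes with the product of all preceding ones, i.e. Theorem \ref{teosindrome}. The displayed expressions (\ref{equa2}) and (\ref{equa3}) already organize the link operators by color class, and since operators of a fixed color all commute and color classes along a single closed loop meet in even overlaps, the required ordering is: take all blue factors, then all green, then all red; the partial products then commute with each incoming factor by Theorem \ref{teoregradecomutacao}. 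I would spell this out once (the $\overline{f}_\sigma$ case is the only nonobvious one because $\overline{f}$ is not a face of the tessellation but a union of $2p_2$-gon boundaries with green interior edges, but the color structure is still the one displayed in (\ref{equa3})). This establishes $(C_2)$.

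Then $(C_1)$ and $(C_3)$: here I would simply note, as the paragraph before the theorem does, that Lemmas \ref{lemasub1} and \ref{lemasub2} apply verbatim, with the same proofs — the only local feature used there is that every vertex is trivalent and incident to a (blue) rank-$3$ edge and that $\Gamma_h$ satisfies $(H_5)$, all of which hold here. Lemma \ref{lemasub1} forces any homologically non-trivial closed hypercycle $\sigma$ to contain rank-$3$ edges, so $W(\sigma)$ is not a stabilizer (this is $(C_1)$: the stabilizer group is generated by the homologically trivial loops), and Lemma \ref{lemasub2} gives $W(\sigma)\notin\G$, so that non-trivial homology classes furnish bare logical operators in $C(\G)\setminus\G$ (this is $(C_3)$, and also shows $k\geq 1$ when the genus is at least $1$). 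With all three conditions verified, Definition \ref{defisubtopolo} gives that the code is topological.

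Finally, for the parameters I would transcribe the four computations carried out in the subsection: $n=N_v+p_1F_R=N_e$ (each inserted face $f'$ contributes nothing new here since in the even case no interior triangle is added, only the $p_1$ original vertices of $f$ receive the rank-$3$ structure — I should double-check the bookkeeping $N_v+p_1F_R=N_e$ against (\ref{Ne2p2q2rSub}) with $(p_1,p_2,p_3)=(p_1,2,3)$); the count of independent stabilizer generators $s=3F_R+F_G-1-2\delta_{\Gamma_R}$ from the independence relation (\ref{RE1}) together with the two independent relations among the three tripartition relations; the dimension $\dim\G=E_2+2E_3-(\text{number of relations})=N_v+N_e-1-2\delta_{\Gamma_R}$ using $E_2=4F_G+p_1F_R$, $E_3=p_1F_R$, $4F_G=N_v$, $3p_1F_R=N_e$; then $r=(\dim\G-s)/2=-\chi+2N_v-\tfrac{N_f+N_e}{2}$ via $F_G+F_R+F_B=N_f$ and $2F_B+N_v=N_e$; and $k=n-r-s=\chi-3F_R+\tfrac{F}{2}+1+2\delta_{\Gamma_R}$. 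The distance claim $d\le l$ is, as stated in the subsection, the same triangle-counting upper bound used for the previous families: given a homologically non-trivial closed hypercycle, multiplying $W(\sigma)$ by a suitable product of gauge generators (one blue rank-$2$ edge per triangle) yields a dressed logical operator whose weight equals the number of triangles in $\sigma$. The main obstacle I anticipate is not conceptual but combinatorial: getting the independence count $s$ exactly right, i.e. verifying that (\ref{RE1}) and precisely two of the three tripartition relations are the \emph{only} relations among the listed generators, so that $2\delta_{\Gamma_R}$ (and not more) is subtracted; this is where an off-by-one in $k$ or $r$ would creep in, and it is the step that genuinely depends on the $\{2p_1,4,6\}$ geometry rather than following formally from the cited lemmas.
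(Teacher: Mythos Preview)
Your proposal is correct and follows essentially the same approach as the paper's own proof, which consists of exactly the same four ingredients: citing \cite{sarvepalli} for $(H_1),\ldots,(H_5)$; invoking the displayed decompositions (\ref{equa2}) and (\ref{equa3}) together with the earlier generators to obtain $(C_2)$ via Theorem \ref{teosindrome}; appealing to Lemmas \ref{lemasub1} and \ref{lemasub2} (with the remark that their proofs carry over verbatim) for $(C_1)$ and $(C_3)$; and reading off $n$, $s$, $r$, $k$, and the bound $d\le l$ from the counting identities developed in the paragraphs preceding the theorem. If anything you are more explicit than the paper, which compresses all of this into three sentences; your only minor slip is the parenthetical about $f'$ ``contributing nothing new'' --- in the even case the inserted face $f'$ does contribute $p_1$ new vertices (hence $n=N_v+p_1F_R$); what is absent compared to the odd case is only the extra interior triangle with its three additional vertices.
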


\begin{Exam}
Consider the hypergraph $\Gamma_h$ arising from tessellation $\{6,12,4\}$ illustrated in Figure $\ref{hipergrafo12-4-6--1}$. By the Theorem $\ref{T2meusub}$, we obtain a topological subsystem code through $\Gamma_h$. Let's determine the parameters of this code.

This hypergraph has $72$ vertices, that is, $72$ qubits. Then, $n=72$. To determine $r$, we need to determine $s$, $E_2$, $E_3$, and $\dim\G$. Remembering that the number of independent stabilizer generators is given by $s=3F_R + F_G -1-2\delta_{\Gamma_R}$ and, in this case, $F_R=4$ and $F_G=12$. As the reduced tessellation $\Gamma_R$ is not tripartite, it follows that $\delta_{\Gamma_R}=0$, therefore, $s=23$. $E_2$ and $E_3$ are given by $E_2 = 4F_G+p_1 F_R$ and $E_3=p_1F_R$. Therefore, in this case, $E_2=72$ and $E_3=24$. Since $\dim\G = E_2+2E_3-1$, it follows that $\dim\G=119$. But $\dim\G=2r+s$, so it follows that $119=2r+23\Longleftrightarrow r=48$. Now, since $k=n-r-s$, we have that $k=1$.

Finally, searching for homologically non-trivial hypercycles, the one with the smallest number of triangles found has $l=4$ triangles. Therefore, $d\leq 4$. Therefore, we obtain a topological subsystem code with parameters $[[72,1,48,4]]$.
\end{Exam}

To conclude the study on this family of codes, we present in the table \ref{t5subsistem} the parameters obtained from the hypergraph $\Gamma_h$ arising from the tessellations $\{2p_1,4,6\}$ on a compact orientable surface $\Msup$ with genders $g=2$, $3$, $4$, and $5$. As we cannot determine whether or not the graph will be tripartite, we will determine the two possible values for the encoded qubits, which we call $k_1$ and $k_2$, respectively. We also present in each case the tessellations $\{2p_1,4,6\}$ used to construct $\Gamma_h$, together with the quantity $s_1$ and $s_2$ of independent stabilizer generators, if the graph is tripartite and if not, respectively.

\begin{table}[h!]
\centering
\begin{tabular}{|w{c}{2mm}|w{c}{18mm}|w{c}{3mm}|w{c}{28mm}|w{c}{3mm}|w{c}{28mm}| }
\hline 	$g$	& Tessellation & $s_1$ & $[[n,k_1,r,d]]$ & $s_2$	& $[[n,k_2,r,d]]$\\
\hline 2 & $\{16,4,6\}$ & 39 & $[[144,6,99,d]]$ & 41 & $[[144,4,99,d]]$\\
\hline	2	&	$\{20,4,6\}$	&	21	&	$[[90,6,63,d]]$ &	23	&	$[[90,4,63,d]]$\\
\hline	2	&	$\{24,4,6\}$	&	15	&	$[[72,6,51,d]]$	&	17	&	$[[72,4,51,d]]$\\
\hline	2	&	$\{36,4,6\}$	&	9	&	$[[54,6,39,d]]$	&	11	&	$[[54,4,39,d]]$\\
\hline	3	&	$\{16,4,6\}$	&	81	&	$[[288,9,198,d]]$	&	83	&	$[[288,7,198,d]]$\\
\hline	3	&	$\{20,4,6\}$	&	45	&	$[[180,9,126,d]]$	&	47	&	$[[180,7,126,d]]$\\
\hline	3	&	$\{24,4,6\}$	&	33	&	$[[144,9,102,d]]$	&	35	&	$[[144,7,102,d]]$\\
\hline	3	&	$\{28,4,6\}$	&	27	&	$[[126,9,90,d]]$	&	29	&	$[[126,7,90,d]]$\\
\hline	3	&	$\{36,4,6\}$	&	21	&	$[[108,9,78,d]]$	&	23	&	$[[108,7,78,d]]$\\
\hline	3	&	$\{60,4,6\}$	&	15	&	$[[90,9,66,d]]$	&	17	&	$[[90,7,66,d]]$\\
\hline	4	&	$\{16,4,6\}$	&	123	&	$[[432,12,297,d]]$	&	125	&	$[[432,10,297,d]]$\\
\hline	4	&	$\{20,4,6\}$	&	69	&	$[[270,12,189,d]]$	&	71	&	$[[270,10,189,d]]$\\
\hline	4	&	$\{24,4,6\}$	&	51	&	$[[216,12,153,d]]$	&	53	&	$[[216,10,153,d]]$\\
\hline	4	&	$\{36,4,6\}$	&	33	&	$[[162,12,117,d]]$	&	35	&	$[[162,10,117,d]]$\\
\hline	4	&	$\{48,4,6\}$	&	27	&	$[[144,12,105,d]]$	&	29	&	$[[144,10,105,d]]$\\
\hline	4	&	$\{84,4,6\}$	&	21	&	$[[126,12,93,d]]$	&	23	&	$[[126,10,93,d]]$\\
\hline	5	&	$\{16,4,6\}$	&	165	&	$[[576,15,396,d]]$	&	167	&	$[[576,13,396,d]]$\\
\hline	5	&	$\{20,4,6\}$	&	93	&	$[[360,15,252,d]]$	&	95	&	$[[360,13,252,d]]$\\
\hline	5	&	$\{24,4,6\}$	&	69	&	$[[288,15,204,d]]$	&	71	&	$[[288,13,204,d]]$\\
\hline	5	&	$\{28,4,6\}$	&	57	&	$[[252,15,180,d]]$	&	59	&	$[[252,13,180,d]]$\\
\hline	5	&	$\{36,4,6\}$	&	45	&	$[[216,15,156,d]]$	&	47	&	$[[216,13,156,d]]$\\
\hline	5	&	$\{44,4,6\}$	&	39	&	$[[198,15,144,d]]$	&	41	&	$[[198,13,144,d]]$\\
\hline	5	&	$\{60,4,6\}$	&	33	&	$[[180,15,132,d]]$	&	35	&	$[[180,13,132,d]]$\\
\hline
\end{tabular}
\caption{Parameters of topological subsystem codes from tessellation $\{2p_1,4,6\}$ with even $p_1>4$, $g=2$, $3$, $4$, and $5$, where $s_1$ and $ k_1$ are the values when $\Gamma_R$ is tripartite and $s_2$ and $k_2$ are the values when $\Gamma_R$ is not tripartite.}\label{t5subsistem}
\end{table}

\subsection*{Codes from tessellation $\{2p_1,4,6\}$ with odd $p_1>6$}

Using $F=F_R$, construct the hypergraph $\Gamma_h$ as in Section \ref{construcaomeussubcode}. Lemma \ref{estabilizadoresubsistem} gives that each $f\in F_R$ provides $3$ stabilizer generators ($2$ independent) and each $f\in F_G$ provides $1$ independent stabilizer generator. We want the stabilizer generators $W(\overline{f}_\sigma)$ with $f\in F_R$, seen for the case $p_1>4$ even, to continue being stabilizers. But when we add the triangles inside $f'$, we are creating an extra vertex in some of the $\overline{f}_\sigma$ hypercycles, just as it happened for the $\overline{f}_{\sigma_2} hypercycles $ with $f\in F_G$ from the first family of codes we built. This will be a problem, because, in this way, the stabilizer coming from the hypercycle $\overline{f}_\sigma$ will not satisfy the Theorem \ref{teosindrome}, therefore, it does not satisfy $(C_2)$, and consequently, we will not obtain topological subsystem codes.

Again, to solve this, we impose the restriction that these triangles be constructed in $f'$, so that it does not increase new vertices or increase an even number of vertices in each hypercycle of $\overline{f}_\sigma$.

If we have no new vertices in $\overline{f}_\sigma$ hypercycle, we have that the loop operator will be given by (\ref{equa3}). If an even number of vertices increases, we will not be able to write the loop operator as in the equation (\ref{equa3}), separating by the colors of the edges, but this will not prevent the loop operator from satisfying the \ref{teosindrome} Theorem.

Again, which guarantees that we can make this restriction on the way we fix the triangles inside $f'$, so that the hypercycles of $\overline{f}_\sigma$ provide stabilizers satisfying the Theorem \ref{teosindrome}, is the fact that we have a finite amount of $2p_1$-gons. Therefore, it follows that each $f\in F_R$ will provide us with four stabilizing generators ($3$ independent), and each $f\in F_G$ will provide an independent stabilizer generator. Thus, the stabilizer generators of this code will be given by the Lemma \ref{estabilizadoresubsistem} and the loop operators $W(\overline{f}_{\sigma})$. Therefore, this subsystem code satisfies the condition $(C_2)$.

Let us determine the number of independent stabilizer generators. For odd $p_1>6$, we have that $\Gamma_R$ will never be tripartite. Thus, the only independence relation that will be valid is (\ref{RE1}). Therefore, the number of independent stabilizing generators will be
\begin{equation}
s = 3F_R + F_G -1.
\end{equation}

Now we determine the parameters $n$, $k$, $r$, and $d$ of our subsystem code. Note that there are no Euclidean tessellations for this case. Therefore, the determination of the parameters will be made in relation to the tessellations embedded in compact orientable surfaces of genus $g\geq 2$. The quantity of physical qubits, $n$, is given by the quantity of vertices of the hypergraph $\Gamma_h$,
\begin{equation}
n = N_v+p_1F_R+3F_R = N_e + 3F_R.
\end{equation}

The quantity of gauge qubits, $r$, is determined in a similar way to the other cases. Here, the rank-$2$ and rank-$3$ edges are given by
\begin{equation}
E_2 = 4F_G +p_1F_R+3F_R\ \ \textrm{e}\ \ E_3 = p_1F_R+ F_R.
\end{equation}
Then,
\begin{eqnarray*}
\dim\G & = & 4F_G +p_1F_R+3F_R +2p_1F_R+ 2F_R-1 \\
       & = & N_v+N_e+5F_R-1.
\end{eqnarray*}
As $\dim\G = 2r+s$ and $s=3F_R+F_G-1$, we have
\begin{equation}
r = -\chi+2N_v-\frac{N_f+N_e}{2}+\frac{5F_R}{2},
\end{equation}
where $r$ is a positive integer. Now, for the parameter $k$, we know that $k=n-r-s$. Then, we have
\begin{equation}
k = \chi + \frac{N_f}{2}-\frac{5F_R}{2}+1,
\end{equation}
where $k$ is a positive integer.

Finally, the distance $d$ follows as in the other cases studied, that is, it will be upper bound by the smallest number of triangles in a homologically non-trivial closed hypercycle, that is, it will be the smallest weight among the operators of $C(\G)\setminus\textbf{S}$.

Lemmas \ref{lemasub1} and \ref{lemasub2} are also valid in this case, and their proofs are done in a similar way. From this, it follows that this code satisfies the conditions $(C_1)$ and $(C_3)$. We have already seen that this code satisfies $(C_2)$, and it is given in the Theorem \ref{satisfazasrestricoes} that the graph $\Gamma_h$ satisfies $(H_1),\ldots,(H_5)$. Therefore, this proves the following theorem.

\begin{Theo}
Consider a trivalent, $3$-colorable tessellation $\{2p_1,4,6\}$ with odd $p_1>6$. Applying the construction of $\Gamma_h$ with $F=F_R$ seen in Section $\ref{construcaomeussubcode}$, we obtain topological subsystem codes with parameters
\begin{equation}
[[N_e+3F_R, \chi + \frac{N_f}{2}-\frac{5F_R}{2}+1,  -\chi+2N_v-\frac{N_f+N_e}{2}+\frac{5F_R}{2}, d\leq l]], 
\end{equation}
where $l$ is the number of triangles in a homologically non-trivial hypercycle.
\end{Theo}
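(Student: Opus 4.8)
The plan is to obtain the theorem by combining the facts established in this subsection with the general machinery of Section \ref{construcaomeussubcode}. First I would invoke Theorem \ref{satisfazasrestricoes}: the hypergraph $\Gamma_h$ considered here is exactly the output of the construction of Section \ref{construcaomeussubcode} applied to the tessellation $\{2p_1,4,6\}$ with odd $p_1>6$, so it satisfies $(H_1),\ldots,(H_5)$ and hence defines a subsystem code with loop group $\G_{loop}$ as in (\ref{gloop}) and gauge group $\G=C(\G_{loop})$. So we only need to check $(C_1)$, $(C_2)$, $(C_3)$ and then read off the parameters.

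For $(C_2)$ I would recall the stabilizer generators described above: Lemma \ref{estabilizadoresubsistem} supplies two independent stabilizer generators for every $f\in F_R$ and one independent stabilizer generator for every $f\in F_G$, and each red face contributes one more generator via the hypercycle $\overline{f}_\sigma$. The point to make explicit is that, because there are only finitely many $2p_1$-gons, the inner triangle of each $f'$ can be placed so that every hypercycle $\overline{f}_\sigma$ acquires either no extra vertex or an even number of extra vertices; in the first case $W(\overline{f}_\sigma)$ is given by (\ref{equa3}), and in the second it is still an ordered product of link operators obeying (\ref{regrasindrome}). Hence Theorem \ref{teosindrome} applies to every stabilizer generator of this code, so $(C_2)$ holds.

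For $(C_1)$ and $(C_3)$ I would observe that the proofs of Lemmas \ref{lemasub1} and \ref{lemasub2} go through unchanged in this setting, since they rely only on trivalence, property $(H_5)$, the fact that all rank-$3$ edges are blue, and the existence of the bounding hypercycles around red and green faces, all of which hold for this $\Gamma_h$. Thus homologically trivial closed hypercycles correspond to stabilizers while homologically non-trivial ones give operators of $C(\G)\setminus\textbf{S}$ that do not lie in $\G$ — precisely conditions $(C_1)$ and $(C_3)$. Together with $(C_2)$, Definition \ref{defisubtopolo} then shows the code is topological. It remains to collect the parameters: counting the vertices of $\Gamma_h$ gives $n=N_v+p_1F_R+3F_R=N_e+3F_R$; since $\Gamma_R$ is never tripartite for odd $p_1$ the only relation among the stabilizer generators is (\ref{RE1}), so $s=3F_R+F_G-1$; counting edges, $E_2=4F_G+p_1F_R+3F_R$ and $E_3=p_1F_R+F_R$, yields $\dim\G=E_2+2E_3-1=N_v+N_e+5F_R-1$, whence $\dim\G=2r+s$ together with $4F_G=N_v$, $3p_1F_R=N_e$ and the Euler relation give $r=-\chi+2N_v-\frac{N_f+N_e}{2}+\frac{5F_R}{2}$ and $k=n-r-s=\chi+\frac{N_f}{2}-\frac{5F_R}{2}+1$; and $d\leq l$ follows as for the previous families, a homologically non-trivial hypercycle multiplied by a suitable gauge operator built from its rank-$3$ and rank-$2$ edges producing an element of $C(\G)\setminus\textbf{S}$ of weight equal to its number $l$ of triangles.

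Since essentially every ingredient is imported from earlier results, the only step that needs genuine care is the verification of $(C_2)$: one must check that the finite-choice placement of the inner triangles can be carried out \emph{simultaneously} for all red faces without conflict, and that in the ``even number of extra vertices'' case the resulting operator, though no longer separable color-by-color as in (\ref{equa3}), still admits an ordering of its link factors satisfying the commutation requirement of Theorem \ref{teosindrome}.
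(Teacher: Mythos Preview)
Your proposal is correct and follows essentially the same approach as the paper: invoke Theorem \ref{satisfazasrestricoes} for $(H_1)$--$(H_5)$, use Lemma \ref{estabilizadoresubsistem} together with the hypercycles $\overline{f}_\sigma$ (with the finite-choice placement of inner triangles) for $(C_2)$, note that Lemmas \ref{lemasub1} and \ref{lemasub2} carry over for $(C_1)$ and $(C_3)$, and read off $n$, $s$, $r$, $k$, $d$ from the edge and face counts exactly as in the preceding discussion. Your closing caveat about the simultaneous placement of the inner triangles and the ordering in the ``even extra vertices'' case is well taken; the paper handles this at the same level of rigor you do, appealing only to finiteness of the set of $2p_1$-gons.
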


To conclude, we present in Table \ref{t7subsistem} the parameters $n$, $k$, and $r$, obtained from the hypergraph $\Gamma_h$ originating from the tessellations $\{2p_1,4,6\}$ on a compact orientable surface $\Msup$ with genus $g=2, 3, 4$ and $5$. We also present in each case the quantity $s$ of independent stabilizer generators together with the tessellations $\{2p_1,4,6\}$ used to construct $\Gamma_h$.

\begin{table}[h!]
\centering
\begin{tabular}{|w{c}{2mm}|w{c}{18mm}|w{c}{3mm}|w{c}{28mm}|w{c}{3mm}|w{c}{28mm}| }
\hline	$g$	&	Tessellation	&	$s$	&	$[[n,k,r,d]]$\\
\hline	2	&	$\{14,4,6\}$	&	77	&	$[[288,10,201,d]]$\\
\hline	2	&	$\{18,4,6\}$	&	29	&	$[[120,6,85,d]]$\\
\hline	3	&	$\{14,4,6\}$	&	155	&	$[[576,19,402,d]]$\\
\hline	3	&	$\{18,4,6\}$	&	59	&	$[[240,11,170,d]]$\\
\hline	4	&	$\{14,4,6\}$	&	233	&	$[[864,28,603,d]]$\\
\hline	4	&	$\{18,4,6\}$	&	89	&	$[[360,16,255,d]]$\\
\hline	4	&	$\{30,4,6\}$	&	41	&	$[[192,12,139,d]]$\\
\hline	5	&	$\{14,4,6\}$	&	311	&	$[[1152,37,804,d]]$\\
\hline	5	&	$\{18,4,6\}$	&	119	&	$[[480,21,340,d]]$\\
\hline
\end{tabular}
\caption{Parameters of topological subsystem codes coming from tessellations $\{2p_1,4,6\}$ with odd $p_1>6$ for $g=2, 3, 4$ and $5$.}\label{t7subsistem}
\end{table}

\subsection*{Codes from tessellation $\{p,4,3,4\}$}

Here, we construct topological subsystem codes based on the tessellation $\{p,4,3,4\}$ for odd $p\geq 7$, which is obtained from the tessellation $\{p,3\}$ as follows: Given a tessellation $\{p,3\}$ on a compact surface $\Msup$ of genus $g\geq 2$, take the dual tessellation $\{3,p\}$ and keep the two tessellations $\{p,3\}$ and $\{3,p\}$ overlapping. In this way, we form quadrilaterals with angles $\frac{2\pi}{p}$, $\frac{\pi}{2}$, $\frac{2\pi}{3}$, and $\frac{\pi }{2}$. As each quadrilateral has an incenter, we take the incenter of each of these quadrilaterals. Connecting these incenters, we obtain the tessellation $\{p,4,3,4\}$, as can be seen in the figure \ref{bombin2}-($a$).

We saw the construction of the topological subsystem codes from \cite{bombinsub} in Section \ref{codigosubsistemtopolo}: it begins from a trivalent and $3$-colorable tessellation $\{2p_1,2p_2,2p_3\}$. When this tessellation is regular, that is, when it is of the form $\{p,3\}$, after carrying out the constructions of triangles and squares, the tessellation $\{p,4,3,4\}$ is obtained with even $p$; that is, it is the same as carrying out the process described above, starting from the tessellation $\{p,3\}$ with even $p$.

If $p$ is odd, the tessellation $\{p,3\}$ is not $3$-colorable, so the construction given in \cite{bombinsub} does not apply in this case. That's why we are considering an alternative approach to working with the odd $p$ case.

The process of building this family of topological subsystem codes is very simple and similar to what we did in Section \ref{construcaomeussubcode}. First, we take a tessellation $\{p,3\}$ with odd $p$ and carry out the process described previously, thus obtaining the tessellation $\{p,4,3,4\}$. As $p$ is odd, considering $\{p,4,3,4\}$ being the resulting hypergraph, it does not satisfy the restriction $(H_5)$, therefore, it does not satisfy the commutation rule, and, thus, it does not support the construction of subsystem codes. To fix this, we place a triangle inside each $p$-gon. These triangles cannot be placed randomly, but we will explain how to place them later. Finally, we color the triangles blue and alternate the other edges between red and green.

We denote the resulting hypergraph by $\Gamma_h$ and use it to construct the topological subsystem code. To prove that our construction of $\Gamma_h$ can be used to construct subsystem codes using the construction of \cite{terhal}, we will prove that $\Gamma_h$ satisfies the constraints $(H_1),\ldots,( H_5)$.

\begin{Theo}\label{ultimoteo}
The hypergraph $\Gamma_h$ obtained from the tessellation $\{p,4,3,4\}$ with $p\geq 7$ odd satisfies the constraints $(H_1),\ldots,(H_5)$ and, therefore, it gives rise to subsystem codes whose group formed by loop operators is given by
\begin{equation}
\G_{loop} = \{W(M);\ M\subseteq E\ \ \textrm{é um hiperciclo fechado}\},
\end{equation}
where $E$ is the set given by the edges of rank-$2$ and rank-$3$, and the gauge group $\G$ is given by $\G = C(\G_{loop})$.
\end{Theo}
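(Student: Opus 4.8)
The plan is to mirror the proof of Theorem~\ref{satisfazasrestricoes}: verify the five conditions $(H_1),\ldots,(H_5)$ for the hypergraph $\Gamma_h$ built from $\{p,4,3,4\}$ with $p\geq 7$ odd, and then appeal to the construction of \cite{terhal} to read off $\G_{loop}$ and $\G=C(\G_{loop})$. For $(H_1)$ I would note that $\{p,4,3,4\}$, read as a hypergraph in which each triangular face is a rank-$3$ hyperedge and every other edge is rank-$2$, has only rank-$2$ and rank-$3$ edges; inserting inside each $p$-gon a triangle whose three vertices lie on three distinct edges of the $p$-gon adds one rank-$3$ hyperedge and subdivides those three rank-$2$ edges into six rank-$2$ edges, and nothing of any other rank ever appears. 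For $(H_2)$ I would argue that although a vertex of $\{p,4,3,4\}$ has valence four in the tessellation, the two edges it shares with its incident triangular face are absorbed into a single rank-$3$ hyperedge of $\Gamma_h$, leaving exactly three incident edges; each of the three new vertices created inside a $p$-gon lies on a subdivided edge (two rank-$2$ edges) and is a vertex of the inserted triangle (one rank-$3$ hyperedge), again three incident edges, and subdividing an edge does not alter the valence of its old endpoints.

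Conditions $(H_3)$ and $(H_4)$ I would treat together: two rank-$2$ edges meet in at most one vertex in the tessellation, and this persists under subdivision; the rank-$3$ hyperedges are of two kinds, the original triangular faces of $\{p,4,3,4\}$, which are pairwise disjoint because each pair is separated by squares, and the inserted triangles, each confined to the interior of a distinct $p$-gon and hence disjoint from every other triangle, so any two rank-$3$ hyperedges are disjoint, giving $(H_4)$, and any two edges of $\Gamma_h$ meet in at most one vertex, giving $(H_3)$. For $(H_5)$ I would observe that, once the triangular faces are made into hyperedges, every vertex of $\Gamma_h$ has exactly two rank-$2$ edges, and since every vertex of $\{p,4,3,4\}$ lies on a single $p$-gon, the rank-$2$ subgraph is the disjoint union of the $p$-gon boundary cycles; for $p$ odd these cycles have odd length $p$ and admit no proper alternating $2$-colouring, but after the insertion each such cycle has even length $p+3$, so colouring every triangular hyperedge (old and new) blue and $2$-colouring each even cycle alternately with red and green -- which forces the two halves of a subdivided edge to receive distinct colours -- yields a proper $3$-colouring of $\Gamma_h$. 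This is precisely the hypothesis under which the commutation rule of Theorem~\ref{teoregradecomutacao} holds for all edge operators, so $(H_5)$ follows; with $(H_1),\ldots,(H_5)$ in hand the construction of \cite{terhal} applies verbatim, giving a subsystem code with $\G_{loop}=\{W(M);\ M\subseteq E\ \text{a closed hypercycle}\}$ and $\G=C(\G_{loop})$.

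I expect the proof to be essentially routine, the only point worth a moment's care being $(H_5)$: one must see that the obstruction to a proper $3$-edge-colouring is located entirely in the $p$-gons, which is exactly what the inserted triangle repairs, and this rests on the observations above that the rank-$2$ subgraph decomposes into the (now even) $p$-gon cycles and that these cycles are pairwise vertex-disjoint, so each may be recoloured independently and no global compatibility problem arises. Consequently any choice of three distinct edges per $p$-gon for the inserted triangle is admissible for the purposes of this theorem; the finer restriction on which edges to use, alluded to in the construction, is needed only later to keep the stabilizer generators compatible with the syndrome condition $(C_2)$ of Theorem~\ref{teosindrome}, and does not enter here.
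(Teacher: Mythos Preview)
Your proposal is correct and follows essentially the same route as the paper's proof: verifying $(H_1)$--$(H_5)$ one by one and then invoking the construction of \cite{terhal}. Your treatment is in fact more detailed than the paper's, particularly for $(H_2)$ (explaining how the valence drops from four to three when the triangular face becomes a hyperedge) and for $(H_5)$ (noting that the rank-$2$ subgraph decomposes into vertex-disjoint $p$-gon cycles so that the red/green alternation can be carried out independently on each); the paper simply asserts these points without elaboration.
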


\begin{proof}
Considering the triangles of tessellation $\{p,4,3,4\}$ as being rank-$3$ edges, we will only have rank-$2$ and rank-$3$ edges. Likewise, when we create the triangle inside the $p$-gon from points taken on three of its edges, we are producing one edge of rank-$3$, and three of rank-$2$. Therefore, $\Gamma_h$ satisfies $(H_1)$.

When we consider triangles to be rank-$3$ edges, it follows that every vertex of $\{p,4,3,4\}$ will be trivalent. Likewise, the three new vertices will be trivalent, as they divide an edge into two and have the new triangle as the third edge. Thus, $\Gamma_h$ satisfies $(H_2)$.

It follows from the construction of the tessellation $\{p,4,3,4\}$ that the constraint $(H_3)$ is satisfied. Even introducing the triangle inside each $p$-gon does not change this condition. 
 
The constraint $(H_4)$ is also satisfied since the rank-$3$ edges of $\{p,4,3,4\}$ do not intersect, and when we create the triangle inside the $p$-gon, we take the points on the edges so that there is no intersection with the other triangles. Therefore, the rank-$3$ edges of $\Gamma_h$ are two by two disjoint.

Finally, with the introduction of the triangle inside each $p$-gon, the restriction $(H_5)$ is satisfied. We just color all the triangles blue and alternate each $p+3$ edge between red and green. Therefore, $\Gamma_h$ gives rise to subsystem codes as constructed in \cite{terhal}.
\end{proof}

The operators $\overline{K}_{e'}$ will be given in the same way as in the previous cases. When we consider the ordinary graph $\overline{\Gamma}_h$ given by $\Gamma_h$, we will have for $e' =(u,v)\in\overline{E}$ that $\overline{K}_{e'} = Z_uZ_v$ if $e'\in\overline{E}_B$, $\overline{K} _{e'} = X_uX_v$ if $e'\in\overline{E}_R$, and $\overline{K}_{e'} = Y_uY_v$ if $e'\in\overline{E}_G$, where $\overline{E}_B$, $\overline{E}_R$ and $\overline{E}_G$ are the sets with all blue, red and green edges, respectively, of the graph $\overline{\Gamma}_h$.

In the case of tessellation $\{p,4,3,4\}$ with even $p$, there are three stabilizers for each $p$-gon, where only two are independent. One is formed by the border of the $p$-gon, and the other by the triangles involving the $p$-gon, and the edges necessary to close the hypercycle. As in our case $p$ is odd, when we introduce the triangle inside the $p$-gon, we create three more edges, making the $p$-gon, now with three more edges, a stabilizer satisfying the Theorem \ref{teosindrome}. However, when we introduce the triangle, this will increase a vertex in the hypercycles involving the triangles of the $p$-gons ``adjacent'' to the one where the triangle was introduced. For these hypercycles to provide stabilizers, we require that these triangles be fixed within the $p$-gons in such a way that an even number of vertices increases for each hypercycle $f_{\sigma_2}$, or no new vertices increase. In this way, we obtain three stabilizer generators (two independent) per $p$-gon.

\begin{lema} 
Each $p$-gon of our construction on the tessellation $\{p,4,3,4\}$ provides three hypercycles (two independent) and, consequently, three stabilizer generators (two independent).
\end{lema}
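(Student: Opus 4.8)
The plan is to repeat the argument of Lemma \ref{estabilizadoresubsistem} almost verbatim, adapting it to the fact that here the extra rank-$3$ edge is placed directly inside the $p$-gon rather than inside an auxiliary face $f'$. Fix a $p$-gon $f$ of $\{p,4,3,4\}$ together with the triangle inserted in it; after that insertion three of the boundary edges of $f$ have been subdivided, so the boundary of $f$ is now a cycle of $p+3$ rank-$2$ edges, and since $p+3$ is even these edges really can be coloured alternately red and green, as the construction prescribes. I would then attach to $f$ three hypercycles $f_{\sigma_1}$, $f_{\sigma_2}$, $f_{\sigma_3}$, check that the three loop operators $W(f_{\sigma_i})$ lie in $\textbf{S}=\G_{loop}\cap C(\G_{loop})$, and note that $f_{\sigma_3}$ is the modulo-$2$ sum of $f_{\sigma_1}$ and $f_{\sigma_2}$, so that only two of the three hypercycles, and hence only two of the three stabilizer generators, are independent.

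Concretely, let $f_{\sigma_1}$ be the hypercycle consisting of the $p+3$ boundary edges of $f$. It contains no rank-$3$ edge, so by Corollary \ref{coroloperadoresloops} the operator $W(f_{\sigma_1})$ commutes with every loop operator, hence $W(f_{\sigma_1})\in\G_{loop}\cap C(\G_{loop})=\textbf{S}$; explicitly it has the form
\[
W(f_{\sigma_1})=\prod_{e'\in\partial(f)\cap\overline{E}_R}\overline{K}_{e'}\,\prod_{e'\in\partial(f)\cap\overline{E}_G}\overline{K}_{e'}.
\]
Let $f_{\sigma_2}$ be the hypercycle built from the rank-$3$ triangles of $\{p,4,3,4\}$ that surround $f$, together with the inserted triangle and the rank-$2$ red edges needed to close the cycle, in complete analogy with $f_{1\sigma_2}$ in Lemma \ref{estabilizadoresubsistem}. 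One checks that this is a genuine closed hypercycle and that every closed hypercycle shares an even number of triangles with it, so that by Corollary \ref{coroloperadoresloops} $W(f_{\sigma_2})\in C(\G_{loop})$, and therefore $W(f_{\sigma_2})\in\textbf{S}$. Finally take $f_{\sigma_3}$ to be the modulo-$2$ sum of $f_{\sigma_1}$ and $f_{\sigma_2}$, so that $W(f_{\sigma_3})=W(f_{\sigma_1})W(f_{\sigma_2})\in\textbf{S}$. Since $W(f_{\sigma_3})$ is by construction the product of the other two, while $W(f_{\sigma_1})$ and $W(f_{\sigma_2})$ are distinct non-identity operators — for instance $f_{\sigma_1}$ uses no rank-$3$ edge whereas $f_{\sigma_2}$ does — these three loop operators give three stabilizer generators of which exactly two are independent, which is the assertion of the lemma.

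The main obstacle is the combinatorial bookkeeping hidden in the second step: one has to verify that the ring of rank-$3$ triangles around $f$, once the interior triangle has been inserted and the three boundary edges of $f$ subdivided, can still be closed up into a hypercycle using only rank-$2$ red edges, and that these triangles are arranged in the ``paired'' way that forces every closed hypercycle to meet them in an even number of triangles. This is precisely where the stated restriction on the placement of the interior triangles is used, namely that the triangle in $f$ be inserted so that it adds an even number of new vertices (or none) to each hypercycle $g_{\sigma_2}$ of a $p$-gon $g$ adjacent to $f$, which can always be arranged because there are only finitely many $p$-gons. Once this is settled, the fact that $W(f_{\sigma_1})$ and $W(f_{\sigma_2})$ can also be written as products of mutually commuting blocks of link operators, and hence satisfy Theorem \ref{teosindrome} (needed for condition $(C_2)$), follows from the colour-separated product formulas exactly as in Lemma \ref{lemac2}.
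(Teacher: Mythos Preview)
Your proposal is correct and follows essentially the same route as the paper's own proof: define $f_{\sigma_1}$ as the $p+3$ rank-$2$ boundary edges of the modified $p$-gon, $f_{\sigma_2}$ as the surrounding rank-$3$ triangles together with the inserted interior triangle and the rank-$2$ edges needed to close, and $f_{\sigma_3}$ as their modulo-$2$ sum, then invoke Corollary~\ref{coroloperadoresloops} and the earlier Lemma~\ref{estabilizadoresubsistem} pattern. The paper's proof is in fact terser than yours; it simply points back to Section~\ref{construcaomeussubcode} and to equations~(\ref{hiperciclo1}) and~(\ref{hiperciclo2}), and explicitly distinguishes the two sub-cases for $f_{\sigma_2}$ (no new vertices versus an even number of new vertices on the surrounding hypercycle), noting that in the latter case one loses the colour-separated product formula but still obtains a stabilizer satisfying Theorem~\ref{teosindrome}, exactly as in the $\{2p_1,2p_2,4\}$ discussion --- a point you also make in your final paragraph.
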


\begin{proof}
The proof of this result follows in a similar way to that seen in Section \ref{construcaomeussubcode}. The first hypercycle $f_{\sigma_1}$ is given by the $p+3$ edges of rank-$2$ created in $p$-gon, and the loop operator $W(f_{\sigma_1})$ is given by (\ref{hiperciclo1}). The second hypercycle $f_{\sigma_2}$ is given by the rank-$3$ edges around the $p$-gon, together with the rank-$3$ edge created inside the $p$-gon and the edges necessary to close the hypercycle. If no new edges have been increased in this hypercycle, the loop operator $W(f_{\sigma_2})$ will be given as (\ref{hiperciclo2}). If an even number of new edges have increased in this hypercycle, we will not be able to write the loop operator as in (\ref{hiperciclo2}), separating by colors, but it will continue as we saw in the case $\{2p_1,2p_2,4\}$, that $W(f_{\sigma_2})$ will be a stabilizer. The third and final hypercycle is given by the modulo $2$ sum of $f_{\sigma_1}$ and $f_{\sigma_2}$. The stabilizer will be given by the product of $W(f_{\sigma_1})$ by $W(f_{\sigma_2})$.
\end{proof}

Then, it follows that these loop operators satisfy Theorem \ref{teosindrome}. Therefore, they satisfy the condition $(C_2)$ of the topological subsystem code definition.

Thus, we will have three stabilizer generators, where two are independent for each $p$-gon. In the case of $p$ even, we had two independence relations, which are given in (\ref{bombinrelacao}). In the case of odd $p$, as we are not starting from a $3$-colorable tessellation, and by introducing the triangle inside each $p$-gon, there will be no independence relationship. Therefore, the quantity $s$ of independent stabilizing generators will be twice the quantity of $p$-gons.

To determine the value of $s$ and the other code parameters, we need the number of faces, vertices, and edges of the tessellation $\{p,3\}$, which are given in (\ref{fevcolors}), or it is,
\begin{eqnarray*}
n_f=\frac{12(g-1)}{p-6},\ \ n_e=\frac{6p(g-1)}{p-6}\ \ \textrm{e}\ \ n_v = \frac{4p(g-1)}{p-6}.
\end{eqnarray*}

Consider $F_{\Pol}$, $F_{\Tri}$, and $F_{\Q}$ the amount of $p$-gons, triangles, and squares of $\{p,4,3,4\}$, respectively. It follows from the construction process of $\{p,4,3,4\}$ that the number of $p$-gons is equal to the number of tessellation faces of $\{p,3\}$, the number of triangles is equal to the number of tessellation vertices of $\{p,3\}$, and the number of squares is equal to the number of tessellation edges of $\{p,3\}$, that is,
\begin{eqnarray*}
F_{\Pol}=\frac{12(g-1)}{p-6},\ \ F_{\Tri}=\frac{4p(g-1)}{p-6}\ \ \textbf{e}\ \ F_{\Q}=\frac{6p(g-1)}{p-6}.
\end{eqnarray*}

So, we have
\begin{equation}
s=2F_{\Pol}=\frac{24(g-1)}{p-6}.
\end{equation}

Introducing the triangle inside each $p$-gon, we obtain 3 more vertices for each $p$-gon. Thus, the quantity $n$ of physical qubits will be
\begin{equation}
n=(p+3)F_{\Pol}.
\end{equation}

To determine the quantity of gauge qubits, $r$, we will proceed as in the previous cases. To do this, we need the quantities of rank-$2$ and rank-$3$ edges of $\Gamma_h$, which are given by $E_2=(p+3)F_{\Pol}$ and $E_3=F_{\Tri} +F_{\Pol}$, respectively. Therefore, $\dim\G=E_2+2E_3=(p+3)F_{\Pol}+ 2F_{\Tri}+2F_{\Pol}$. As $s=2F_{\Pol}$ and $\dim \G=2r+s$, it follows that
\begin{eqnarray*}
2r+2F_{\Pol} & = & (p+3)F_{\Pol} + 2F_{\Tri}+2F_{\Pol} \Longleftrightarrow \\
    2r  & = & (p+3)F_{\Pol} + 2F_{\Tri}\\
        & = & 5F_{\Tri}+3F_{\Pol},       
\end{eqnarray*}
since $pF_{\Pol}=3F_{\Tri}$. Thus,
\begin{equation}
r = \frac{5F_{\Tri}+3F_{\Pol}}{2},
\end{equation}
where $r$ is a positive integer.

As we already know the values of $n$, $r$, and $s$, it is easy to determine the value of $k$, as it is given by $k = n - r - s$. So, we have to
\begin{eqnarray}
k & = & (p+3)F_{\Pol}-\frac{5F_{\Tri}+3F_{\Pol}}{2}-2F_{\Pol}\nonumber \\
  & = & \frac{F_{\Tri}-F_{\Pol}}{2},
\end{eqnarray}
where $k$ is a positive integer.

Finally, the distance $d$ follows as in the other cases, that is, it will be upper bound by the smallest number of triangles in a homologically non-trivial closed hypercycle, that is, it will be the smallest weight among the operators of $C(\G)\setminus\textbf{S}$.

Lemmas \ref{lemasub1} and \ref{lemasub2} are also valid here, and their proofs are done in a similar way. From this, it follows that this code satisfies the conditions $(C_1)$ and $(C_3)$. We have already seen that this code satisfies $(C_2)$, and it was shown in Theorem \ref{ultimoteo} that the graph $\Gamma_h$ satisfies $(H_1),\ldots,(H_5)$. Therefore, this proves the following theorem.

\begin{Theo}
Consider a tessellation $\{p,3\}$ with odd $p$ embedded in a compact orientable surface $\Msup$ of genus $g\geq 2$. By carrying out the process to obtain the tessellation $\{p,4,3,4\}$ and carrying out the construction described previously, we obtain a topological subsystem code with parameters
\begin{equation}
[[(p+3)F_{\Pol},\frac{F_{\Tri}-F_{\Pol}}{2}, \frac{5F_{\Tri}+3F_{\Pol}}{2},d\leq l]],
\end{equation}
where $F_{\Pol}$ and $F_{\Tri}$ are the amount of $p$-gons and triangles of the tessellation $\{p,4,3,4\}$, respectively, and $l$ is the number of triangles in a homologously non-trivial hypercycle.
\end{Theo}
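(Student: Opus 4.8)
The plan is to assemble the structural facts already established for $\Gamma_h$ and then read the parameters off from the face, edge and vertex counts of $\{p,3\}$. First I would invoke Theorem~\ref{ultimoteo}, which guarantees that the hypergraph $\Gamma_h$ obtained from $\{p,4,3,4\}$ by inserting one rank-$3$ triangle inside each $p$-gon satisfies $(H_1),\ldots,(H_5)$; by the construction of \cite{terhal} this already yields a subsystem code whose loop group is $\G_{loop}$ and whose gauge group is $\G=C(\G_{loop})$, with stabilizer $\textbf{S}=\G_{loop}\cap C(\G_{loop})$.

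Next I would verify the three topological conditions. For $(C_2)$ I would use the preceding lemma: each $p$-gon contributes the hypercycle $f_{\sigma_1}$ (its $p+3$ rank-$2$ boundary edges), the hypercycle $f_{\sigma_2}$ (the surrounding rank-$3$ edges together with the inner triangle and the rank-$2$ edges needed to close it) and their modulo-$2$ sum $f_{\sigma_3}$; the corresponding loop operators are expressed, as in (\ref{hiperciclo1}) and (\ref{hiperciclo2}), as ordered products of link operators obeying (\ref{regrasindrome}), so Theorem~\ref{teosindrome} applies and syndrome extraction is $2$-local. The one genuinely delicate point, which I expect to be the main obstacle, is to check that the inner triangles can be placed so that every hypercycle $f_{\sigma_2}$ of a $p$-gon adjacent to a modified $p$-gon acquires an even number of new vertices (or none); this is the same finiteness/parity argument used for the family $\{2p_1,2p_2,4\}$, and I would argue it transfers verbatim because the number of $p$-gons is finite. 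For $(C_1)$ and $(C_3)$ I would observe that Lemmas~\ref{lemasub1} and \ref{lemasub2} apply with the same proofs: every vertex of $\Gamma_h$ is trivalent with an incident blue rank-$3$ edge, so a homologically non-trivial hypercycle must contain rank-$3$ edges, and its loop operator cannot lie in $\G$; hence homologically trivial loops are stabilizers (so $\textbf{S}$ has spatially local generators) and homologically non-trivial loops give bare logical operators, which is exactly $(C_1)$ and $(C_3)$.

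Finally I would compute the parameters. From the process building $\{p,4,3,4\}$ one has $F_{\Pol}=n_f$, $F_{\Tri}=n_v$, $F_{\Q}=n_e$ for $\{p,3\}$, so the number of qubits is $n=(p+3)F_{\Pol}$ (the original vertices of $\{p,4,3,4\}$ plus the $3F_{\Pol}$ new ones). Since $\{p,3\}$ with $p$ odd is not $3$-colourable and inserting the inner triangles removes the remaining stabilizer dependencies, there are no independence relations among the $p$-gon stabilizers, whence $s=2F_{\Pol}$. Counting $E_2=(p+3)F_{\Pol}$ and $E_3=F_{\Tri}+F_{\Pol}$ and using $pF_{\Pol}=3F_{\Tri}$ gives $\dim\G=E_2+2E_3=(p+3)F_{\Pol}+2F_{\Tri}+2F_{\Pol}$; then $\dim\G=2r+s$ yields $r=\tfrac{5F_{\Tri}+3F_{\Pol}}{2}$, and $k=n-r-s=\tfrac{F_{\Tri}-F_{\Pol}}{2}$. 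The distance bound $d\le l$ follows exactly as in the earlier families: given a homologically non-trivial closed hypercycle with $l$ triangles, multiplying its loop operator by a suitable product of link operators of $\G$ produces a dressed logical operator of weight $l$, so the minimum weight over $C(\G)\setminus\textbf{S}$ is at most $l$. Collecting these four quantities gives the asserted parameters $[[(p+3)F_{\Pol},\tfrac{F_{\Tri}-F_{\Pol}}{2},\tfrac{5F_{\Tri}+3F_{\Pol}}{2},d\le l]]$, and since $(C_1)$, $(C_2)$, $(C_3)$ all hold, the code is a topological subsystem code.
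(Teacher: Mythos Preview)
Your proposal is correct and follows essentially the same route as the paper: you invoke Theorem~\ref{ultimoteo} for $(H_1)$--$(H_5)$, use the preceding lemma on the three hypercycles per $p$-gon together with Theorem~\ref{teosindrome} for $(C_2)$, appeal to Lemmas~\ref{lemasub1} and \ref{lemasub2} for $(C_1)$ and $(C_3)$, and then carry out the same edge/vertex counts (including the observation that there are no independence relations, so $s=2F_{\Pol}$ and $\dim\G=E_2+2E_3$) to obtain $n$, $k$, $r$ and the bound $d\le l$. Your identification of the triangle-placement parity issue as the only delicate point also matches the paper's treatment.
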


To conclude, we present in Table \ref{t8subsistem} the parameters $n$, $k$, and $r$, obtained from the hypergraph $\Gamma_h$ originating from the tessellations $\{p,4,3 ,4\}$ on a compact orientable surface $\Msup$ with genus $g=2$ to $g=7$. We also present in each case the quantity $s$ of independent stabilizing generators together with the tessellations $\{p,4,3,4\}$ used to construct $\Gamma_h$.

\begin{table}[h!]
\centering
\begin{multicols}{2}
\begin{tabular}{|w{c}{2mm}|w{c}{18mm}|w{c}{3mm}|w{c}{28mm}|}
\hline	$g$	&	Tessellation	    &	$s$	&	$[[n,k,r,d]]$\\
\hline 2 & $\{7,4,3,4\}$ & 24 & $[[120,8,88,d]]$\\
\hline 2 & $\{9,4,3,4\}$ &	8 &	$[[48,4,36,d]]$\\
\hline 3 & $\{7,4,3,4\}$ & 48 &	$[[240,16,176,d]]$\\
\hline 3 & $\{9,4,3,4\}$ & 16 &	$[[96,8,72,d]]$\\
\hline 4 & $\{7,4,3,4\}$ & 72 &	$[[360,24,264,d]]$\\
\hline 4 & $\{9,4,3,4\}$ & 24 &	$[[144,12,108,d]]$\\
\hline 4 & $\{15,4,3,4\}$&	8 &	$[[72,8,56,d]]$\\
\hline 5 & $\{7,4,3,4\}$ & 96 &	$[[480,32,352,d]]$\\
\hline
\end{tabular}
\begin{tabular}{|w{c}{2mm}|w{c}{18mm}|w{c}{3mm}|w{c}{28mm}|}
\hline	$g$	&	Tesselação	    &	$s$	&	$[[n,k,r,d]]$\\
\hline 5 & $\{9,4,3,4\}$ & 32 &	$[[192,16,144,d]]$\\
\hline 6 & $\{7,4,3,4\}$ &120 &	$[[600,40,440,d]]$\\
\hline 6 & $\{9,4,3,4\}$ & 40 &	$[[240,20,180,d]]$\\
\hline 6 & $\{11,4,3,4\}$& 24 &	$[[168,16,128,d]]$\\
\hline 6 & $\{21,4,3,4\}$&	8 &	$[[96,12,76,d]]$\\
\hline 7 & $\{7,4,3,4\}$ & 144&	$[[720,48,528,d]]$\\
\hline 7 & $\{9,4,3,4\}$ & 48 &	$[[288,24,216,d]]$\\
\hline 7 & $\{15,4,3,4\}$& 16 &	$[[144,16,112,d]]$\\
\hline
\end{tabular}
\end{multicols}
\caption{Parameters of topological subsystem codes coming from tessellation $\{p,4,3,4\}$ with $p\geq 7$ odd for $g=2$ to $g=7$.}\label{t8subsistem}
\end{table}

\addcontentsline{toc}{chapter}{Referências}

\end{document}